\documentclass[lettersize,journal]{IEEEtran}
\usepackage{amsmath,amsfonts}
\usepackage{algorithmic}
\usepackage{algorithm}
\usepackage{array}
\usepackage[caption=true,font=normalsize,labelfont=sf,textfont=sf]{subfig} % caption was set to false initially
\usepackage{textcomp}
\usepackage{stfloats}
\usepackage{url}
\usepackage{verbatim}
\usepackage{graphicx}
\usepackage{cite}

\usepackage{amsthm} % for proofs

\newtheorem{theorem}{Theorem}
\newtheorem{definition}{Definition}
\newtheorem{corollary}{Corollary}
\newtheorem{example}{Example}
\newtheorem{lemma}{Lemma}

\newcommand\R{\mathbb{R}} % Real numbers
\newcommand\Rplus{\mathbb{R}^{+}} % Positive real numbers
\newcommand\F{\mathcal{F}} % Function sets
\newcommand\T{\mathcal{T}} % Tandems of servers
\newcommand\server{\mathcal{S}} % Server: s, \mathcal{S}
\newcommand\flow{f} % Flow: f, f
\newcommand{\func}{g} % function: f, g, h
\newcommand{\funcother}{h} % function: f, g, h
\newcommand{\hdev}{hDev} % For horizontal deviation: h, hDev, h_dev
\newcommand{\vdev}{vDev} % For vertical deviation: v, vDev, v_dev
\newcommand{\1}[1]{\mathbf{1}_{\{#1\}}} % For the indicator or test function: \mathbf{1}, \mathbb{1}
\newcommand{\curveclass}{MSLC} % Name of the curve class.

\begin{document}

\title{LUDB++: Enabling LUDB for the Analysis of Shaped Feedforward FIFO Networks using Network Calculus}

\author{Alexander Scheffler, \\Faculty of Mathematics and Computer Science, \\University of Hagen, Germany{} 
%\thanks{Manuscript received April 19, 2021; revised August 16, 2021.}}
}

% The paper headers
%\markboth{Journal of \LaTeX\ Class Files,~Vol.~14, No.~8, August~2021}%
%{Shell \MakeLowercase{\textit{et al.}}: A Sample Article Using IEEEtran.cls for IEEE Journals}

%\IEEEpubid{0000--0000/00\$00.00~\copyright~2021 IEEE}
% Remember, if you use this you must call \IEEEpubidadjcol in the second
% column for its text to clear the IEEEpubid mark.

\maketitle

\begin{abstract}
% 150 - 250 words: 184
This paper discusses how latency guarantees for non-cyclic (feedforward) First-In-First-Out (FIFO) networks with shapers can be computed within the Network Calculus (NC) framework.
Shapers are methods implemented in software or hardware and may reside inside the network and at the endpoint which constrain the rate and maximum packet sizes for the transmission of specific data streams (flows) or groups thereof.
Shaping can improve latencies and is an important aspect of Time-Sensitive Networking (TSN).
Several methods in NC exist to analyze FIFO networks.
Among them is the Least Upper Delay Bound (LUDB) methodology.
So far, LUDB does not incorporate shaping assumptions into its analysis.
This paper addresses this gap resulting in the new methodology called LUDB++.
The evaluation on a set of different line topologies and a tree topology with a total of 130 configurations shows that LUDB++ delivers more accurate latency bounds compared to LUDB.
Moreover, the Exponential Linear Program (ELP) method, which considers FIFO and shaping inside the network, yields the most accurate bounds to this date.
ELP is superseded by LUDB++ for most of cases by a margin of up to 9.13$\%$.
\end{abstract}

\begin{IEEEkeywords}
Network Calculus, Delay Bounds, Mathematical Optimization.
\end{IEEEkeywords}

\section{Introduction}
\IEEEPARstart{R}{eal-time} systems inclusive of the automotive industry, avionics and industrial automation require deterministic guarantees on timing behavior.
Missing a response within a certain timeframe can often be critical in such systems, e.g., a delayed triggering of a brake after pressing the respective pedal.
Ethernet as communication backbone for such systems has become increasingly popular in the last years.
It is low-cost and an enhancement of the standard called TSN \cite{ieee802.1Q} offers reliable network communication with deterministic delays.
TSN also comes with traffic shaping which limits the rate and maximum packet sizes certain data flows can transmit in the network.
This method can reduce maximum delays inside the network, especially for high-priority flows.
NC \cite{cruz2002calculus1} \cite{cruz2002calculus2} has been proven useful for the formal derivation of delays. 
This paper assumes that FIFO scheduling -- which is widespread and easy to implement -- alongside shaping inside the network and at the endpoint is at place.
Limited work in NC exist that studies both aspects in conjunction. 
One approach is ELP \cite{bouillard2022trade} which constructs one global Linear Program (LP) as upper bound of the Mixed Integer Linear Program (MILP) discussed in \cite{bouillard2014exact} by removing the integer variables of the latter.
Different to the MILP of \cite{bouillard2014exact}, ELP also includes line shaping constraints and thus can be seen as the most accurate analysis to this date for our studied network model.
Another stream of research in NC for FIFO is LUDB \cite{lenzini2008methodology} \cite{scheffler21qest}.
It is a modular analysis where tandems (servers in a row) are analyzed separately and where the global solution consists of stitching the individual solutions together.
Its complexity mainly stems from (\romannumeral 1) dividing up the network into tandems and (\romannumeral 2) solving each tandem that occurred during the analysis by means of solving a set of LPs.
So far, LUDB is restricted to simple, so-called token bucket, arrival curves and pseudo-affine service curves \cite{lenzini2008methodology} where a rate-latency curve is a special form of it.
LUDB does not take shaping into account -- neither at the endpoint nor inside the network.
In this paper, we propose LUDB++, an extension of LUDB that considers shaping assumptions.
For this, step (\romannumeral 2) had to be completely revised as the curves during the analysis got significantly more complex.
Our evaluation on different tandems and a tree topology reveals that LUDB++ delivers more accurate bounds than ELP for almost all cases.
As the modeling complexity increased compared to LUDB, so did its runtime. 
Nonetheless, we were able to carry out our evaluation with topologies where the tandems during the analysis were each spanning up to 8 servers. 
Additionally, we provide an open-source C++ implementation of LUDB++.  \\ 
The rest of this paper is organized as follows: 
Section \ref{sec:nc_background} gives background on NC and states the system model for this study.
In Section \ref{sec:related_work}, we report the related work. 
Section \ref{sec:ludb_plus_plus} presents the LUDB++ methodology including a comprehensive example after which it is evaluated on different network configurations in Section \ref{sec:eval}.
The paper ends with concluding remarks in Section \ref{sec:conclusion}.

\section{Network Calculus Background and System Model}\label{sec:nc_background}
An overview on NC performance analysis can be found in \cite{bouillard2018deterministic} \cite{le2001network}.
\\
\begin{definition}[Cumulative Data Functions]\label{def:fun}
Let function $A$ describe the cumulative input data arrival of a flow crossing a server $\server$.
Then $A^{\prime}$ describes its cumulative output after $\server$.
These functions are in the set $\F_0 := \{ f : \R_\infty \rightarrow \Rplus_\infty  | f(0) = 0 $ $\forall s \leq t: f(t) \geq f(s) \}$ with $\R_\infty = \R\cup\{\infty\}$ and $\Rplus_\infty = \Rplus\cup\{\infty\}$, respectively. 
\end{definition}

\begin{definition}[Arrival Curve]\label{def:ac}
Let $A$ be the cumulative input function of a flow $\flow$ at server $\server$.
Then we call $\alpha \in \F_0$ an arrival curve of $f$ at $\server$ if
\[
\mbox{$\forall \, 0 \leq d \leq t: A(t) - A(t-d) \leq \alpha(d)$.}
\]
\end{definition}

\begin{definition}[Service Curve]\label{def:sc}
Let $A$ be an input to a server $\server$ and $A^{\prime}$ be the respective output function.
Then we say that $\beta \in \F_0$ is a service curve for $\server$ if
\[
\mbox{$\forall t \geq 0 : A^{\prime} (t) \geq \inf\limits_{0 \leq s \leq t} \{A(t-s) + \beta(s) \} =: A \otimes \beta (t)$.}
\]
\end{definition}

\begin{definition}[NC Operations]
\label{def:nc:min_plus_operations}
The (min,plus)-algebraic aggregation, convolution and deconvolution
of the functions $\func,\funcother\in\F_0$ are defined as
\begin{eqnarray}
\text{aggregation:} &  & \hspace{-6.5mm}\left(\func+\funcother\right)\left(d\right) = \func\left(d\right)+\funcother\left(d\right)\!,\\
\text{convolution:} &  & \hspace{-6.5mm}\left(\func\otimes \funcother\right)(d) = \hspace{-1.5mm}{\displaystyle \inf_{0\leq u\leq d}}\hspace{-0.75mm}\left\{ \func(d-u)+\funcother(u)\right\} \hspace{-0.75mm},\\
 \hspace{-5.5mm}\text{deconvolution:} &  & \hspace{-6.5mm}\left(\func \oslash \funcother\right)(d) = \sup_{u\geq0}\left\{ \func(d+u)-\funcother(u)\right\}\!.
\end{eqnarray}
\end{definition}

\begin{theorem}[Performance Bounds]
\label{thm:Basic_Bounds}\label{def:deconvolution}
Consider a server $\server$ that offers a service curve $\beta$.  
Assume flow $\flow$ has arrival curve $\alpha$. 
Then we obtain the following bounds:
\begin{eqnarray*}
	\text{Output Bound: } \alpha^{\prime} (t) = \alpha \oslash \beta (t) := \sup\limits_{ u \geq 0 } \{ \alpha(t + u) - \beta(u) \} \\
		\text{Delay Bound: } \hdev(\alpha, \beta) = \inf\{d \geq 0: (\alpha \oslash \beta) (-d) \leq 0  \} \\
		\text{Backlog Bound: } \vdev(\alpha, \beta) = \sup\limits_{ u \geq 0 } \{ \alpha(u) - \beta(u) \} = \alpha \oslash \beta (0)
\end{eqnarray*}
where $\alpha^{\prime}$ is a bound on $A^{\prime}$, the output from server $\server$ (see Definition \ref{def:sc}), and the horizontal (vertical) deviation $\hdev$ ($\vdev$) between arrival curve and service curve bounds the delay (backlog) experienced by $f$ at $\server$.
\end{theorem}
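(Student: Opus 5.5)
The plan is to derive all three bounds from one pointwise inequality that ties the input $A$ to the output $A^{\prime}$. Fix $t \geq 0$. Definition \ref{def:sc} says there is an $s^\ast \in [0,t]$ with
\[
A^{\prime}(t) \geq A(t-s^\ast) + \beta(s^\ast) - \varepsilon
\]
for any $\varepsilon>0$. We take the infimum approximately, so we never need it to be attained; at the end we let $\varepsilon \to 0$. Causality gives $A^{\prime} \leq A$, and we combine this with the arrival constraint of Definition \ref{def:ac}, $A(t) - A(t-d) \leq \alpha(d)$. Each bound then follows from comparing $A$ and $A^{\prime}$ at suitably chosen times.

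\textbf{Output bound.} Fix $0 \leq d \leq t$ and apply the service guarantee at time $t-d$. This gives some $s \in [0, t-d]$ with $A^{\prime}(t-d) \geq A(t-d-s) + \beta(s) - \varepsilon$. Using $A^{\prime}(t) \leq A(t)$, we get
\[
A^{\prime}(t) - A^{\prime}(t-d) \leq A(t) - A(t-d-s) - \beta(s) + \varepsilon \leq \alpha(d+s) - \beta(s) + \varepsilon .
\]
The right-hand side is at most $\sup_{u\geq 0}\{\alpha(d+u)-\beta(u)\} + \varepsilon$. Letting $\varepsilon\to 0$ shows that $\alpha \oslash \beta$ is an arrival curve for $A^{\prime}$. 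There is a side issue: $\alpha\oslash\beta$ need not satisfy $f(0)=0$. This is harmless, because the definition of arrival curve is only applied for $d\ge 0$.

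\textbf{Backlog bound.} Take $d=0$ in the service guarantee at time $t$. We obtain
\[
A(t)-A^{\prime}(t) \leq A(t)-A(t-s)-\beta(s)+\varepsilon \leq \alpha(s)-\beta(s)+\varepsilon ,
\]
which is at most $\vdev(\alpha,\beta)+\varepsilon$. The identity $\vdev(\alpha,\beta) = (\alpha\oslash\beta)(0)$ is just the definition of $\oslash$ evaluated at $0$.

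\textbf{Delay bound.} This is the step I expect to be the main obstacle, because the statement phrases $\hdev$ through $\alpha\oslash\beta$ evaluated at a \emph{negative} argument. First I would unfold that expression:
\[
(\alpha\oslash\beta)(-d)\leq 0 \iff \forall u\geq 0:\ \alpha(u-d)\leq\beta(u).
\]
Since $\alpha$ lives on $\R_\infty$ and is an arrival curve, I would take $\alpha(x)=0$ for $x<0$. This extension is consistent with $\alpha\in\F_0$, being nonnegative and nondecreasing with $\alpha(0)=0$. Under it, the condition says $\beta(u+d)\geq\alpha(u)$ for all $u\ge 0$, i.e. the usual horizontal deviation $\inf\{d\ge 0 : \forall u\ge 0,\ \alpha(u) \le \beta(u+d)\}$.

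Next, let $d$ satisfy this condition and consider data that arrived by time $t$, with $t+d$ as the departure time to check. The service guarantee at $t+d$ gives some $s$ with $A^{\prime}(t+d)\geq A(t+d-s)+\beta(s)-\varepsilon$. There are two cases.
\begin{itemize}
\item If $s\le d$, then $t+d-s \ge t$, so $A^{\prime}(t+d) \ge A(t)-\varepsilon$ by monotonicity of $A$ and nonnegativity of $\beta$.
\item If $s>d$, then $A(t)-A(t+d-s)\le\alpha(s-d)\le\beta(s)$, so again $A^{\prime}(t+d)\geq A(t)-\varepsilon$.
\end{itemize}
Letting $\varepsilon\to0$ gives $A^{\prime}(t+d) \ge A(t)$, which under FIFO/virtual delay means every bit arriving by $t$ departs by $t+d$. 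Taking the infimum over admissible $d$ gives the bound. One technicality remains: if $\beta$ is not left-continuous, the infimum may not itself be admissible. I would handle this by approximating from above, which is enough because $d\mapsto A^{\prime}(t+d)$ is nondecreasing.
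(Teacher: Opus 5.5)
Your proof is correct. The paper gives no proof of this theorem: it states it as background and cites the Network Calculus textbooks \cite{bouillard2018deterministic,le2001network}. Your argument is the standard textbook one, so there is no different route to compare against.

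A few precision points, none of which is a gap:

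\begin{itemize}
\item The convention $\alpha(x)=0$ for $x<0$ is not a choice. It is forced by $\alpha\in\F_0$: the function is nonnegative, nondecreasing, and $\alpha(0)=0$.
\item The possible non-attainment of the infimum defining $\hdev(\alpha,\beta)$ comes from a failure of \emph{right}-continuity of $\beta$, not left-continuity. From $\beta(v+d)\geq\alpha(v)$ for all $d\downarrow d^\ast$ you only get $\alpha(v)\leq\beta((v+d^\ast)^+)$.
\item You do not need any approximation step. The virtual delay $\inf\{\tau\geq 0: A(t)\leq A'(t+\tau)\}$ is at most every admissible $d$, hence at most their infimum $\hdev(\alpha,\beta)$, and that is all the theorem asserts. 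Monotonicity of $A'$ would not by itself give the pointwise statement $A'(t+\hdev(\alpha,\beta))\geq A(t)$; that would require right-continuity of $A'$.
\item Causality $A'\leq A$, which your output-bound step uses, is a standard system assumption but is not part of the paper's service-curve definition. State it explicitly as a hypothesis.
\end{itemize}
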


\begin{theorem}[Convolution of Service Curves]
\label{thm:convolution on a tandem}
Consider two servers in tandem $\server_1$ and $\server_2$ with service curves $\beta_1$ and $\beta_2$ respectively for flow $\flow$. Then, $\beta_1 \otimes \beta_2$ is a service curve for $f$ for the system $(\server_1, \server_2)$.
\end{theorem}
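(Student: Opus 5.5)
The plan is to chain the service-curve inequality of Definition \ref{def:sc} at the two servers and then use associativity of the (min,plus) convolution. Let $A$ be the input of $\flow$ to $\server_1$, let $A_1$ be its output from $\server_1$, which is also its input to $\server_2$, and let $A_2$ be its output from $\server_2$. The goal is to show $A_2(t) \geq (A \otimes (\beta_1\otimes\beta_2))(t)$ for all $t\geq 0$; this is exactly the statement that $\beta_1\otimes\beta_2$ is a service curve for the system $(\server_1,\server_2)$.

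First, Definition \ref{def:sc} applied to $\server_2$ gives $A_2(t) \geq \inf_{0\leq s\leq t}\{A_1(t-s)+\beta_2(s)\}$. Second, for each $s$ in that range, Definition \ref{def:sc} applied to $\server_1$ at time $t-s\geq 0$ gives $A_1(t-s)\geq \inf_{0\leq u\leq t-s}\{A(t-s-u)+\beta_1(u)\}$. Substituting the second bound into the first, and noting that adding $\beta_2(s)$ preserves the inequality inside the infimum, yields
\[
A_2(t) \geq \inf_{0\leq s\leq t}\ \inf_{0\leq u\leq t-s}\{A(t-s-u)+\beta_1(u)+\beta_2(s)\}.
\]

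Third, regroup this double infimum. Set $v=s+u$, so that $0\leq v\leq t$. The double infimum then equals
\[
\inf_{0\leq v\leq t}\Big\{A(t-v)+\inf_{0\leq s\leq v}\{\beta_1(v-s)+\beta_2(s)\}\Big\} = \inf_{0\leq v\leq t}\{A(t-v)+(\beta_1\otimes\beta_2)(v)\}.
\]
The right-hand side is $(A\otimes(\beta_1\otimes\beta_2))(t)$, which proves the inequality.

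It remains to check that $\beta_1\otimes\beta_2\in\F_0$, since Definition \ref{def:sc} requires this. The convolution vanishes at $0$ because $\beta_1(0)=\beta_2(0)=0$. It is nondecreasing because both $\beta_i$ are nonnegative and nondecreasing: if $d\leq d'$, any split $u\leq d'$ of $d'$ can be compared with the split $\min(u,d)$ of $d$, whose terms are no larger.

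The main obstacle is the bookkeeping in the third step. One must verify that the index set $\{(s,u): 0\leq s\leq t,\ 0\leq u\leq t-s\}$ is the same as $\{(v,s): 0\leq v\leq t,\ 0\leq s\leq v\}$, so that the change of variables neither loses nor adds terms. One must also handle possibly infinite values in $\Rplus_\infty$; this causes no difficulty because all terms are nonnegative, so no $\infty-\infty$ expression arises.
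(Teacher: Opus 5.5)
Your proof is correct. The paper gives no proof of this statement and only quotes it as standard Network Calculus background \cite{le2001network}; your argument is the usual textbook one, namely the chain $A_2 \geq A_1\otimes\beta_2 \geq (A\otimes\beta_1)\otimes\beta_2 = A\otimes(\beta_1\otimes\beta_2)$ using isotonicity and associativity of $\otimes$, with the associativity step checked directly by your change of variables $v=s+u$, and with the check that $\beta_1\otimes\beta_2\in\mathcal{F}_0$ spelled out, which the usual presentation leaves implicit.
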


\begin{theorem}[FIFO Leftover Service]\label{thm:fifo_lo}
Let server $\server$ offer service curve $\beta$.  
Assume flows $\flow_1$ and $\flow_2$ with arrival curves $\alpha_1$ and $\alpha_2$ cross $\server$.
Assuming FIFO multiplexing, the leftover service for $f_1$ is
\[
\mbox{$\beta_{\flow_1}^{\text{l.o.}}(t) = [\beta(t) - \alpha_2(t-\theta)]^+ \cdot \1{t > \theta}$}
\]
with 
$[x]^+ := \max \{0,x\}$,
the indicator function $\1{  \text{condition} }$  that is $0$ if the condition is not met and $1$ otherwise,
and $\theta \in \Rplus$ is the free FIFO parameter.
As abbreviation for the expression we use $\beta \ominus_{\theta} \alpha_2$.
\end{theorem}

\begin{example}[Common Curve Shapes]
Common curve shapes in NC include the so-called burst-delay function $\delta_T(t)$ that is $0$ for $t\leq T$ and $\infty$ otherwise.
$\delta_0(t)$ is the neutral element w.r.t. operator $\otimes$. 	
A typical service curve is the so-called rate latency curve defined as $\beta_{R,T}(t)=R \cdot [t-T]^+$.
A common arrival curve is the so-called token bucket curve defined as $\gamma_{b, r}(t)=r \cdot t + b$.
\end{example}

\begin{definition}[Shapers]
The cumulative output $A^{\prime}$ after server $\server$ can further be constrained by a so-called shaper.
It can be described by a curve $\gamma_{L,R^{\prime}}$ where $R^{\prime}$ typically stands for the maximum transmission capacity of the outgoing link and $L$ for the maximum packet size.
A shaper may also be assumed to hold at the entry of the network, i.e., the endpoint itself shapes some of its outgoing flows.
Mathematically, the output process is  constrained by $(\alpha \oslash \beta) \wedge \gamma_{L,R^{\prime}}$ %(\cite{bouillard2018deterministic}, Theorem 5.3)
after server $\server$ and by $\alpha \wedge \gamma_{L,R^{\prime}}$ after the endpoint with $a \wedge b := min\{a,b\}$.
\end{definition}

Our network model consists of data flows originating from endpoints.
Each flow is constrained by a token bucket arrival curve with token bucket endpoint-shaping. 
The flows traverse a network of FIFO-multiplexing servers.
Each server offers a rate latency curve\footnote{more general curve is allowed as long as it fits into the curve class described in Section \ref{sec:ludb_plus_plus}. }  and a token bucket output shaper per outgoing link.
For stability, we assume per traversed server $\server$ that the sum of the of the respective flows' non-shaped arrival curve rates is less than the rate of $\server$.
Moreover, we assume that each shaper rate $R^{\prime}$ is at least as high as the service rate of the corresponding server $\server$.
Additionally, we assume that $R^{\prime}$ of $\server$ is also at least as high as the following service rates of servers on any path of a flow that traverses $\server$ on the respective link.
The assumption also applies to the endpoint-shaping rate.
A similar assumption on the shaper rate can be found in \cite{thomas2019cyclic}.
Lastly, we limit our study to feedforward networks, so the paths among the flows do not exhibit cyclic dependencies.

\section{Related Work}\label{sec:related_work}
\subsection{TFA, TFA++, FP-TFA, GFP-TFA, AdmTFA} 
Total Flow Analysis (TFA) \cite{bouillard2018deterministic} is a method to compute a delay bound for the flow of interest (foi) by iteratively applying delay and output bound of Theorem \ref{thm:Basic_Bounds} to the aggregation of flows (Definition \ref{def:nc:min_plus_operations}) along foi's path.
TFA has low computational cost, however, the bounds become increasingly loose with increasing network size.
TFA++ \cite{grieu2004} \cite{mifdaoui2017beyond} is an extension of TFA that incorporates shaping constraints of the links.
For this, the respective shaper curve is used to get a tighter input arrival curve when computing the node's delay bound, i.e., $\hdev(\alpha \wedge \gamma_{L,R^{\prime}}, \beta)$ instead of $\hdev(\alpha, \beta)$.
Similarly, the shaper curve is used go get a tighter output bound.
Other extensions of TFA include Fixed-Point TFA (FP-TFA) \cite{thomas2019cyclic} for cyclic networks, Generic FP-TFA (GFP-TFA) \cite{tabatabaee2023efficient} for more general curves than TFA-FP and Admission Shaping with TFA (AdmTFA) \cite{bouillard2024admission}.
AdmTFA solves TFA as an LP that incorporates shaping constraints at the entry of the network and within the network. 
It is also applicable to cyclic networks for which a fixed point equation is computed.

\subsection{SFA-FIFO} 
Separated Flow Analysis with FIFO (SFA-FIFO) \cite{scheffler21qest}  computes the residual service curve at each server along the path of the foi according to Theorem \ref{thm:fifo_lo}. 
The $\theta$ value is fixed as $T+\frac{b}{R}$ for service curve $\beta_{R,T}$ and arrival curve $\gamma_{b,r}$.
Afterwards, Theorem \ref{thm:convolution on a tandem} is applied iteratively on the residual service curves.
From this, the final delay bound is computed with Theorem \ref{thm:Basic_Bounds}. 
It does not take shaping into account, has low computational cost and often results in better bounds than TFA.

\subsection{LUDB, LUDB-FF}
LUDB \cite{lenzini2008methodology} was initially developed for nested tandems which are defined as follows.
\begin{definition}[Nested Tandem]\label{def:nested_tandem}
Tandem $\T$	consisting of a set of subsequent servers $\server_1, \server_2, ..., \server_n$ is called a nested tandem iff for all pairs of flows $\flow_i, \flow_j$ its respective paths on $\T$  either have a subset relation or are fully disjoint.
\end{definition}

Note that the previous definition assumes that a flow does not rejoin $\T$.
If the flow indeed rejoins $\T$, the flow is split into several flows for each rejoining segment on $\T$.

A nested tandem allows the analysis to proceed from the inside out applying Theorem \ref{thm:fifo_lo} and Theorem \ref{thm:convolution on a tandem} along the way.
For example, for the following nested tandem $\T$,

\begin{figure}[H]  
\begin{centering}
  \includegraphics[width=0.5\textwidth]{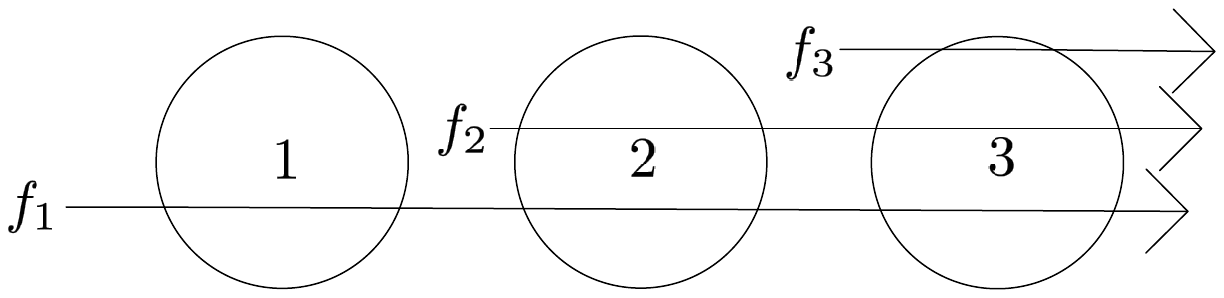}  
  \caption{Sinktree tandem with $N=3$.}
  \label{fig:eval_3_servers_sink_tree_tandem}
\end{centering}
\end{figure}
the analysis would proceed as follows. 
$\beta_3 \ominus_{\theta_1} \alpha_3$ is what is left at  $\server_3$ for the aggregate $\flow_2$ and $\flow_3$.
This result is then convolved with the service curve of $\server_2$, i.e., $\beta_2 \otimes (\beta_3 \ominus_{\theta_1} \alpha_3)$ which is how much service is left for the aggregate $\flow_2$ and $\flow_3$ on the subtandem $\T_{2,3}$ that spans $\server_2$ and $\server_3$ only.
That result in turn is used to remove flow $\flow_2$'s interference on $\T_{2,3}$ given by the result of $(\beta_2 \otimes (\beta_3 \ominus_{\theta_1} \alpha_3)) \ominus_{\theta_2} \alpha_2$.
This can be convolved with the service curve of $\server_1$ which results in $ \beta_1 \otimes ((\beta_2 \otimes (\beta_3 \ominus_{\theta_1} \alpha_3)) \ominus_{\theta_2} \alpha_2)$ that is the leftover service curve for $\flow_1$ for the complete tandem $\T$ denoted by $\beta_{\flow_1}^{\text{l.o.}}$.
Finally, the delay bound can be computed with Theorem \ref{thm:Basic_Bounds}, namely $hDev(\alpha, \beta_{\flow_1}^{\text{l.o.}})$.

LUDB automates this procedure by first building up a so-called nesting tree that encodes the steps of the analysis in a tree which is then traversed bottom-up.
Moreover, LUDB does not set the FIFO parameters $\theta_i$ to a fixed value during the traversal of the nesting tree.
Instead, it formulates all steps of the analysis symbolically to then be able to optimize the $\theta_i$'s to achieve the lowest upper delay bound w.r.t. the involved NC operation in this order.
LUDB is tight\footnote{i.e., there exist a sample path that coincides with the upper bound} for sink tree networks \cite{lenzini2006tight}, but the computed bound can be larger than the actual worst case delay for other topologies.
Another limitation of LUDB is that it is restricted to token bucket arrival curves and pseudoaffine service curves for which the rate latency curve is a special form of it.
Later on, LUDB has been extended to non-nested tandems \cite{bisti2012numerical} by virtual flow splitting to transform non-nested tandems into nested ones.
LUDB has also been extended to feedforward networks denoted by LUDB-FF \cite{scheffler21qest}.

\subsection{LUDB with Shaped Foi}
\cite{boyer2010half} presents an extension of LUDB where the foi experiences a so-called concave piecewise linear (CPL) arrival curve.
The study is limited to one hop persistent cross-traffic at each server, so a special nested topology.
The crossflows' arrival curves are all limited to token bucket arrival curves.
The approach is quite similar to LUDB with the difference that it finds the optimal $\theta_i$ values w.r.t. delay bound of a CPL arrival curve (new) and left-over service curve for this topology (old).
It derives a close form solution for the delay bound which is then at least as good as LUDB.
This is because LUDB ignores shaping and more generally, reduces CPL arrival curves to token bucket curves.
The authors of \cite{boyer2010half} admit that "the general case, with shaping on considered and interfering flow ... seems really harder", exactly where this paper contributes to.

\subsection{ELP, PLP and PLPP} 
As LUDB does not guarantee tight delay bounds, \cite{bouillard2014exact} presents an approach based on mathematical optimization that provably delivers tight delay bounds in FIFO feedforward networks without shapers. 
For this purpose, it encodes the complete network with its flow paths, arrival curves constraints, service curve constraints, FIFO property, timing relations, monotonicity constraints and indicator variables into a single exponentially-sized MILP.
The MILP has not been researched with shaping constraints to this date.
Recent works discussed in \cite{bouillard2022trade} relax the tightness guarantee as follows.
ELP is the MILP of \cite{bouillard2014exact} without the indicator variables, thus effectively turning the MILP into a LP of exponential size.
Polynomial Linear Program (PLP) is a polynomial-sized LP that takes the main ideas of the MILP. 
It has fewer time variables than the MILP which are naturally ordered and thus do not require indicator variables that enforce such an ordering.
Moreover, PLP includes TFA++ and SFA constraints to make the delay bound result less pessimistic.
Additionally, PLP incorporates shaping constraints within the network, i.e., line shaping constraints\footnote{Note that our evaluation in Section \ref{sec:eval} shows that the implementation of ELP which is available at \url{https://github.com/anne-bou/panco} seems to also incorporate shaping assumptions, although this is not officially stated in \cite{bouillard2022trade}. }.
PLPP (PLP' in \cite{bouillard2022trade}) is the same as PLP just without the TFA++ and SFA constraints and thus an upper bound of PLP's delay results.

\section{LUDB++}\label{sec:ludb_plus_plus} 
% Structure:
% Def. window function I_w => DONE
% Def. of \curveclass for n=1: term, figure, acknowledgment => DONE
%  We call the curve class that can be described as minimum of steps each followed by a linear curve \curveclass. => DONE
% It might be tempting to use convolution from n=1 to larger number to model multiple steps, but note that it cant just be replaced by MIN here unlike the curve shapes within LUDB (pseudoaffine called there). => NOT NEEDED
% Def. of \curveclass for arbitraty n => DONE
% Figure of it? => DONE
% RL in \curveclass => DONE
% \curveclass closed under \conv => DONE
% delay with shaped ac => DONE
% backlog with non-shaped ac => DONE
% output bound => DONE
% LO with shaped ac in MSLC => DONE
% symbolic LO with shaped AC description => DONE
% with theorems ... we are now able to analyze ....and give an example in the following. => DONE
% example shaped ac and delay for two servers -> use some abbreviations to make it compact => DONE
This Section constitutes our main contribution, LUDB++, an extension of LUDB that incorporates shaping assumptions into its analysis.
While the order of operations of the analysis is the same as in LUDB (discussed in Section \ref{sec:related_work}), the statements of intermediary results need to be completely revised.
We first start by defining a useful function, the so-called window function $I_w$ \cite{chang2012performance}.
\begin{definition}[Window Function]\label{def:window_function}
The window function is defined as 
$I_w (t) =     \begin{cases}
        w & \text{if } t=0\\
        +\infty & \text{if } t>0
    \end{cases}$
\end{definition}
Next, we model an essential curve for our study within the NC framework that begins at some offset, followed by a step after which it continues linearly.
This can be modeled with the curve 
\begin{equation} \label{eq:plcs_n_1}
	((\delta_{\tau} \otimes I_{\sigma_1} \otimes \gamma_{0,\rho_1})\wedge \delta_0)\otimes \delta_D
\end{equation} since

$\inf_{\substack{s,u\geq 0 \\ s+u=t}} \{ (\delta_{\tau}\otimes I_{\sigma_1} \otimes \gamma_{0, \rho_1})\wedge \delta_0 (s) + \delta_D (u) \} $\\
$= \begin{cases}
        0 & \text{if } t\leq D  \\
        ((\delta_{\tau} \otimes I_{\sigma_1} \otimes \gamma_{0,\rho_1})\wedge \delta_0)(t-D) & \text{if } t>D 
        \end{cases}
        $
 
Note, that $(\delta_{\tau} \otimes I_{\sigma_1} \otimes \gamma_{0,\rho_1})\wedge \delta_0$ is a function that starts at the origin, has a plateau of height $\sigma_1$ with width $\tau$ and is then followed by a linear rate of $\rho_1$.

We call the curve class that can be described as minimum of such functions that all begin after a common offset $\curveclass$ as abbreviation for 'Minimum of Steps Linear Curve'. %(Minimum of Steps Linear Curve)
Figure \ref{fig:mslc_n_1} illustrates this class for $n=1$.
\begin{figure}[H]  
\begin{centering}
  \includegraphics[width=0.5\textwidth]{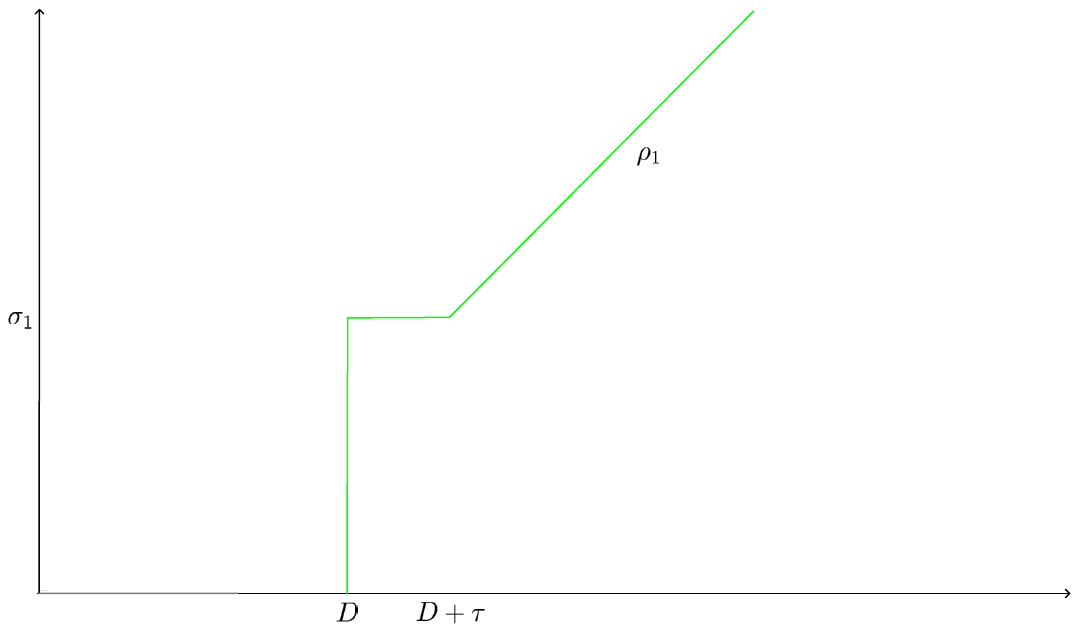}  
  \caption{$\curveclass$ with $n=1$.}
  \label{fig:mslc_n_1}
\end{centering}
\end{figure} 

\begin{definition}[\curveclass]\label{def:mslc}
The service curve $\beta \in$ \curveclass~is defined as $\beta = \delta_D \otimes \bigwedge\limits_{i=1}^n ((\delta_{\tau_i} \otimes I_{\sigma_i} \otimes \gamma_{0,\rho_i})\wedge \delta_0)$.
\end{definition}

\begin{lemma}[Rate Latency Service Curve is in \curveclass]\label{lemma:rl_in_mslc}
Let $\beta=\beta_{R,T}$ be a rate latency service curve. 
Then, $\beta \in$ \curveclass.
\end{lemma}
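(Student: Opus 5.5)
We prove the statement by exhibiting explicit parameters: $n=1$, offset $D=T$, plateau width $\tau_1=0$, plateau height $\sigma_1=0$ and rate $\rho_1=R$. We then check pointwise that the resulting member of \curveclass{} coincides with $\beta_{R,T}$. That is, we show
\[
\beta_{R,T} = \delta_T \otimes \big((\delta_0 \otimes I_0 \otimes \gamma_{0,R}) \wedge \delta_0\big).
\]

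\textbf{Step 1 (inner convolution).} Since $\delta_0$ is the neutral element of $\otimes$, the inner expression reduces to $I_0 \otimes \gamma_{0,R}$. We evaluate this directly from Definition \ref{def:nc:min_plus_operations}. For $t\ge 0$,
\[
(I_0\otimes\gamma_{0,R})(t)=\inf_{0\le u\le t}\{I_0(t-u)+\gamma_{0,R}(u)\}.
\]
Because $I_0(t-u)=+\infty$ unless $u=t$, only $u=t$ contributes, and the infimum equals $0+Rt=Rt$.

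\textbf{Step 2 (minimum with $\delta_0$).} Next we take the minimum with $\delta_0$. At $t=0$ we have $\delta_0(0)=0=R\cdot 0$. For $t>0$, $\delta_0(t)=+\infty$, so the minimum is $Rt$. The inner function is therefore exactly the linear function $\lambda_R(t)=Rt$ on $t\ge 0$. (For $t<0$ all functions are taken as $0$, consistent with $\F_0$ and causality; we will state this convention explicitly.)

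\textbf{Step 3 (shift by $\delta_T$).} Finally we apply the shift computation already displayed before Definition \ref{def:mslc}. Convolving with $\delta_T$ gives $0$ for $t\le T$ and $R(t-T)$ for $t>T$. This is $R[t-T]^+=\beta_{R,T}(t)$.

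Every step is routine. The only point needing care is the degenerate choice $\tau_1=0$, $\sigma_1=0$, which Definition \ref{def:mslc} does not explicitly forbid. The main obstacle is therefore convention rather than computation: we must be sure that $I_0$ is an admissible window function and that $\delta_0\otimes I_0\otimes\gamma_{0,R}$ has value $0$ at the origin. If zero plateau parameters were considered disallowed, the fallback is to note that the minimum with $\delta_0$ already forces value $0$ at $t=0$, so the plateau height is irrelevant when $\tau_1=0$. Alternatively, $\beta_{R,T}$ can be written with a genuine plateau of width $\tau_1$ and height $\sigma_1 = 0$ absorbed into $D$.
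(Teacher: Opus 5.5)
Your proposal is correct and is the paper's proof: the paper exhibits the same representation $\beta_{R,T}=\delta_T\otimes((\delta_0\otimes I_0\otimes\gamma_{0,R})\wedge\delta_0)$ (that is, $n=1$, $D=T$, $\tau_1=\sigma_1=0$, $\rho_1=R$) without further comment, and your pointwise check supplies the omitted verification. Drop your first fallback remark, though. With $\tau_1=0$ the plateau height is not irrelevant, because a positive $\sigma_1$ makes the curve jump to $\sigma_1$ immediately after the origin. No fallback is needed anyway, since $I_0$ is a legitimate window function with $w=0$ and the paper itself uses $I_0$ elsewhere.
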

 
\begin{theorem}[\curveclass~Closed under Convolution]\label{theorem:mslc_closed_under_conv}
	Let $\beta_1, \beta_2 \in$ \curveclass, i.e., $\beta_1=\delta_{D_1}\otimes (\bigwedge\limits_{i=1}^{n_1}(\delta_{\tau_i} \otimes I_{\sigma_i} \otimes \gamma_{0, \rho_i})\wedge \delta_0)$ and $\beta_2=\delta_{D_2}\otimes (\bigwedge\limits_{i=1}^{n_2}(\delta_{\tilde{\tau_i}} \otimes I_{\tilde{\sigma_i}} \otimes \gamma_{0, \tilde{\rho_i}})\wedge \delta_0)$. 
	Then $\beta_1 \otimes \beta_2 \in$ \curveclass~as well.
\end{theorem}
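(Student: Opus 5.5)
The plan is to reduce the statement to a computation on single ``step'' components. Three standard facts do the reduction: $\otimes$ is commutative and associative, $\delta_{D_1}\otimes\delta_{D_2}=\delta_{D_1+D_2}$, and $\otimes$ distributes over $\wedge$, i.e. $(f\wedge g)\otimes h=(f\otimes h)\wedge(g\otimes h)$, which holds because both sides are an infimum over the same index set. Writing $f_i:=(\delta_{\tau_i}\otimes I_{\sigma_i}\otimes\gamma_{0,\rho_i})\wedge\delta_0$ and $g_j:=(\delta_{\tilde\tau_j}\otimes I_{\tilde\sigma_j}\otimes\gamma_{0,\tilde\rho_j})\wedge\delta_0$, these facts give
\[
\beta_1\otimes\beta_2=\delta_{D_1+D_2}\otimes\bigwedge_{i=1}^{n_1}\bigwedge_{j=1}^{n_2}(f_i\otimes g_j).
\]
It therefore suffices to show that each $f_i\otimes g_j$ is a finite minimum of functions of the form $(\delta_{\tau}\otimes I_{\sigma}\otimes\gamma_{0,\rho})\wedge\delta_0$. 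The resulting double minimum is then again a finite minimum of such terms preceded by a common offset $\delta_{D_1+D_2}$, which is exactly the form of Definition~\ref{def:mslc}.

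First I make the component explicit. Evaluating the convolutions directly, $\delta_{\tau}\otimes I_{\sigma}$ equals $\sigma$ on $[0,\tau]$ and $+\infty$ afterwards. Convolving with $\gamma_{0,\rho}$ gives $\sigma+\rho[t-\tau]^+$, and taking the minimum with $\delta_0$ yields
\[
f(t)=\begin{cases}0 & t=0,\\ \sigma+\rho[t-\tau]^+ & t>0,\end{cases}
\]
matching the description after (\ref{eq:plcs_n_1}). Conversely, every function of this explicit shape is such a component. So I can work with the explicit formula $f(t)=\sigma+\rho[t-\tau]^+$ for $t>0$, and $g(t)=\tilde\sigma+\tilde\rho[t-\tilde\tau]^+$ for $t>0$, with $f(0)=g(0)=0$.

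Next I compute $(f\otimes g)(t)=\inf_{0\le u\le t}\{f(t-u)+g(u)\}$ for $t>0$ by splitting the infimum according to where the split point lies.
\begin{itemize}
\item The endpoint $u=0$ contributes $f(t)$, and the endpoint $u=t$ contributes $g(t)$.
\item Interior points $0<u<t$ contribute $\sigma+\tilde\sigma+\inf_{0<u<t}\{\rho[t-u-\tau]^++\tilde\rho[u-\tilde\tau]^+\}$.
\end{itemize}
The remaining infimum is the convolution of two convex piecewise-linear curves with slopes $0$ then $\rho$, and $0$ then $\tilde\rho$. By the usual slope-sorting argument it equals $\min(\rho,\tilde\rho)[t-\tau-\tilde\tau]^+$. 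The reason is that one can spend length $\tau+\tilde\tau$ at slope $0$ and then continue forever on the smaller rate.

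Because the expression is continuous in $u$, the restriction to the open interval does not change the infimum, since the limits coincide with the values at the closed endpoints. Hence
\[
f\otimes g=f\wedge g\wedge h,\qquad h=(\delta_{\tau+\tilde\tau}\otimes I_{\sigma+\tilde\sigma}\otimes\gamma_{0,\min(\rho,\tilde\rho)})\wedge\delta_0,
\]
and the value at $t=0$ is $0$ in all three terms.

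The main obstacle I expect is this middle computation. It needs care with the jump at $0$: the constants $\sigma,\tilde\sigma$ are only added when the respective argument is strictly positive, and that is precisely why both $f$ and $g$ survive as separate terms alongside $h$. It also needs a careful justification of the convex-convolution formula, which I would check by a short case analysis on whether $t\le\tau+\tilde\tau$. Everything else is bookkeeping with the distributivity identity, including the observation that the number of components grows to at most $n_1+n_2+n_1n_2$ (or simply $3n_1n_2$ before removing duplicates).
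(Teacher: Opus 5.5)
Your proposal is correct and follows the paper's proof essentially step for step. The paper also uses commutativity and the distributivity $(f\wedge g)\otimes h=(f\otimes h)\wedge(g\otimes h)$ to reduce to the $n=1$ case, and in Lemma~\ref{lemma:mslc_n_1_conv} obtains $f\wedge g$ together with the step at $\tau_1+\tau_2$ of height $\sigma_1+\sigma_2$. The differences are cosmetic: you merge the paper's two terms with rates $\rho_1$ and $\rho_2$ into one term with rate $\min(\rho,\tilde\rho)$, and you use the rate-latency convolution instead of the paper's case split on $t$. One small point: for $\rho=+\infty$ the integrand is not continuous in $u$, so your continuity remark fails there. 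You do not need it, though, because the endpoint values of the interior expression, $f(t)+\tilde\sigma$ and $g(t)+\sigma$, already dominate $f(t)$ and $g(t)$.
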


The delay bound for shaped arrival curves and this new curve class can be computed as follows.

\begin{theorem}[Delay Bound Shaped Arrival Curve and \curveclass~Service Curve]\label{theorem:delay_bound_shaped_ac_plcs_sc}
	Let $\alpha=\gamma_{b,r} \wedge \gamma_{L,R'}$ and $\beta \in$ \curveclass~of the form $\beta=\delta_D \otimes (\bigwedge\limits_{i=1}^n (\delta_{\tau_i}\otimes I_{\sigma_i} \otimes \gamma_{0,\rho_i}) \wedge \delta_0)$ with $L \leq b$ and $r < R', R' \geq \max\limits_{i=1,...,n} \{ \rho_i | \rho_i < +\infty\}, r \leq \min\limits_{i=1,...,n} \rho_i$. 
	Then $hdev(\alpha, \beta)=D+  [ \bigvee\limits_{\substack{i = 1 \\ \rho_i < +\infty}}^{n} [\tau_i \cdot 1_{ \{ \sigma_i \geq \frac{b-L}{R'-r}r+b \}} - \frac{[\sigma_i -b]^+}{r}, \tau_i \cdot 1_{\{\frac{b-L}{R'-r}r+b\geq \sigma_i \}} + \frac{[\frac{b-L}{R'-r}r+b-\sigma_i]^+}{\rho_i} - \frac{b-L}{R'-r}]^+ 
	    , \bigvee\limits_{\substack{i = 1 \\ \rho_i = +\infty}}^{n} [\tau_i \cdot 1_{ \{ \sigma_i \geq \frac{b-L}{R'-r}r+b \}} - \frac{[\sigma_i -b]^+}{r}, \tau_i \cdot 1_{\{\frac{b-L}{R'-r}r+b\geq \sigma_i \}} - \frac{[\sigma_i-L]^+}{R'}]^+ ]^+$
\end{theorem}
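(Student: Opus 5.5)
The plan is to reduce the horizontal deviation to a maximum of elementary horizontal deviations, one per step component of $\beta$, and then evaluate each by hand. First we dispose of the offset $D$: since $\beta=\delta_D\otimes\beta_0$ with $\beta_0=\bigwedge_{i=1}^n((\delta_{\tau_i}\otimes I_{\sigma_i}\otimes\gamma_{0,\rho_i})\wedge\delta_0)$, we have $\beta(t)=\beta_0(t-D)$ for $t>D$ and $0$ otherwise. From the definition of $\hdev$ in Theorem \ref{thm:Basic_Bounds} this gives $\hdev(\alpha,\beta)=D+\hdev(\alpha,\beta_0)$. Next we use that the horizontal deviation against a minimum of curves is the maximum of the horizontal deviations. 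Indeed, $\hdev(\alpha,\beta_0)=\sup_{s\ge0}\inf\{d\ge 0:\beta_0(s+d)\ge\alpha(s)\}$, and $\beta_0(s+d)\ge\alpha(s)$ holds iff every component satisfies it. The component-wise infima are nondecreasing in $d$, so $\hdev(\alpha,\beta_0)=\bigvee_i \hdev(\alpha,\beta^{(i)})$, where $\beta^{(i)}$ is the $i$-th step component. The outer $[\cdot]^+$ comes from $d\ge 0$, and the split into $\rho_i<+\infty$ and $\rho_i=+\infty$ mirrors this maximum.

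For one component we write $\beta^{(i)}(u)=0$ at $u=0$, $\sigma_i$ on $(0,\tau_i]$, and $\sigma_i+\rho_i(u-\tau_i)$ for $u>\tau_i$. Its pseudo-inverse is $(\beta^{(i)})^{-1}(y)=0$ for $y\le 0$, $\tau_i\cdot\1{y>\sigma_i}+[y-\sigma_i]^+/\rho_i$ otherwise, which is $\tau_i\cdot\1{y>\sigma_i}$ when $\rho_i=\infty$. Then $\hdev(\alpha,\beta^{(i)})=\sup_{s\ge0}\{(\beta^{(i)})^{-1}(\alpha(s))-s\}$. The arrival curve $\alpha=\gamma_{b,r}\wedge\gamma_{L,R'}$ is concave piecewise linear: it is $L+R's$ up to the breakpoint $s^*=\frac{b-L}{R'-r}$ (well defined since $L\le b$ and $r<R'$), with value $y^*=\frac{b-L}{R'-r}r+b$ there, and $b+rs$ afterwards. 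Since $(\beta^{(i)})^{-1}\circ\alpha-\mathrm{id}$ is piecewise linear in $s$, the supremum is attained at a breakpoint or the limit is $-\infty$. The candidate points are $s=0$, $s=s^*$, and the point $s_i$ where $\alpha(s)=\sigma_i$. On each linear piece the slope is $\alpha'(s)/\rho_i-1$ above the plateau, or $-1$ below it. The hypotheses $R'\ge\rho_i$ (finite $\rho_i$) and $r\le\rho_i$ make this slope $\ge 0$ on the first segment and $\le0$ on the second. So above the plateau the maximum sits at $s^*$, giving $\tau_i\1{y^*\ge\sigma_i}+[y^*-\sigma_i]^+/\rho_i-s^*$, which is the second argument. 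Below the plateau the maximum sits at the point where $\alpha$ crosses $\sigma_i$ (or at $0$). If $\sigma_i\ge y^*$ the crossing is on the $r$-segment at $s=(\sigma_i-b)/r$, giving $\tau_i-[\sigma_i-b]^+/r$, which is the first argument. For $\rho_i=\infty$ the second segment contributes the crossing on the $R'$ piece at $s=[\sigma_i-L]^+/R'$, giving $\tau_i\1{y^*\ge\sigma_i}-[\sigma_i-L]^+/R'$.

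The main obstacle is the boundary bookkeeping. We must check that strict versus non-strict inequalities in the indicators ($y>\sigma_i$ versus $y\ge\sigma_i$, using right-continuity of the pseudo-inverse and taking the supremum as a limit) produce exactly the stated forms. We must also check the degenerate cases $\sigma_i<L$ (the plateau is cleared immediately, so $[\cdot]^+$ clips) and $\sigma_i\in[L,y^*]$ versus $\sigma_i>y^*$, where each indicator switches off the term that is not the true maximum. I would handle this by enumerating these three regimes of $\sigma_i$ relative to $L$, $b$ and $y^*$. In each regime I would verify that the maximum of the two displayed arguments equals the true supremum, the dominated argument being nonpositive or smaller; the final $[\cdot]^+$ absorbs negative values.
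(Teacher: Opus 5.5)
Your proposal is correct and follows the paper's own route. You remove the offset via $\mathit{hdev}(\alpha,\delta_D\otimes\beta)=D+\mathit{hdev}(\alpha,\beta)$, turn the minimum of steps into a maximum of per-step horizontal deviations, and evaluate each single step ($n=1$, $D=0$) by cases on $\sigma_i$ versus the inflection value $y^*=\frac{b-L}{R'-r}r+b$ of $\alpha$, separately for $\rho_i<+\infty$ and $\rho_i=+\infty$. Your pseudo-inverse and slope-sign argument is a more explicit version of the paper's geometric reasoning in its single-step lemma. That argument has the function nondecreasing on the $R'$-segment because $R'\ge\rho_i$, nonincreasing on the $r$-segment because $r\le\rho_i$, and takes the value at the $\sigma_i$-crossing as a right limit.
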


Next, we will consider backlog and output bounds which are important to analyze non-nested tandems and general feedforward networks.

\begin{theorem}[Backlog Bound and Output Arrival Curve of Arrival Curve and Service Curve in \curveclass]\label{theorem:backlog_bound_ouput_ac_simple_ac_plcs_sc}
	Let $\alpha=\gamma_{b,r}$ and $\beta \in$ \curveclass~of the form $\beta=\delta_D \otimes (\bigwedge\limits_{i=1}^n (\delta_{\tau_i}\otimes I_{\sigma_i}\otimes \gamma_{0,\rho_i})\wedge\delta_0)$ with $0 \leq r \leq \min\limits_{i=1,...,n} \rho_i$. 
	Then $vdev(\alpha,\beta)=[b+Dr]\vee[\bigvee\limits_{i=1}^n(b-\sigma_i+(D+\tau_i)r)]$ and an output arrival curve for the flow is $\alpha\oslash\beta(t)=rt+vdev(\alpha,\beta)=rt +[b+Dr]\vee[\bigvee\limits_{i=1}^n(b-\sigma_i+(D+\tau_i)r)]$ for $t\geq 0$.
\end{theorem}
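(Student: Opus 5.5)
The plan is to compute the supremum $\sup_{u \geq 0}\{b + ru - \beta(u)\}$ directly from the explicit pointwise form of $\beta$, and then to read off the output curve from the same supremum, using Theorem \ref{thm:Basic_Bounds} for both.

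\textbf{Step 1: explicit form of each building block.} Put $g_i := (\delta_{\tau_i}\otimes I_{\sigma_i}\otimes\gamma_{0,\rho_i})\wedge\delta_0$. Evaluate the convolutions using Definition \ref{def:window_function} and the definition of $\delta_{\tau}$:
\begin{itemize}
\item $\delta_{\tau_i}\otimes I_{\sigma_i}$ equals $\sigma_i$ on $[0,\tau_i]$ and $+\infty$ afterwards;
\item convolving with $\gamma_{0,\rho_i}$ gives $\sigma_i+\rho_i[s-\tau_i]^+$;
\item taking the minimum with $\delta_0$ gives $g_i(0)=0$ and $g_i(s)=\sigma_i+\rho_i[s-\tau_i]^+$ for $s>0$.
\end{itemize}
The case $\rho_i=+\infty$ is allowed; then $g_i(s)=+\infty$ for $s>\tau_i$.

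\textbf{Step 2: explicit form of $\beta$.} By the same computation as for (\ref{eq:plcs_n_1}), $\beta(u)=0$ for $u\le D$. For $u>D$ we have $\beta(u)=\min_i g_i(u-D)$, since the minimum commutes with the shift by $\delta_D$.

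\textbf{Step 3: the backlog bound.} By Theorem \ref{thm:Basic_Bounds}, $vdev(\alpha,\beta)=\sup_{u\ge0}\{b+ru-\beta(u)\}$. Split the range of $u$.
\begin{itemize}
\item On $[0,D]$ the expression is $b+ru$, so its supremum is $b+rD$.
\item On $u>D$, write $u=D+s$ with $s>0$. The expression becomes $b+rD+\sup_{s>0}\max_i\{rs-g_i(s)\}$.
\end{itemize}
Exchange the supremum and the maximum to get $b+rD+\max_i\sup_{s>0}\{rs-\sigma_i-\rho_i[s-\tau_i]^+\}$.

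Now fix $i$. On $(0,\tau_i]$ the function $s\mapsto rs-\sigma_i-\rho_i[s-\tau_i]^+$ has slope $r\ge0$. For $s>\tau_i$ it has slope $r-\rho_i\le 0$, by the hypothesis $r\le\min_i\rho_i$; when $\rho_i=\infty$ it equals $-\infty$ there. Hence the supremum is $r\tau_i-\sigma_i$. If $\tau_i>0$ it is attained at $s=\tau_i$. If $\tau_i=0$ it is obtained as the limit $s\downarrow0$, which still equals $-\sigma_i=r\tau_i-\sigma_i$.

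Combining the two ranges gives $vdev=[b+Dr]\vee\bigvee_i(b-\sigma_i+(D+\tau_i)r)$.

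\textbf{Step 4: the output arrival curve.} For $t\ge0$,
\[
\alpha\oslash\beta(t)=\sup_{u\ge0}\{b+r(t+u)-\beta(u)\}=rt+\sup_{u\ge0}\{b+ru-\beta(u)\}=rt+vdev(\alpha,\beta),
\]
because $\alpha$ is affine, so the $t$-dependence factors out of the supremum. This yields the stated formula.

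\textbf{Main obstacle.} The only delicate point is the pointwise evaluation of the convolution and minimum building blocks in Step 1. Care is needed at the boundary values $s=0$ and $s=\tau_i$, and with the infinite cases $\tau_i=0$ and $\rho_i=+\infty$. I would verify these carefully, checking that each supremum is the claimed value whether or not it is attained. The monotonicity argument in Step 3 is the key use of the hypothesis $r\le\rho_i$; without it the supremum would be infinite.
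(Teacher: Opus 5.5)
Your proposal is correct and is essentially the paper's argument compressed into one computation. The paper likewise peels off $u\le D$ to get $b+Dr$, shifts out $D$, exchanges the supremum with the maximum over the $n$ steps, evaluates each step at its corner $u=D+\tau_i$, and gets the output curve by pulling $rt$ out of the deconvolution; your slope argument justifies, via $r\le\rho_i$, the location of the maximizer that the paper's $n=1$ lemma simply asserts.

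Two small remarks. First, treating $\alpha$ as $b+ru$ on all of $[0,\infty)$ tacitly sets $\alpha(0)=b$, whereas $\alpha\in\F_0$ has $\alpha(0)=0$. This is harmless for $D>0$, and the paper's output lemma handles it explicitly by splitting off $u=0$ and using $vdev(\alpha,\beta)\ge b$. Second, in the combined edge case $\tau_i=0$, $\rho_i=+\infty$, the per-step supremum is $-\infty$ rather than $-\sigma_i$; this does not affect the result, since that term is dominated by $b+Dr$.
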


We also need to discuss how to compute the leftover service curves with LUDB++.

\begin{theorem}[Non-Decreasing Lower Bound of Leftover Service in \curveclass]\label{theorem:non_decreasing_lb_left_over_sc_with_shaped_ac_in_plcs}
	Let $\alpha=\gamma_{b,r} \wedge \gamma_{L,R'}$ and $\beta \in$ \curveclass~of the form $\beta=\delta_D \otimes (\bigwedge\limits_{i=1}^n (\delta_{\tau_i}\otimes I_{\sigma_i} \otimes \gamma_{0,\rho_i}) \wedge \delta_0)$ with $L \leq b$ and $r < R', R' \geq \max\limits_{i=1,...,n} \{\rho_i | \rho_i < +\infty \} , r \leq \min\limits_{i=1,...,n} \rho_i$. 
	Then,  $\underline{ (\beta \ominus_{\theta} \alpha) }^\uparrow \in$ \curveclass~for every $\theta \geq 0$ where $\underline f^\uparrow$ for $f \in \F$ is the non-decreasing lower bound of $f$ defined as $\underline{f(t)}^\uparrow:= \inf\limits_{u \geq t} \{ f(u) \}$.
\end{theorem}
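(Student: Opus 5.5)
The plan is to reduce the statement to a single atom of the minimum in Definition~\ref{def:mslc}, compute the non-decreasing closure of the leftover for that atom by hand, and then recombine the atoms.

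\textbf{Reduction to one atom.} Write $\beta=\bigwedge_{i=1}^n\beta_i$ with $\beta_i:=\delta_D\otimes((\delta_{\tau_i}\otimes I_{\sigma_i}\otimes\gamma_{0,\rho_i})\wedge\delta_0)$. Since $\alpha(t-\theta)$ does not depend on $i$, we have $\beta(t)-\alpha(t-\theta)=\bigwedge_i(\beta_i(t)-\alpha(t-\theta))$. Both $[\cdot]^+$ and multiplication by $\1{t>\theta}$ are monotone, so they commute with finite minima. This gives $\beta\ominus_\theta\alpha=\bigwedge_i(\beta_i\ominus_\theta\alpha)$. The closure also commutes with finite minima:
\[
\underline{f}^\uparrow(t)=\inf_{u\ge t}\bigwedge_i f_i(u)=\bigwedge_i\inf_{u\ge t}f_i(u).
\]
It therefore suffices to show two things. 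First, each $\underline{(\beta_i\ominus_\theta\alpha)}^\uparrow$ is a minimum of atoms of the form $(\delta_{\tau'}\otimes I_{\sigma'}\otimes\gamma_{0,\rho'})\wedge\delta_0$. Second, all these atoms can be placed after one common offset $D':=\max\{D,\theta\}$. The leftover vanishes on $[0,D']$: for $t\le\theta$ because of the indicator, and for $t\le D$ because $\beta_i=0$ there. So $D'$ is a valid common offset.

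\textbf{Shape of a single atom.} The function $\alpha$ (zero at $0$) is concave piecewise linear on $(0,\infty)$. It has slope $R'$ up to its breakpoint $t^*=\frac{b-L}{R'-r}$ and slope $r$ afterwards. The atom $\beta_i$ after $D$ is constant $\sigma_i$ on $(D,D+\tau_i]$ and then grows with slope $\rho_i$, or is $+\infty$ if $\rho_i=+\infty$. Hence $g_i(t):=\beta_i(t)-\alpha(t-\theta)$ on $(D',\infty)$ is piecewise linear with at most three breakpoints: $D+\tau_i$, $\theta+t^*$, and possibly a jump at $D'$ coming from $\alpha(0^+)=L$. On the plateau its slopes are $-R'$ or $-r$, both non-positive. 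After $D+\tau_i$ its slopes are $\rho_i-R'\le 0$ and then $\rho_i-r\ge 0$, using $R'\ge\rho_i\ge r$. So $g_i$ is non-increasing up to $\max\{D+\tau_i,\theta+t^*\}$ and non-decreasing afterwards.

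Its closure is then the following. It equals $[m_i]^+$ on $(D',a_i]$, where $m_i$ is the minimum value of $g_i$, attained at $a_i$. After $a_i$ it follows $[g_i]^+$, which is linear with slope $\rho_i-r$ once positive. If $m_i<0$, the closure is $0$ until $g_i$ crosses $0$ at some $c_i$ and then grows linearly with slope $\rho_i-r$. Such a function is exactly one atom with offset $D'$, $\sigma'=0$, $\tau'=c_i-D'$ and $\rho'=\rho_i-r$.

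If $m_i\ge 0$, the closure is constant $m_i$ on $(D',a_i]$ and then grows with slope $\rho_i-r$. This is one atom with $\sigma'=m_i$, $\tau'=a_i-D'$ and $\rho'=\rho_i-r$. If $\rho_i=+\infty$, the closure is $[m_i]^+$ on $(D',D+\tau_i]$ and $+\infty$ afterwards, which again is a single atom with $\rho'=+\infty$. Should a configuration arise in which the closure is $0$ on $(D',a]$ and then jumps to a positive level $\sigma'$, flat until $a+w$ and linear after, I would write it as the minimum of two atoms with offset $D'$. The first has $\sigma=0$, $\tau=a-D'$ and $\rho=+\infty$. The second has height $\sigma'$, width $a-D'+w$ and slope $\rho'$. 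Their minimum reproduces the function exactly.

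\textbf{Main obstacle.} The expected difficulty is the case analysis for the single atom. One must locate the minimiser $a_i\in\{D',\,D+\tau_i,\,\theta+t^*\}$ depending on the ordering of $D$, $\theta$, $D+\tau_i$ and $\theta+t^*$. The jump of $\alpha$ at $0^+$ makes $g_i$ lose $L$ right after $D'$ when $\theta\ge D$, and the clipping by $[\cdot]^+$ has to be handled in each case. I would organise it as: first $\theta\le D$ versus $\theta>D$, then the position of $\theta+t^*$ relative to $D+\tau_i$. In each case I would verify that the closure is of the form "constant, then linear with slope $\rho_i-r$", or its zero-then-linear variant. The hypotheses $r\le\rho_i\le R'$ and $L\le b$ are used exactly here to fix the slope signs and ensure $t^*\ge 0$. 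Once this is done, taking the minimum over $i$ and over the at most two atoms per $i$ gives $\underline{(\beta\ominus_\theta\alpha)}^\uparrow\in\curveclass$.
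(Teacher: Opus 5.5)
Your proposal is correct. Its skeleton matches the paper's proof:
\begin{itemize}
\item split $\beta$ into its steps and use Lemma~\ref{lemma:left_over_minima_of_service_curves_equals_minima_of_left_over_service_curves} to pass to per-step leftovers;
\item close each per-step leftover;
\item recombine the results.
\end{itemize}
You also state explicitly that $\underline{\,\cdot\,}^{\uparrow}$ commutes with finite minima, which the paper uses without saying so.

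The per-step argument is where you genuinely differ. The paper works by explicit geometry. It splits on $\theta$ versus $D+\tau_i-\frac{b-L}{R'-r}$, then on $y$ versus $\sigma_i$ (resp.\ versus $\frac{b-L}{R'-r}r+b$), on $\theta$ versus $D$, and on $\rho_i$ finite or infinite, and writes down the resulting atom in each case. It keeps the offset fixed at $D$ and encodes $\1{t>\theta}$ through the extra atom $(\delta_{\theta-D}\otimes I_0\otimes\gamma_{0,+\infty})\wedge\delta_0$, so it uses up to two atoms per step. You instead move the offset to $\max\{D,\theta\}$ and use the slope-sign argument. For $\rho_i<+\infty$, $g_i$ is continuous on $(D',\infty)$, non-increasing up to $a_i=\max\{D',D+\tau_i,\theta+\frac{b-L}{R'-r}\}$, and affine with slope $\rho_i-r\geq 0$ afterwards. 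Its closure is therefore exactly one atom.

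Your route is shorter and proves equality, whereas the paper only says ``lower bounded by''. It also shows that no case analysis is needed for membership alone. The paper's route is what delivers the closed-form parameters used by Corollary~\ref{cor:symbolic_lb_left_over_sc} and the LPs in the example.

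Two small remarks.
\begin{itemize}
\item The ``main obstacle'' you anticipate is not needed for this theorem. Your monotonicity argument already pins down $a_i$, and membership only requires that suitable $\tau',\sigma',\rho'$ exist.
\item Your hedged ``zero, then a jump to a positive level'' configuration never occurs. For $\rho_i<+\infty$, $g_i$ is continuous on $(D',\infty)$; for $\rho_i=+\infty$, the only jump after $D'$ is to $+\infty$. The two-atom fix is harmless but unnecessary.
\end{itemize}
One edge case is unmentioned: if $\rho_i=r$ and $m_i<0$, there is no crossing point $c_i$. The closure is then identically $0$ after $D'$, which is the atom with $\sigma'=0$ and $\rho'=0$.
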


\begin{corollary}[Symbolic Non-Decreasing Lower Bound of Leftover Service Curve]\label{cor:symbolic_lb_left_over_sc}
Let $\alpha=\gamma_{b,r}\wedge \gamma_{L,R'}$ and $\beta \in$ \curveclass~of the form $\beta=\delta_D \otimes (\bigwedge\limits_{i=1}^n (\delta_{\tau_i}\otimes I_{\sigma_i} \otimes \gamma_{0,\rho_i}) \wedge \delta_0)$ with $L \leq b$ and $r < R', R' \geq \max\limits_{i=1,...,n} \{ \rho_i | \rho_i < +\infty\} , r \leq \min\limits_{i=1,...,n} \rho_i$.
Then, $\underline{ (\beta \ominus_{\theta} \alpha) }^\uparrow$ for $\rho_i < +\infty$ with $\theta \geq D$ can be described as follows 
\begin{itemize}
	\item $\delta_D \otimes (((\delta_{\tau_i+\frac{[y-\sigma_i]^+}{\rho_i-r}}\otimes I_{[\sigma_i-y]^+}\otimes \gamma_{0,\rho_i-r})\wedge \delta_0) \wedge ((\delta_{\theta - D} \otimes I_0 \otimes \gamma_{0,+\infty})\wedge \delta_0))$ with $D \leq \theta \leq D+\tau_i-\frac{b-L}{R'-r}$ and $y=((D+\tau_i)-\theta)r+b$.
	\item $\delta_D \otimes (((\delta_{\theta-D+\frac{b-L}{R'-r}+\frac{[\frac{b-L}{R'-r}r+b-y]^+}{\rho_i-r}} \otimes I_{[y-(\frac{b-L}{R'-r}r+b)]^+} \otimes \gamma_{0,\rho_i-r})\wedge \delta_0) \wedge ((\delta_{\theta-D} \otimes I_0 \otimes \gamma_{0, +\infty})\wedge \delta_0))$ with $\theta \geq D$ and $\theta \geq D + \tau_i - \frac{b-L}{R'-r}$ and $y=(\theta + \frac{b-L}{R'-r}-(D+\tau_i))\rho_i+\sigma_i$.
\end{itemize}
and for $\rho_i = +\infty$ with $\theta \geq D$ it can be described as  
\begin{itemize}
	\item $\delta_{D} \otimes (((\delta_{\tau_i} \otimes I_0 \otimes \gamma_{0,+\infty})\wedge \delta_0) \wedge ((\delta_{\theta - D} \otimes I_0 \otimes \gamma_{0,+\infty})\wedge \delta_0))$ with $D \leq \theta \leq D+\tau_i$.
	\item $\delta_{D} \otimes (((\delta_{\theta - D} \otimes I_0 \otimes \gamma_{0,+\infty})\wedge \delta_0) \wedge ((\delta_{\theta - D} \otimes I_0 \otimes \gamma_{0,+\infty})\wedge \delta_0))$ with $\theta > D+\tau_i$.
\end{itemize}
\end{corollary}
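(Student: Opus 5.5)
The plan is to compute $\beta\ominus_\theta\alpha$ explicitly on each term of the minimum and then take its non-decreasing lower bound. Two facts reduce the problem to a single term. First, $\beta=\delta_D\otimes\bigwedge_i g_i$ with $g_i=(\delta_{\tau_i}\otimes I_{\sigma_i}\otimes\gamma_{0,\rho_i})\wedge\delta_0$ is simply $\beta(t)=\min_i g_i(t-D)$ for $t>D$ and $0$ otherwise. Second, $[\min_i x_i-a]^+=\min_i[x_i-a]^+$. Since $\underline{\cdot}^\uparrow$ commutes with finite minima, it suffices to treat a single term $g=g_i$ and show that $\underline{(\delta_D\otimes g)\ominus_\theta\alpha}^\uparrow$ has the stated two-term form for each regime of $\theta$. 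The set over which the overall minimum is taken is then again a minimum of step-linear pieces after offset $D$, consistent with Theorem \ref{theorem:non_decreasing_lb_left_over_sc_with_shaped_ac_in_plcs}.

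For a single term with $\rho_i<+\infty$ and $\theta\ge D$, set $s=t-D$. For $s\le\theta-D$ the leftover is $0$ because of the indicator $\1{t>\theta}$; this produces the term $(\delta_{\theta-D}\otimes I_0\otimes\gamma_{0,+\infty})\wedge\delta_0$. For $s>\theta-D$ the leftover is $[g(s)-\alpha(s+D-\theta)]^+$. Here $\alpha$ has slope $R'$ up to its breakpoint $x^*=\frac{b-L}{R'-r}$, where its value is $\frac{b-L}{R'-r}r+b$, and slope $r$ afterwards. The function $g$ is flat at $\sigma_i$ on $(0,\tau_i]$ and has slope $\rho_i$ afterwards. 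The case split of the corollary is exactly whether the kink of $g$ at $s=\tau_i$ corresponds to an argument of $\alpha$, namely $\tau_i+D-\theta$, that lies beyond $x^*$ (first bullet, $\theta\le D+\tau_i-x^*$) or before it (second bullet).

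In the first bullet, on the range where $g$ has slope $\rho_i$, $\alpha$ already has slope $r$. The difference therefore grows at rate $\rho_i-r\ge0$, using $r\le\min\rho_i$. Its value at $s=\tau_i$ is $\sigma_i-y$ with $y=(D+\tau_i-\theta)r+b$. If $\sigma_i\ge y$, this yields a step of height $\sigma_i-y$ at $\tau_i$; otherwise the difference crosses zero at $\tau_i+\frac{y-\sigma_i}{\rho_i-r}$. Before $s=\tau_i$ the difference is decreasing, since $g$ is flat while $\alpha$ increases, so the non-decreasing lower bound $\inf_{u\ge t}$ flattens it down to its value at $\tau_i$ (or to $0$), and on $(\theta-D,\tau_i]$ it is at least that. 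Combining this with the $[\cdot]^+$ gives exactly the first bullet. In the second bullet, on the interval where $g$ has slope $\rho_i$ but $\alpha$ still has slope $R'\ge\rho_i$, the difference is non-increasing. Its minimum is therefore attained at $s=\theta-D+x^*$, where $g=\sigma_i+(\theta+x^*-D-\tau_i)\rho_i=y$ and $\alpha=x^*r+b$; after that point the slope is $\rho_i-r$. The lower bound gives the step at $\theta-D+x^*$ of height $[y-(x^*r+b)]^+$, or a zero crossing shifted by $\frac{[x^*r+b-y]^+}{\rho_i-r}$.

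For $\rho_i=+\infty$, $g$ is $0$ up to $\tau_i$ and $\infty$ afterwards, so the leftover is $0$ until $\max(\tau_i,\theta-D)$ and $\infty$ afterwards. Its lower bound is the pure delay $\delta_{\max(\tau_i,\theta-D)}$, written as the stated minimum of two $\gamma_{0,+\infty}$ step terms; the first bullet has $\theta\le D+\tau_i$ and the second $\theta>D+\tau_i$. The main obstacle I expect is the monotonicity bookkeeping for the lower bound $\underline{\cdot}^\uparrow$. Specifically, I must verify that on every subinterval the sign of the slope of $g-\alpha(\cdot+D-\theta)$ is as claimed, using $r\le\rho_i\le R'$ and $L\le b$. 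I must also verify that the boundary values match at the case boundaries $\theta=D+\tau_i-x^*$, so that the two bullets agree and the description is exhaustive.
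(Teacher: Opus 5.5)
Your proposal is correct and follows essentially the paper's route. The paper's proof reuses the per-step case analysis from the proof of Theorem~\ref{theorem:non_decreasing_lb_left_over_sc_with_shaped_ac_in_plcs}: it reduces to a single step via Lemma~\ref{lemma:left_over_minima_of_service_curves_equals_minima_of_left_over_service_curves}, splits on whether $D+\tau_i-\theta$ is at least the breakpoint $\frac{b-L}{R'-r}$ of $\alpha$, and compresses the subcases with $[\cdot]^+$; it also adds a remark, not needed for the statement as written, that $\theta<D$ never helps.

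One small fix: for $\rho_i=+\infty$ the step equals $\sigma_i$ (not $0$) on $(0,\tau_i]$, so on $(\theta,D+\tau_i]$ the leftover is $[\sigma_i-\alpha(t-\theta)]^+\geq 0$. The two-delay expression is therefore only a lower bound there, which is all the paper claims and all that is needed.
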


With the aforementioned results, we are now able to analyze shaped FIFO networks with LUDB++ and provide an example in the following.
\subsection{Example}
In this (sub-)Section, we present the LUDB++ analysis steps for two server topology depicted in Figure \ref{fig:2_flows_2_servers}.
\begin{figure}[H]  
\begin{centering}
  \includegraphics[width=0.5\textwidth]{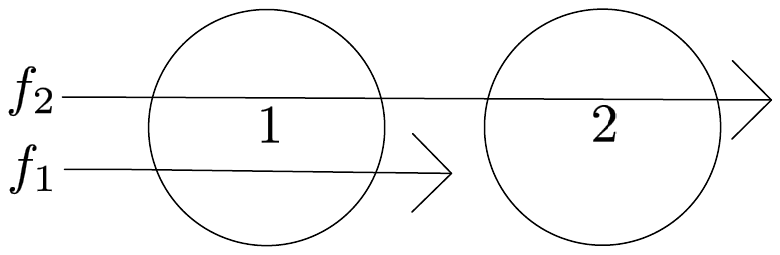}  
  \caption{LUDB++ example for two servers with one crossflow.}
  \label{fig:2_flows_2_servers}
\end{centering}
\end{figure}
For delay bounding, LUDB++ computes $\inf\limits_{\theta\geq0} hdev(\alpha_2,\beta_2 \otimes \underline{(\beta_1 \ominus_{\theta} \alpha_1)}^\uparrow)$ with $\alpha_i \in \gamma_{b_i,r_i}\wedge \gamma_{L_i,R_i'}$ and $\beta_i \in$ \curveclass~with $n=1$. \\
By Corollary \ref{cor:symbolic_lb_left_over_sc}, $\underline{(\beta_1 \ominus_{\theta} \alpha_1)}^\uparrow$ yields: \\
$(I)$ $\delta_{D_1} \otimes ([\delta_{\tau_1+\frac{[y-\sigma_1]^+}{\rho_1-r_1}} \otimes I_{[\sigma_1-y]^+}\otimes \gamma_{0,\rho_1-r_1}]\wedge [\delta_{\theta-D_1}\otimes I_0 \otimes \gamma_{0,+\infty}] \wedge \delta_0)$ with $\theta \geq D_1, \theta \leq D_1 + \tau_1 - \frac{b_1 - L_1}{R_1'-r_1}$ and $y=((D_1+\tau_1)-\theta)r_1+b_1$. \\
$(II)$ $\delta_{D_1} \otimes ([\delta_{\theta-D_1+\frac{b_1-L_1}{R_1'-r_1}+\frac{[\frac{b_1-L_1}{R_1'-r_1}r_1+b_1-y]^+}{\rho_1-r_1}} \otimes I_{[y-(\frac{b_1-L_1}{R_1'-r_1} r_1+b_1)]^+} \otimes \gamma_{0,\rho_1-r_1}] \wedge [\delta_{\theta-D_1} \otimes I_0 \otimes \gamma_{0,+\infty}] \wedge \delta_0)$ with $\theta \geq D_1, \theta \geq D_1 + \tau_1 - \frac{b_1 -L_1}{R_1'-r_1}$ and $y=(\theta+\frac{b_1-L_1}{R_1'-r_1}-(D_1+\tau_1))\rho_1$. \\ \\
For brevity, we only show decomposition $(I)$ in the following. 
For $\beta_2 \otimes \underline{(\beta_1 \ominus_{\theta} \alpha_1)}^\uparrow$, we make use of Theorem \ref{theorem:mslc_closed_under_conv} while already removing duplicate steps in the overall minimum expression as well as some optimization considerations where one of the rates is infinite and obtain the following: \\
$(I)$ $\delta_{D_1+D_2}([\delta_{\tau_2}\otimes I_{\sigma_2} \otimes \gamma_{0,\rho_2}] 
    \wedge [\delta_{\tau_1 + \frac{[y-\sigma_1]^+}{\rho_1-r_1}} \otimes I_{[\sigma_1-y]^+} \otimes \gamma_{0,\rho_1-r_1}] 
    \wedge [\delta_{\tau_2+\tau_1+\frac{[y-\sigma_1]^+}{\rho_1-r_1}} \otimes I_{[\sigma_1-y]^+ +\sigma_2} \otimes \gamma_{0,\rho_2}] \wedge [\delta_{\tau_2+\tau_1+\frac{[y-\sigma_1]^+}{\rho_1-r_1}} \otimes I_{[\sigma_1-y]^+ +\sigma_2} \otimes \gamma_{0,\rho_1-r_1}] 
    \wedge [\delta_{\theta-D_1} \otimes I_0 \otimes \gamma_{0,+\infty}]
    \wedge [\delta_{\tau_2+\theta -D_1} \otimes I_{\sigma_2} \otimes \gamma_{0,\rho_2}] 
    \wedge \delta_0)$
    with $\theta \geq D_1, \theta \leq D_1 + \tau_1 -\frac{b_1-L_1}{R_1'-r_1}$ and $y=((D_1 + \tau_1)-\theta)r_1+b_1$
\\\\
We proceed bottom up as in the LUDB methodology by splitting up the different cases already at this level of the analysis. \\
$(I)$ Let $y=((D_1+\tau_1)-\theta)r_1+b_1$ and constraints $\theta \geq D_1, \theta \leq D_1 + \tau_1 - \frac{b_1 - L_1}{R_1' -r_1}$ \\
$(I.I)$ Additional constraint $y-\sigma_1 \geq 0$ yields $\delta_{D_1+D_2} \otimes ([\delta_{\tau_2} \otimes I_{\sigma_2} \otimes \gamma_{0, \rho_2}] 
\wedge [\delta_{\tau_1 + \frac{y-\sigma_1}{\rho_1 - r_1}} \otimes I_0 \otimes \gamma_{0, \rho_1 - r_1}]
\wedge [\delta_{\tau_2 + \tau_1 + \frac{y-\sigma_1}{\rho_1 - r_1}} \otimes I_{\sigma_2} \otimes \gamma_{0,\rho_2}]
\wedge [\delta_{\tau_2 + \tau_1 + \frac{y-\sigma_1}{\rho_1 - r_1}} \otimes I_{\sigma_2} \otimes \gamma_{0,\rho_1-r_1}]
\wedge [\delta_{\theta-D_1} \otimes I_0 \otimes \gamma_{0,+\infty}] \
\wedge [\delta_{\tau_2+\theta-D_1} \otimes I_{\sigma_2} \otimes \gamma_{0,\rho_2}]\wedge \delta_0)$
\\
$(I.II)$ Additional constraint $y-\sigma_1 \leq 0$ yields $\delta_{D_1 + D_2} \otimes 
([\delta_{\tau_2} \otimes I_{\sigma_2} \otimes \gamma_{0,\rho_2}] 
\wedge [\delta_{\tau_1} \otimes I_{\sigma_1 - y} \otimes \gamma_{0, \rho_1 - r_1}]
\wedge [\delta_{\tau_2+\tau_1} \otimes I_{\sigma_2+\sigma_1-y} \otimes \gamma_{0,\rho_2}]
\wedge [\delta_{\tau_2+\tau_1} \otimes I_{\sigma_2+\sigma_1 -y} \otimes \gamma_{0,\rho_1 -r_1}]
\wedge [\delta_{\theta-D_1} \otimes I_0 \otimes \gamma_{0,+\infty}]
\wedge [\delta_{\tau_2+\theta -D_1} \otimes I_{\sigma_2} \otimes \gamma_{0,\rho_2}]\wedge \delta_0)$
\\\\
Again, for brevity, we only show the decomposition $(I.I)$ in the following. 
Now, we apply Theorem \ref{theorem:delay_bound_shaped_ac_plcs_sc} for a delay bound expression of this composition and obtain \\
$(I.I)$ $D_1+D_2+[\tau_2 \cdot 1_{\{\sigma_2 \geq \frac{b_2-L_2}{R_2'-r_2}r_2+b_2\}} -\frac{[\sigma_2-b_2]^+}{r_2}, 
\tau_2 \cdot 1_{\{\frac{b_2-L_2}{R_2'-r_2}r_2+b_2\geq \sigma_2\}} + \frac{[\frac{b_2-L_2}{R_2'-r_2}r_2+b_2-\sigma_2]^+}{\rho_2}-\frac{b_2-L_2}{R_2'-r_2}]^+ 
\bigvee [(\tau_1+\frac{y-\sigma_1}{\rho_1-r_1}) \cdot 1_{\{0\geq \frac{b_2-L_2}{R_2'-r_2}r_2+b_2 \}} -\frac{[0-b_2]^+}{r_2},
(\tau_1+\frac{y-\sigma_1}{\rho_1-r_1})\cdot 1_{\{\frac{b_2-L_2}{R_2'-r_2}r_2+b_2 \geq 0\}} + \frac{[\frac{b_2-L_2}{R_2'-r_2}r_2+b_2-0]^+}{\rho_1-r_1}-\frac{b_2-L_2}{R_2'-r_2}]^+
\bigvee [(\tau_2+\tau_1+\frac{y-\sigma_1}{\rho_1-r_1})\cdot 1_{\{\sigma_2 \geq \frac{b_2-L_2}{R_2'-r_2}r_2+b_2\}}-\frac{[\sigma_2-b_2]^+}{r_2},
(\tau_2+\tau_1+\frac{y-\sigma_1}{\rho_1-r_1})\cdot 1_{\{\frac{b_2-L_2}{R_2'-r_2}r_2+b_2\geq \sigma_2\}} +\frac{[\frac{b_2-L_2}{R_2'-r_2}r_2+b_2-\sigma_2]^+}{\rho_2}-\frac{b_2-L_2}{R_2'-r_2}]^+
\bigvee [(\tau_2+\tau_1+\frac{y-\sigma_1}{\rho_1-r_1})\cdot 1_{\{\sigma_2 \geq \frac{b_2-L_2}{R_2'-r_2}r_2+b_2\}}-\frac{[\sigma_2-b_2]^+}{r_2},
(\tau_2+\tau_1+\frac{y-\sigma_1}{\rho_1-r_1})\cdot 1_{\{\frac{b_2-L_2}{R_2'-r_2}r_2+b_2 \geq \sigma_2\}} +\frac{[\frac{b_2-L_2}{R_2'-r_2}r_2+b_2-\sigma_2]^+}{\rho_1-r_1}-\frac{b_2-L_2}{R_2'-r_2}]^+
\bigvee [(\theta-D_1) \cdot 1_{\{0 \geq \frac{b_2-L_2}{R_2'-r_2}r_2+b_2\}}-\frac{[0-b_2]^+}{r_2},
(\theta-D_1)\cdot 1_{\{\frac{b_2-L_2}{R_2'-r_2}r_2+b_2\geq 0\}} -\frac{[0-L_2]^+}{R_2'}]^+
\bigvee [(\tau_2+\theta-D_1) \cdot 1_{\{\sigma_2 \geq \frac{b_2-L_2}{R_2'-r_2}r_2+b_2 \}} -\frac{[\sigma_2-b_2]^+}{r_2},
(\tau_2+\theta -D_1)\cdot 1_{\{\frac{b_2-L_2}{R_2'-r_2}r_2+b_2 \geq \sigma_2 \}} + \frac{[\frac{b_2-L_2}{R_2'-r_2}r_2+b_2-\sigma_2]^+}{\rho_2} -\frac{b_2-L_2}{R_2'-r_2}]^+$
with $y=((D_1+\tau_1)-\theta)r_1+b_1$ and constraints $\theta \geq D_1, \theta \leq D_1+\tau_1-\frac{b_1-L_1}{R_1'-r_1}, y-\sigma_1 \geq 0$.
\\\\
The decomposition $(I.I)$ with further constraint decompositions $\frac{b_2-L_2}{R_2'-r_2}r_2+b_2 \geq 0, \frac{b_2-L_2}{R_2'-r_2}r_2+b_2 \geq \sigma_2$ and where the second element in $[]^+$ is the maximum element yields the following LP 
\\
\\
$\text{Minimize } D_1 +D_2 + \tau_2 + \frac{\frac{b_2-L_2}{R_2'-r_2}r_2+b_2 -\sigma_2}{\rho_2} - \frac{b_2-L_2}{R_2'-r_2} $\\\\
$\text{such that} $\\\\
$\frac{b_2-L_2}{R_2'-r_2}r_2+b_2 \geq 0 $\\
$\frac{b_2-L_2}{R_2'-r_2}r_2+b_2 \geq \sigma_2 $\\
$\tau_2 + \frac{\frac{b_2-L_2}{R_2'-r_2}r_2+b_2 -\sigma_2}{\rho_2} - \frac{b_2-L_2}{R_2'-r_2} \geq 0 $\\
$\tau_2 + \frac{\frac{b_2-L_2}{R_2'-r_2}r_2+b_2 -\sigma_2}{\rho_2} - \frac{b_2-L_2}{R_2'-r_2} \geq 0 $\\ 
$\tau_2 + \frac{\frac{b_2-L_2}{R_2'-r_2}r_2+b_2 -\sigma_2}{\rho_2} - \frac{b_2-L_2}{R_2'-r_2} \geq 0$\\ 
$\tau_2 + \frac{\frac{b_2-L_2}{R_2'-r_2}r_2+b_2 -\sigma_2}{\rho_2} - \frac{b_2-L_2}{R_2'-r_2} \geq \tau_1 +\frac{y-\sigma_1}{\rho_1-r_1}+\frac{\frac{b_2-L_2}{R_2'-r_2}r_2+b_2}{\rho_1-r_1}-$\\
$~~~~~~~~~~~~~~~~~~~~~~~~~~~~~~~~~~~~~~~\frac{b_2-L_2}{R_2'-r_2}$\\
$\tau_2 + \frac{\frac{b_2-L_2}{R_2'-r_2}r_2+b_2 -\sigma_2}{\rho_2} - \frac{b_2-L_2}{R_2'-r_2} \geq 0$\\ 
$\tau_2 + \frac{\frac{b_2-L_2}{R_2'-r_2}r_2+b_2 -\sigma_2}{\rho_2} - \frac{b_2-L_2}{R_2'-r_2} \geq \tau_2 + \tau_1 +\frac{y-\sigma_1}{\rho_1-r_1}+$\\
$~~~~~~~~~~~~~~~~~~~~~~~~~~~~~~~~~~~~~~~\frac{\frac{b_2-L_2}{R_2'-r_2}r_2+b_2-\sigma_2}{\rho_2}-\frac{b_2-L_2}{R_2'-r_2}$\\ 
$\tau_2 + \frac{\frac{b_2-L_2}{R_2'-r_2}r_2+b_2 -\sigma_2}{\rho_2} - \frac{b_2-L_2}{R_2'-r_2} \geq 0$\\ 
$\tau_2 + \frac{\frac{b_2-L_2}{R_2'-r_2}r_2+b_2 -\sigma_2}{\rho_2} - \frac{b_2-L_2}{R_2'-r_2} \geq \tau_2 + \tau_1 +\frac{y-\sigma_1}{\rho_1-r_1}+$\\
$~~~~~~~~~~~~~~~~~~~~~~~~~~~~~~~~~~~~~~~\frac{\frac{b_2-L_2}{R_2'-r_2}r_2+b_2-\sigma_2}{\rho_1-r_1}-\frac{b_2-L_2}{R_2'-r_2}$\\ 
$\tau_2 + \frac{\frac{b_2-L_2}{R_2'-r_2}r_2+b_2 -\sigma_2}{\rho_2} - \frac{b_2-L_2}{R_2'-r_2} \geq 0$\\ 
$\tau_2 + \frac{\frac{b_2-L_2}{R_2'-r_2}r_2+b_2 -\sigma_2}{\rho_2} - \frac{b_2-L_2}{R_2'-r_2} \geq \theta -D_1$\\ 
$\tau_2 + \frac{\frac{b_2-L_2}{R_2'-r_2}r_2+b_2 -\sigma_2}{\rho_2} - \frac{b_2-L_2}{R_2'-r_2} \geq 0$\\ 
$\tau_2 + \frac{\frac{b_2-L_2}{R_2'-r_2}r_2+b_2 -\sigma_2}{\rho_2} - \frac{b_2-L_2}{R_2'-r_2} \geq \tau_2 + \theta  - D_1 + \frac{\frac{b_2-L_2}{R_2'-r_2}r_2+b_2 -\sigma_2}{\rho_2} - $\\
$~~~~~~~~~~~~~~~~~~~~~~~~~~~~~~~~~~~~~ \frac{b_2-L_2}{R_2'-r_2}$\\ 
$\theta \geq D_1 $\\
$\theta \leq D_1 + \tau_1 -\frac{b_1-L_1}{R_1'-r_1} $\\
$y -\sigma_1 \geq 0$ \\\\
with $y=((D_1+\tau_1)-\theta)r_1+b_1$.\\

Each LP will be solved individually and the minimum objective value over all these LPs will be then the LUDB++ result.
The result is formally equal to $\inf\limits_{\theta\geq0} hdev(\alpha_2,\beta_2 \otimes \underline{(\beta_1 \ominus_{\theta} \alpha_1)}^\uparrow)$.
\\
The procedure for backlog bounds is done in an analogous way.

\section{Evaluation}\label{sec:eval}
We begin the evaluation Section by noting the hard- and software with which we obtained our data\footnote{Released after acceptance.}.
\subsection{Hardware and Software}

\begin{itemize}
\item 	Experiments were performed on a MacBook Pro with macOS Sequoia 15.6.1 (24G90), Apple M1 Chip (2020), 8GB RAM.
\item LUDB++ data were obtained with the code\footnote{Released after acceptance.}, version 1.0.0, C++17, cmake 3.30, CLion 2024.3.5, CPLEX V22.1.1.0. Release build with compiler flags -O3 -march=armv8.5-a -funroll-loops -flto=thin. %https://github.com/alexscheffler/LUDBPlusPlus
\item LUDB-FF, SFA-FIFO data were obtained with the code\footnote{\url{https://github.com/alexscheffler/DNC}}, commit 5bfcdeea705130f719304257b3530dac0b7b36ed, Java 19 SDK, IntelliJ IDEA Ultimate 2024.2.4, CPLEX V22.1.1.0. 
\item ELP, PLP, PLPP, TFA++, AdmTfa data were obtained with the code\footnote{\url{https://github.com/anne-bou/panco}}, commit 2c923fc8618d73bbc589d95450ec530b37b58a1e, Python 3.9.6, PyCharm 2024.1.4 (Professional Edition), lp\_solve version 5.5.2.11
%\item The dataset can be found at  %\url{https://github.com/alexscheffler/dataset-TBD2026}
\end{itemize}

\subsection{Networks and Results}
We evaluate three different kind of network topologies in the following. 
For all of them the following holds.
Each server has service curve $\beta_{R,T}=\beta_{R,1}$ and each flow has arrival curve $\alpha = \gamma_{b,r}=\gamma_{1,1}$. 
Moreover, we assume that each flow is shaped at the ingress location and each server has an (output) shaper with $\gamma_{L,R'}=\gamma_{0.5, R'}$. 
We vary $R$ and $R'$ to achieve different utilizations and different shaper rate to service rate ratios. %$R'=R, 2R, 3R$
\\
\\
We use the following metric to compare delay bounds: $\text{delay}_{A,B} =  \frac{\text{delay}(A) - \text{delay}(B)}{\text{delay}(B)}$, say relative delay bound (of A) to B.
A negative value hence signifies that A's bound is smaller than B's.

\subsubsection{One Hop Persistent Tandems}
These nested tandems have a foi that crosses all servers and for each server there is exactly one crossflow which only crosses exactly that one server.

\begin{figure}[H]  
\begin{centering}
  \includegraphics[width=0.5\textwidth]{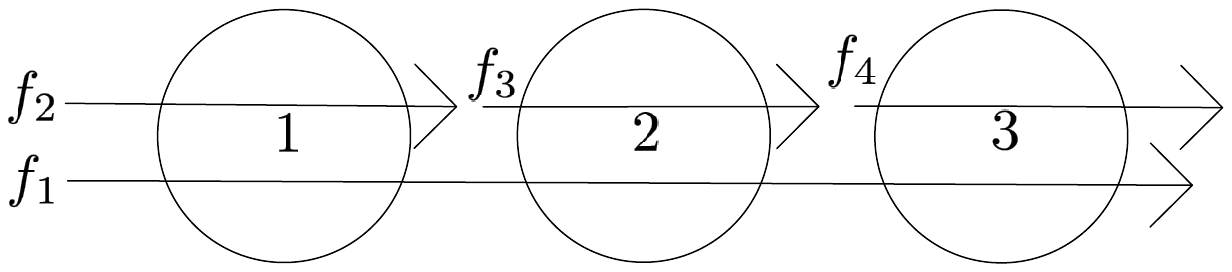}  
  \caption{One hop persistent tandem with $N=3$.}
\end{centering}
\end{figure}

The utilization for this topology is $u=\frac{2 \cdot r}{R}$.
\\
\begin{figure*}[t!]
    \subfloat[$u = 50\%, R'=R$]{{\includegraphics[width=6cm]{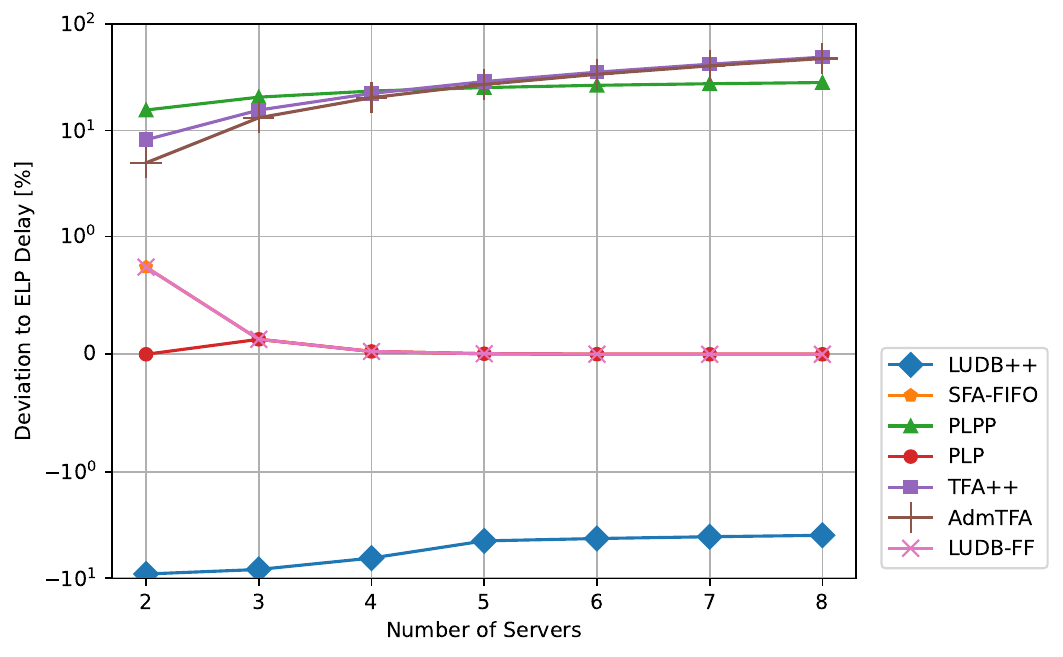} } }
    \subfloat[$u = 50\%, R'=2R$]{{\includegraphics[width=6cm]{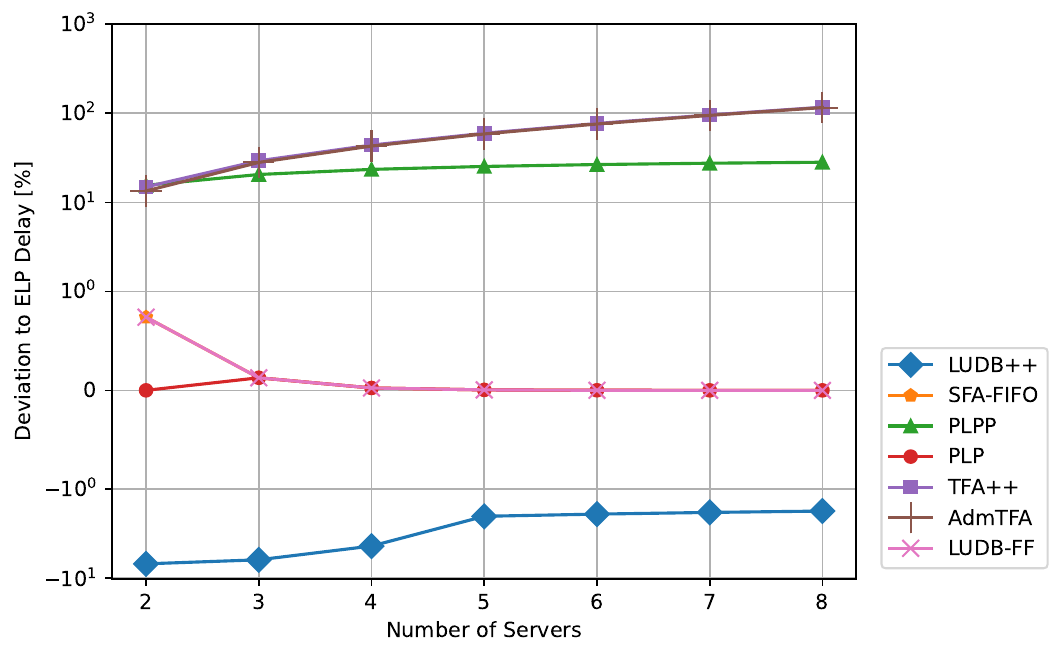} }} 
    \subfloat[$u = 50\%, R'=3R$]{{\includegraphics[width=6cm]{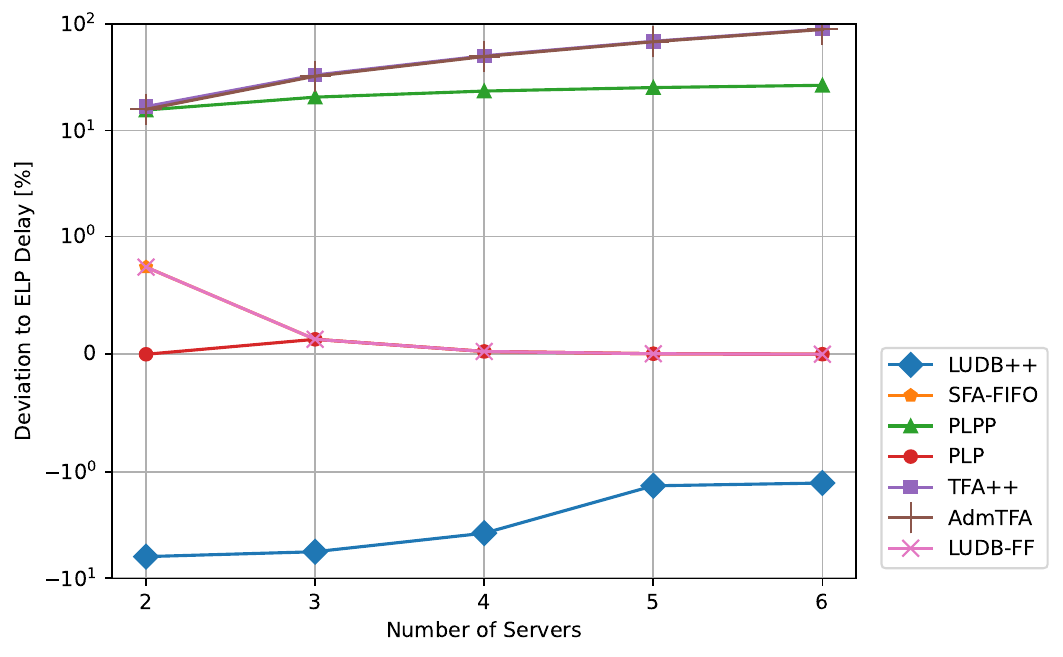} }}  \\
    \subfloat[$u = 75\%, R'=R$]{{\includegraphics[width=6cm]{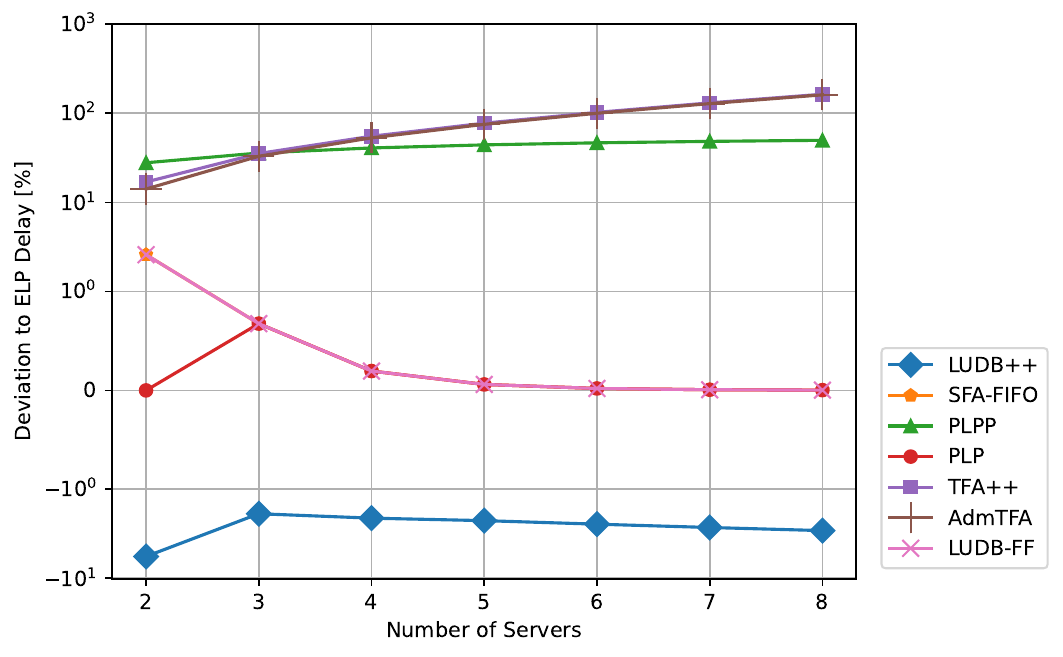} } } 
    \subfloat[$u = 75\%, R'=2R$]{{\includegraphics[width=6cm]{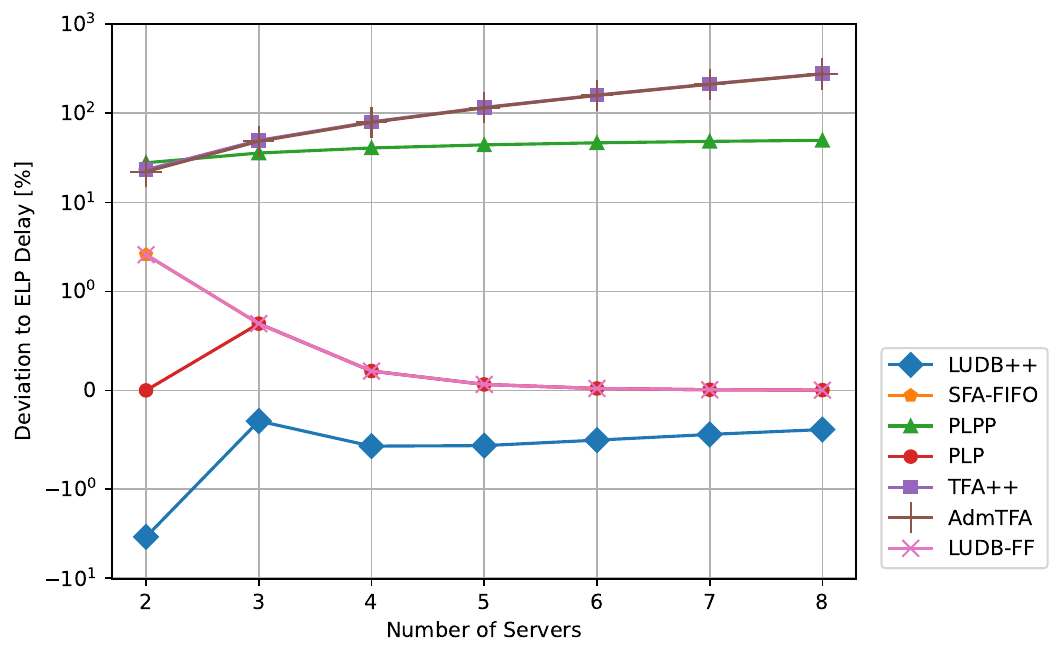} }}
    \subfloat[$u = 75\%, R'=3R$]{{\includegraphics[width=6cm]{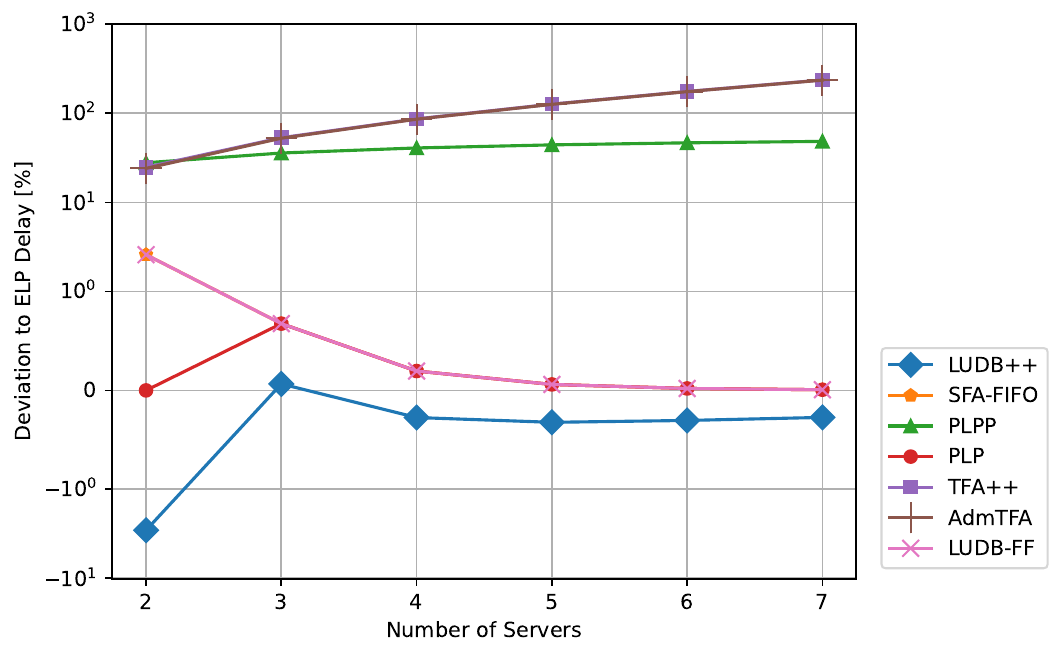} }}     \\
    \subfloat[$u = 100\%, R'=R$]{{\includegraphics[width=6cm]{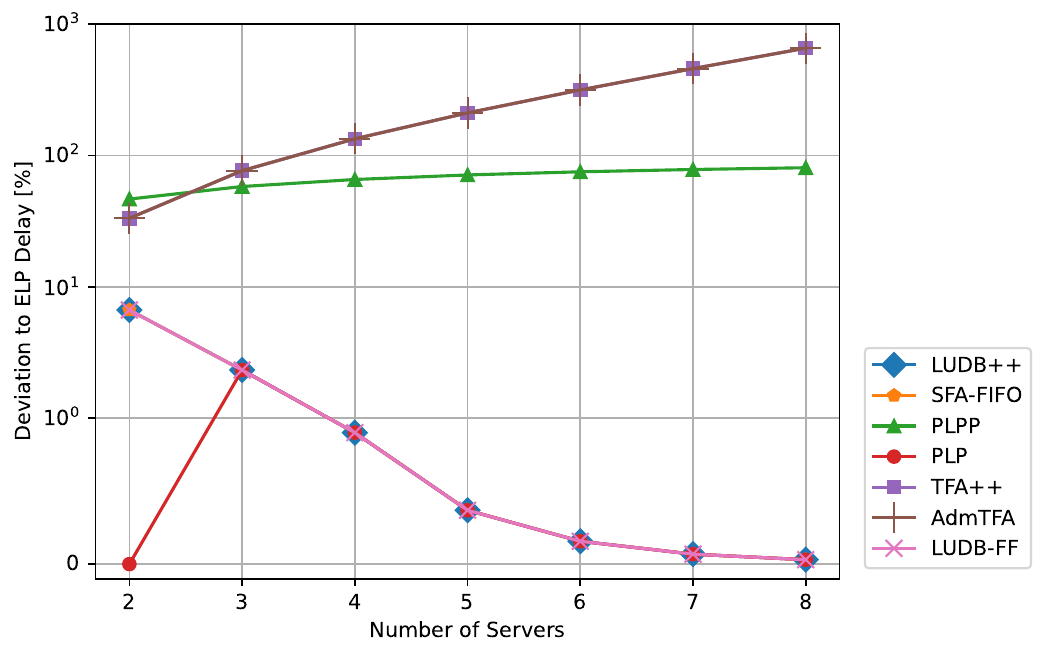} } } 
    \subfloat[$u = 100\%, R'=2R$]{{\includegraphics[width=6cm]{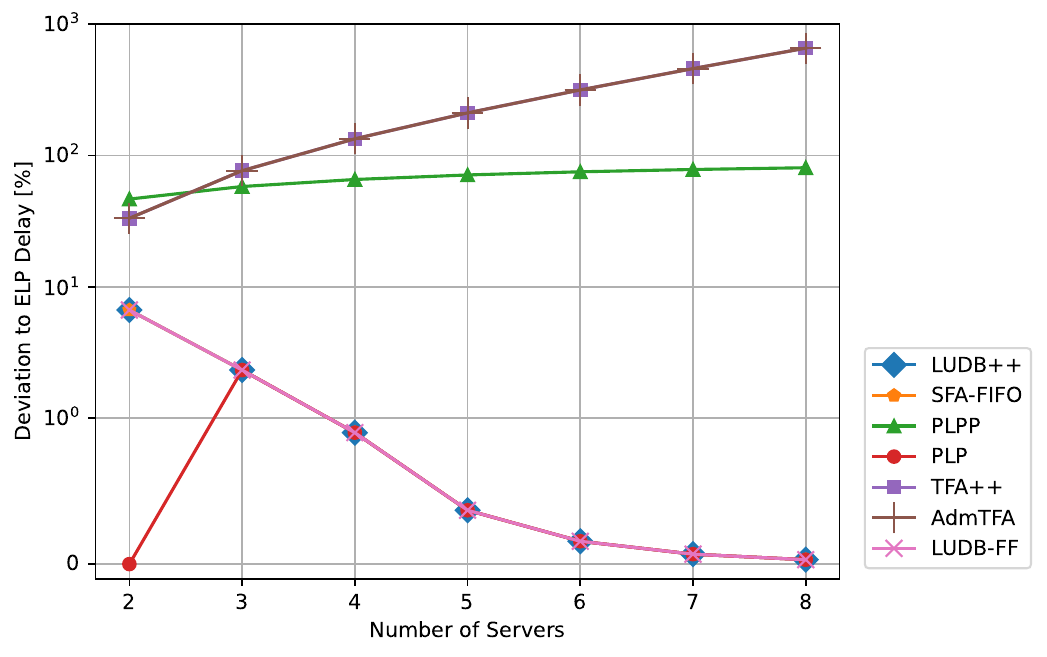} }} 
    \subfloat[$u = 100\%, R'=3R$]{{\includegraphics[width=6cm]{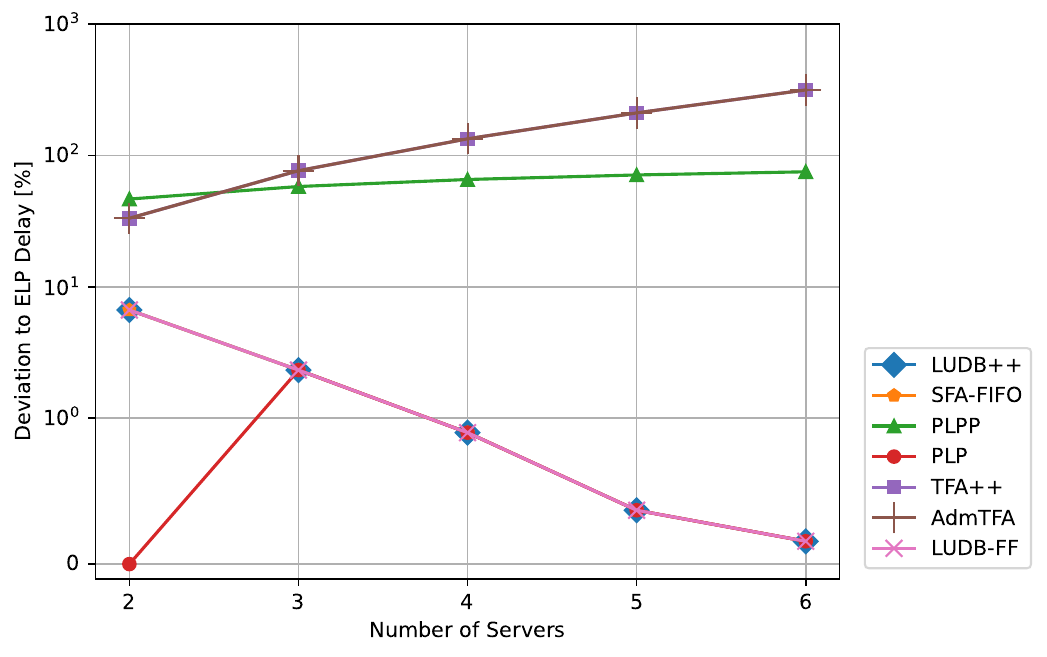} }}
    \caption{Relative delays to ELP for different shaper configurations, different utilizations and different $R'$ to $R$ ratios for one hop persistent tandems.}
    \label{fig:eval_one_hop_persistent}
\end{figure*}

From 63 settings, ELP failed for 5 settings where panco terminated with an error. 
This is the reason why some plots end at 6 or 7 servers instead of 8.
LUDB++ beats ELP for the majority of the settings, namely in 38 settings from 58 settings. 
The minimal, maximal and average deviation of LUDB++ to ELP is $-9.13\%$, $6.66\%$ and $-1.42\%$ respectively.
ELP shows no influence on $R'$ (not shown in the figures) for this topology. 
%In general, ELP is link shaping aware and input shaping non-aware. 
%Same goes for PLP and PLPP and both show no influence of $R'$ for this topology.
Hence, the plots precisely depict the influence of  $R'$ on LUDB++'s delay bounds. 
For a fixed utilization, a decrease of $R'$ yields a larger absolute deviation of LUDB++ to ELP.
This is because a smaller $R'$ has a larger impact on the shaped input arrival curve which in turn decreases the LUDB++ delay bounds.
%and thus enlarging the gap to ELP. 
For the corner case $u=100\%$, note that $R'$ no longer influences LUDB++'s delay bounds such that they coincide with the bounds from LUDB-FF and are worse than ELP.
%Moreover, decreasing the utilization increases the delay gap between LUDB++ and ELP. 
In most cases, increasing the number of servers decreases the gap between ELP and LUDB++ because the shaping effect at the entry of the network that LUDB++ considers in contrast to ELP has a smaller weight. 
As expected, LUDB++ delay bounds are less than or equal to LUDB-FF as the latter does not take shaping into account.
Interestingly, SFA-FIFO  delivers the same bounds as LUDB-FF. 
This can be explained as follows.
There is no cutting for this topology, so LUDB-FF and SFA-FIFO are very similar with the only difference in the optimization of $\theta$ variables -- for this topology and settings they apparently coincide.
PLP delivers the same bounds as LUDB-FF except for 2 servers where it yields a better bound which additionally coincides with ELP. 
%(That PLP=ELP for 2 servers is across all topologies.)
The group which delivers inferior bounds are: AdmTFA, PLPP and TFA++ and the gap to ELP for these methods tends to increase with increasing number of servers.
AdmTFA delivers smaller or equal bounds compared to TFA++ and for small number of servers also than PLPP.

\subsubsection{Sinktree Tandems}
These nested tandems have a foi that crosses all $N$ servers and $N-1$ crossflows.  
Each crossflow starts at a unique position in the tandem (earliest at the second server) and ends in the sink of the tandem.
For an example with $N=3$, refer to Figure \ref{fig:eval_3_servers_sink_tree_tandem}.
The (maximal) utilization for this topology is $u=\frac{N \cdot r}{R}$.
\\
\begin{figure*}[t!]
    \subfloat[$u=50\%, R'=R$]{{\includegraphics[width=6cm]{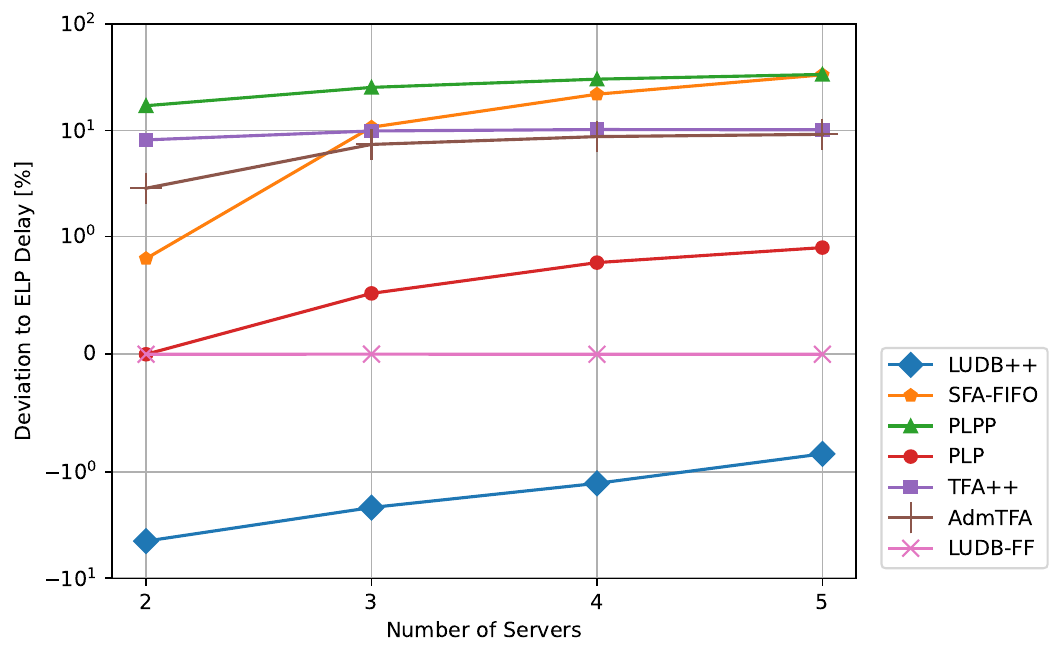} } }
    \subfloat[$u=50\%, R'=2R$]{{\includegraphics[width=6cm]{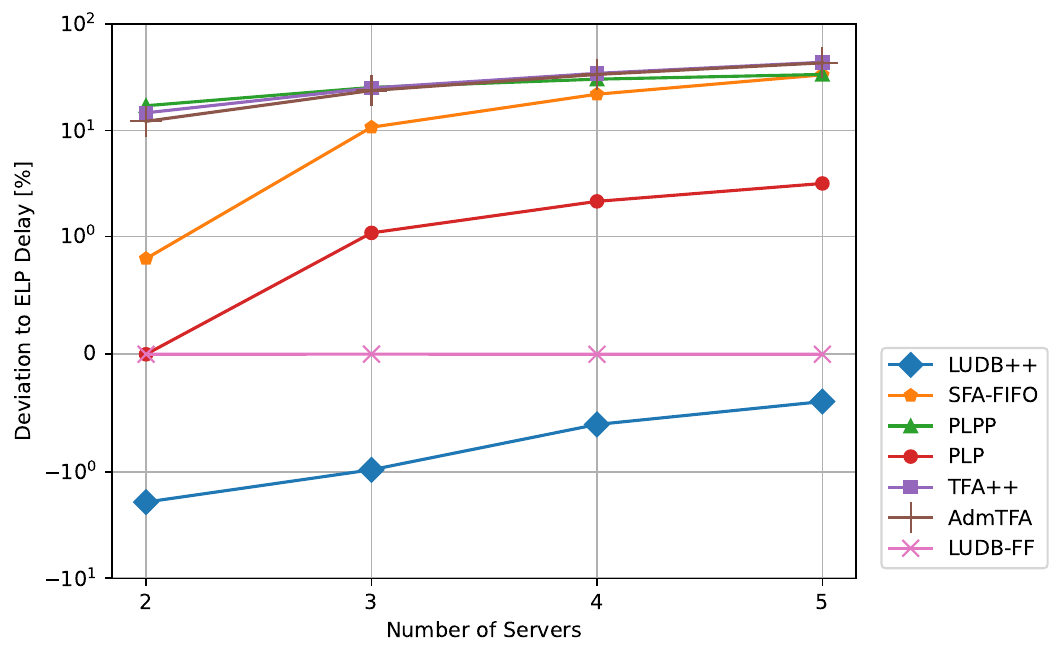} }}
    \subfloat[$u=50\%, R'=3R$]{{\includegraphics[width=6cm]{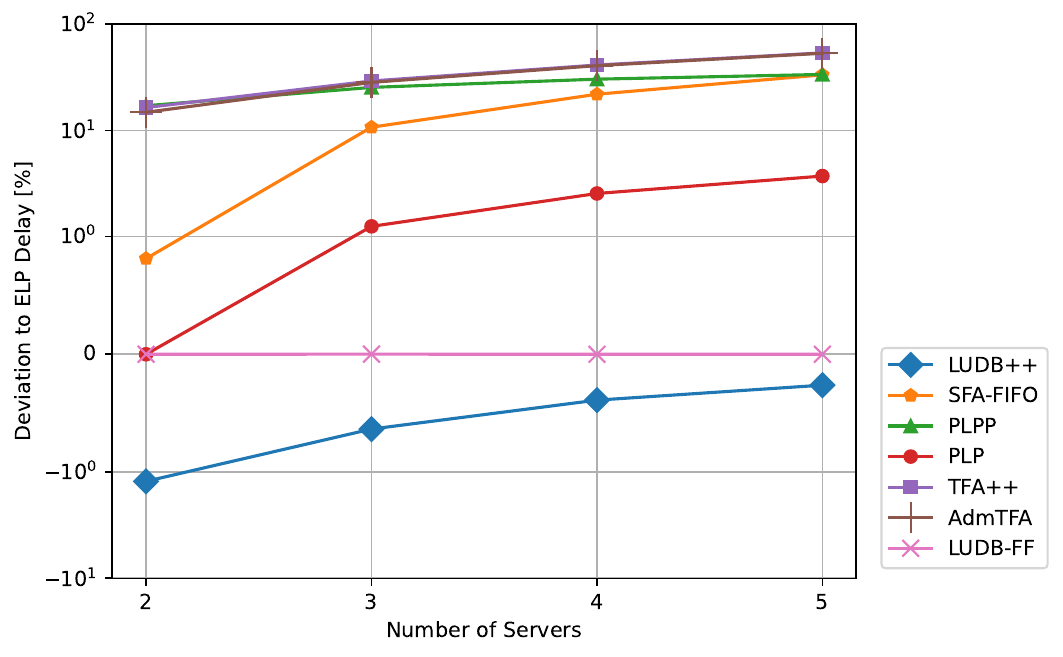} }}
    \\
    \subfloat[$u=75\%, R'=R$]{{\includegraphics[width=6cm]{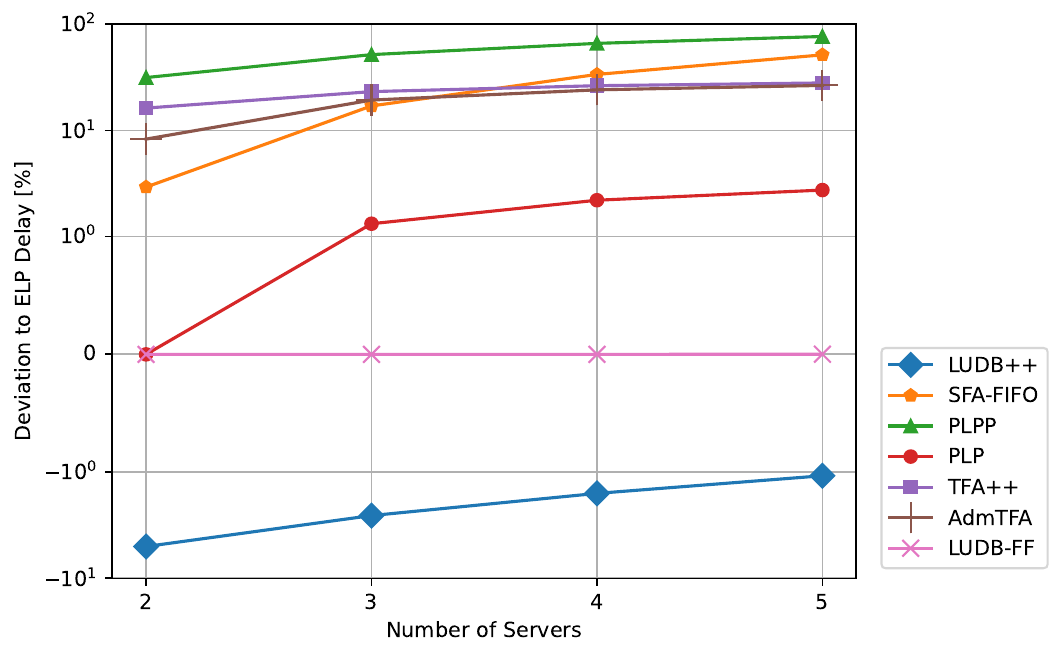} } }
    \subfloat[$u=75\%, R'=2R$]{{\includegraphics[width=6cm]{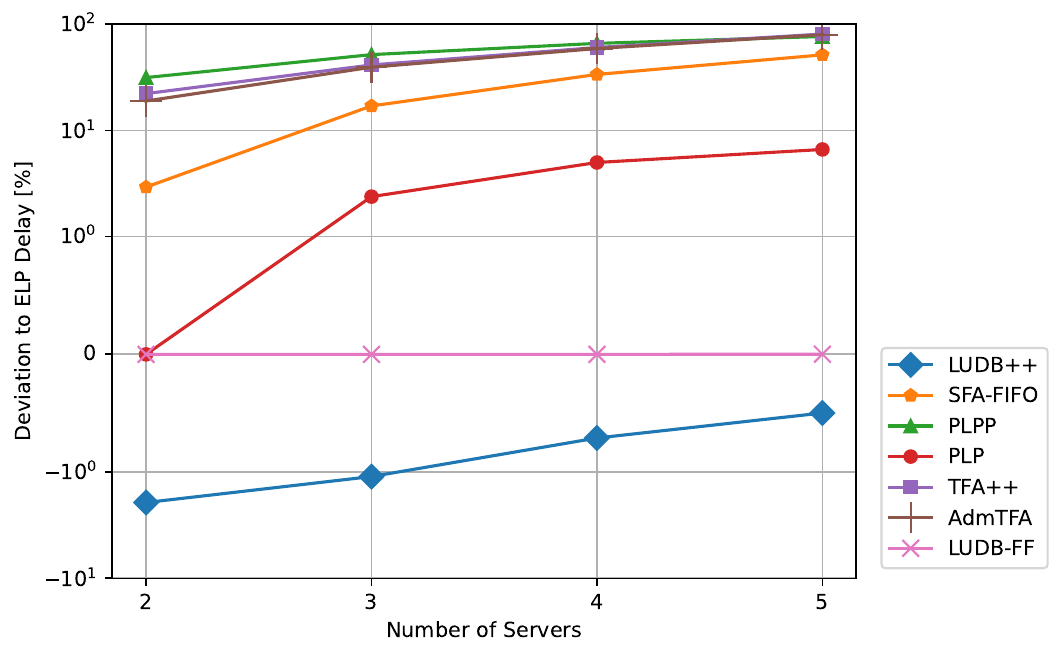} }}
    \subfloat[$u=75\%, R'=3R$]{{\includegraphics[width=6cm]{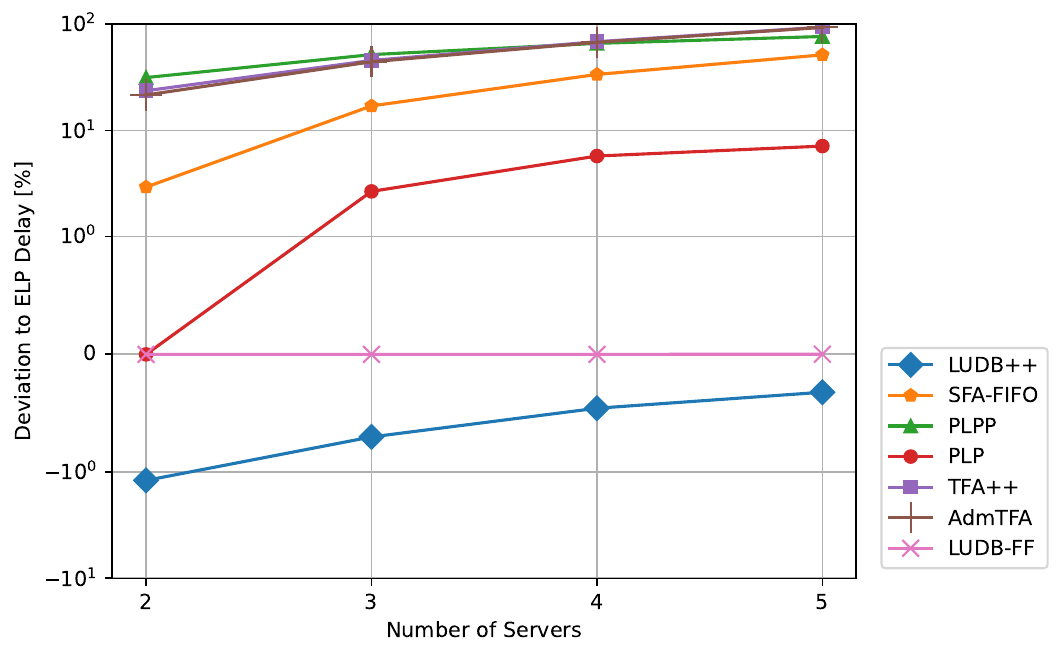} }}
    \\
    \subfloat[$u=100\%, R'=R$]{{\includegraphics[width=6cm]{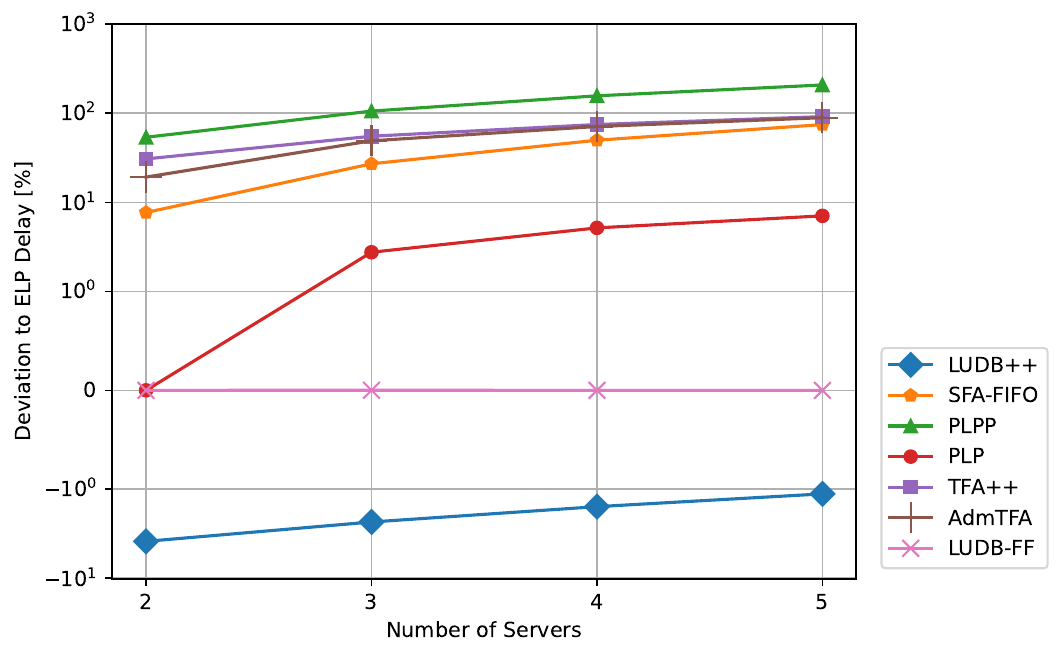} } }
    \subfloat[$u=100\%, R'=2R$]{{\includegraphics[width=6cm]{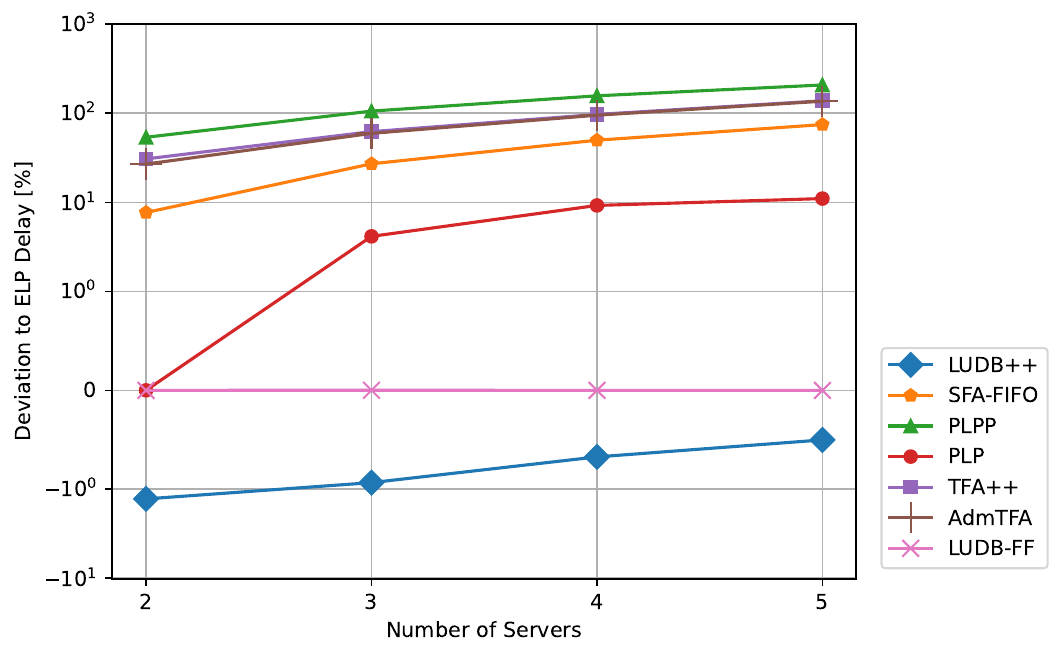} }}
    \subfloat[$u=100\%, R'=3R$]{{\includegraphics[width=6cm]{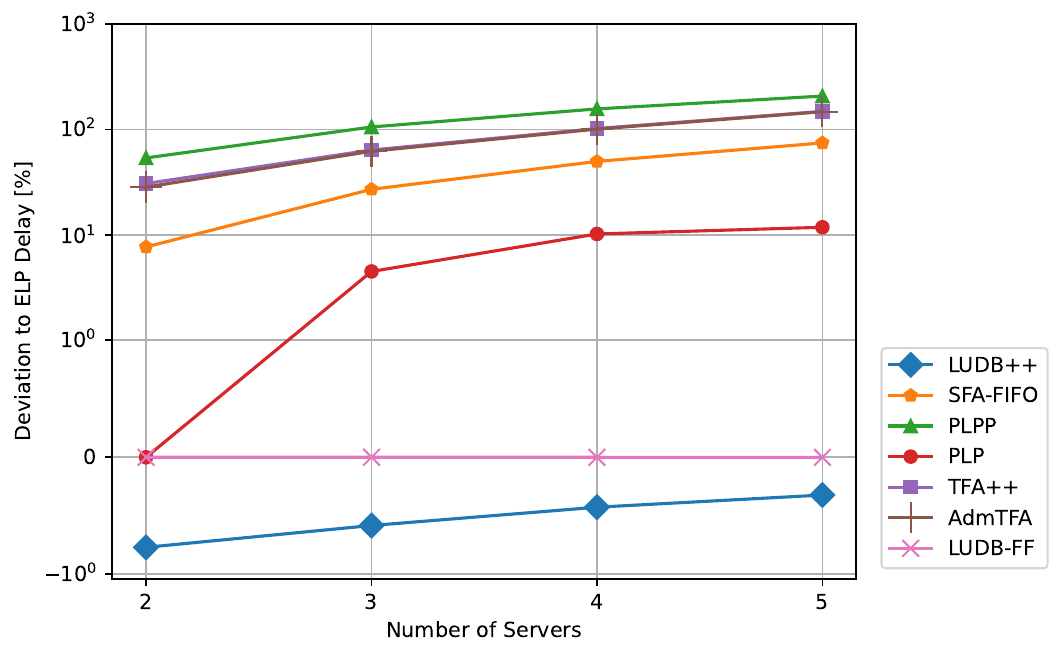} }}
    \caption{Relative delays to ELP for different shaper configurations, different utilizations and different $R'$ to $R$ ratios for sinktree tandems.}
    \label{fig:eval_sink_tree_tandem}
\end{figure*}

LUDB++ beats ELP for all of the 36 settings. 
The minimal, maximal and average deviation of LUDB++ to ELP is $-5.03\%$, $-0.26\%$ and $-1.29\%$ respectively.
ELP shows no influence on $R'$ (not shown in the figures) for this topology. 
%In general, ELP is link shaping aware and input shaping non-aware. 
%The same goes for PLP and PLPP. 
PLPP also shows no influence on $R'$.
PLP is $R'$-sensitive for number of servers at least 3 where an increase of $R'$ increases the delay bound.
For a reasoning on the influence of $R'$ for a fixed utilization on the relative LUDB++ delay bounds, refer to the previous (sub-)Section (one hop tandem topology).
%Because of the previous statement regarding ELP, the plots depict that LUDB++ has an influence on $R'$. 
%For a fixed utilization, a decrease of $R'$ yields a larger absolute deviation of LUDB++ to ELP.  
%A smaller $R'$ has a larger impact on the shaped input arrival curve, decreasing the LUDB++ delay bounds and thus enlarging the gap to ELP. 
	%\item Decreasing the utilization increases the delay gap between LUDB++ and ELP. ?? Does not hold here
In all cases for this topology, increasing the number of servers decreases the gap between ELP and LUDB++.
For the reasoning behind this, refer again to the previous (sub-)Section (one hop tandem topology).
%LUDB++ delay bounds are less than LUDB-FF as expected as the latter does not take shaping into account.
Interestingly, LUDB-FF delivers the same bound as ELP. 
Since ELP does not have an influence on $R'$ here and LUDB-FF is tight for sinktrees (without shaping assumptions), both methods deliver the same bounds\footnote{as formally, for a sinktree topology, LUDB-FF achieves the same bound as the MILP from which the ELP stems from but apparently the latter two coincide for this topology as well.}.
Moreover, PLP delivers the same bound as LUDB-FF for this topology for number of servers 2 and worse bounds for larger number of servers. 
The relative delay gap of PLP to ELP increases with increasing number of servers for this topology.
SFA-FIFO delivers worse bounds compared to PLP and in some scenarios is beaten by TFA++ and AdmTFA, especially for a low $R'$ to $R$ ratio, mid to low utilization and from 3 servers upwards.
PLPP delivers the worst bounds in almost all scenarios for this topology.

\subsubsection{Tree}
As an example for the analysis of general feed-forward networks with this method, the last topology of our evaluation has a tree structure. 
The foi crosses servers on a main branch.
Each crossflow comes from a side branch and joins the foi on the main branch until the last server.
The interference pattern on the main branch is of type sink tree while the pattern on the side branch is of type one hop persistent. 
Figure \ref{fig:3_servers_tree} depicts this topology where the foi crosses 3 servers.
\\ 
\\ 
Note that the side branch objective minimizes for the backlog bound as we are interested in the output arrival curve for the crossflow at the location where it joins the foi.
After computing it, we compute the output arrival curve according to Theorem \ref{theorem:backlog_bound_ouput_ac_simple_ac_plcs_sc} and then shape it with the shaping curve $\gamma_{L,R'}$.

\begin{figure}[H]  
\begin{centering}
  \includegraphics[width=0.5\textwidth]{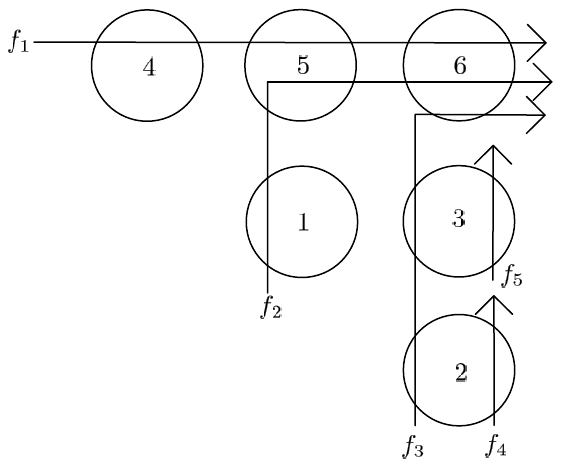}  
  \caption{Tree with $N=3$.}
  \label{fig:3_servers_tree}
\end{centering}
\end{figure}

The (maximal) utilization for this topology is $u=\frac{N \cdot r}{R}$.
\\
\begin{figure*}[t!]
    \subfloat[$u=50\%, R'=R$]{{\includegraphics[width=6cm]{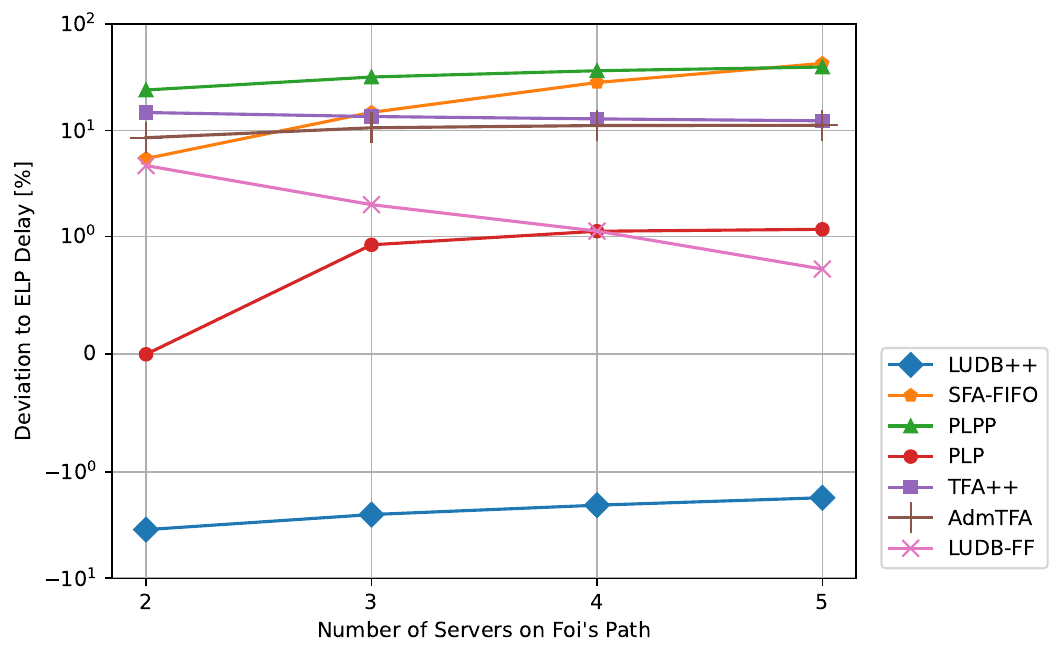} } }
    \subfloat[$u=50\%, R'=2R$]{{\includegraphics[width=6cm]{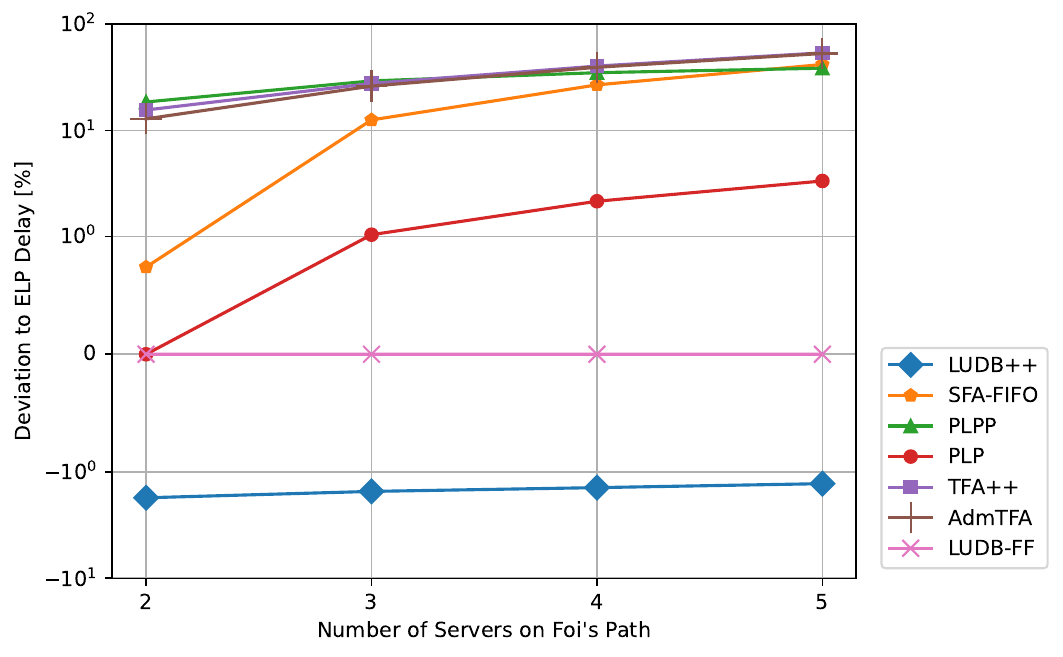} }}
    \subfloat[$u=50\%, R'=3R$]{{\includegraphics[width=6cm]{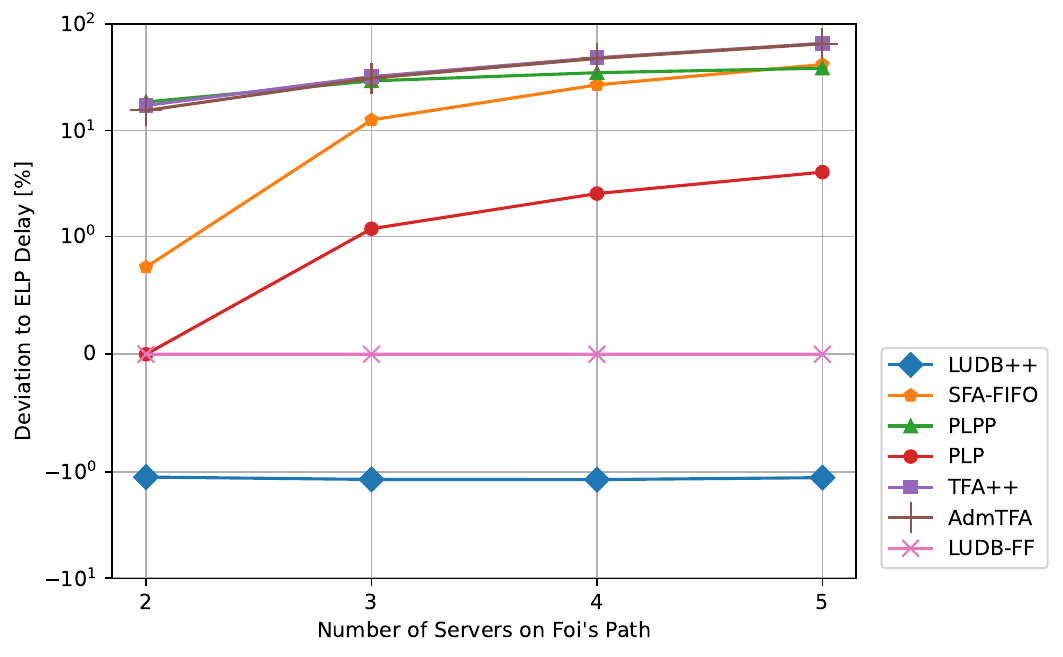} }}
    \\
    \subfloat[$u=75\%, R'=R$]{{\includegraphics[width=6cm]{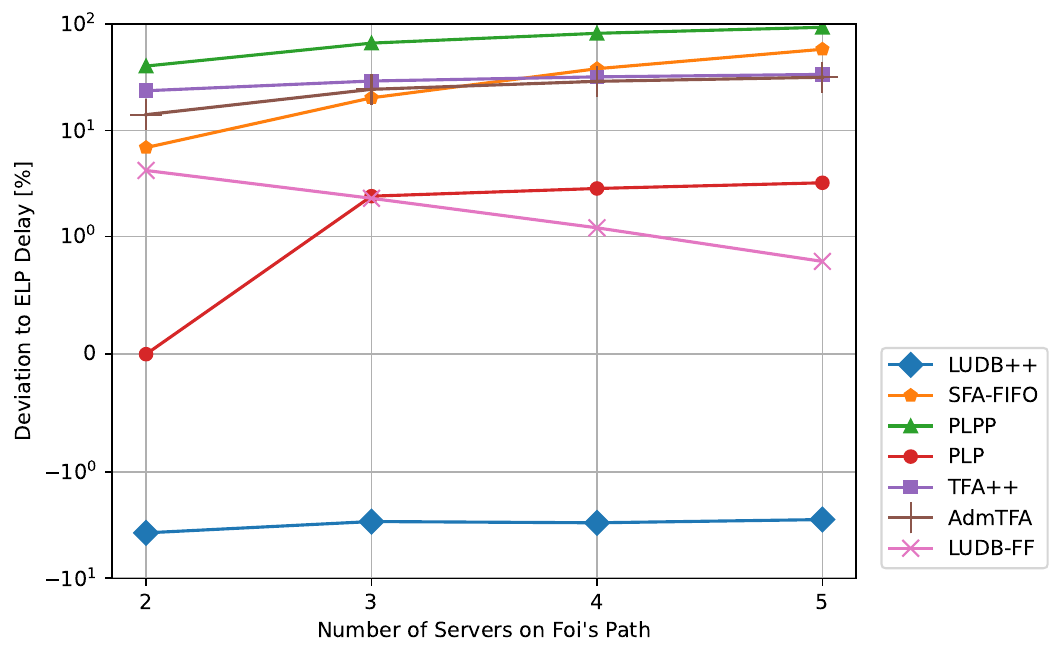} } }
    \subfloat[$u=75\%, R'=2R$]{{\includegraphics[width=6cm]{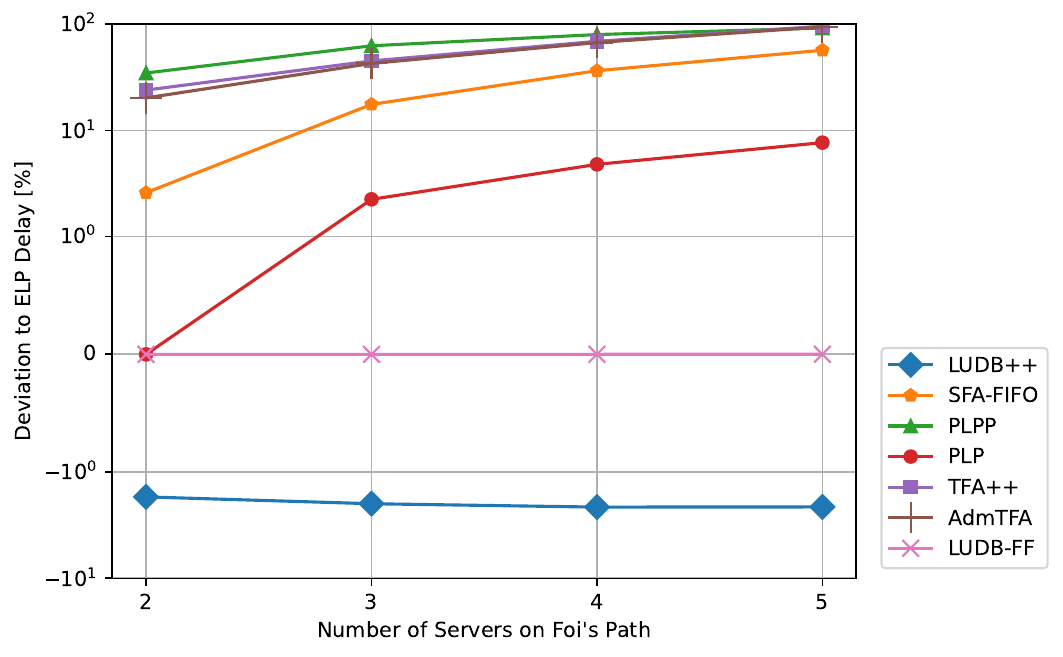} }}
    \subfloat[$u=75\%, R'=3R$]{{\includegraphics[width=6cm]{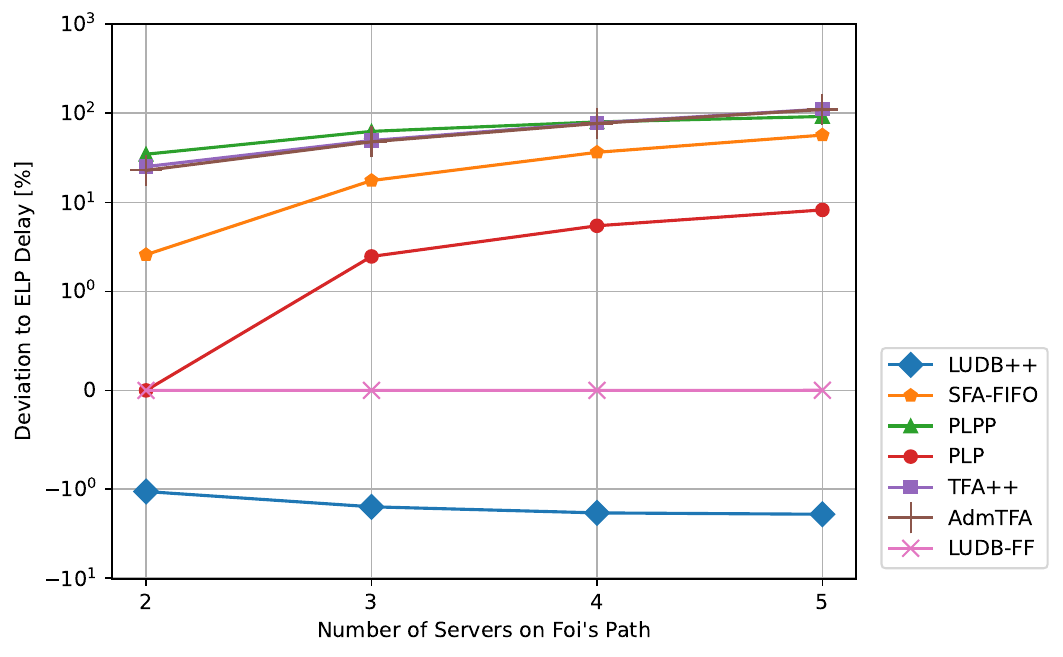} }}
    \\
    \subfloat[$u=100\%, R'=R$]{{\includegraphics[width=6cm]{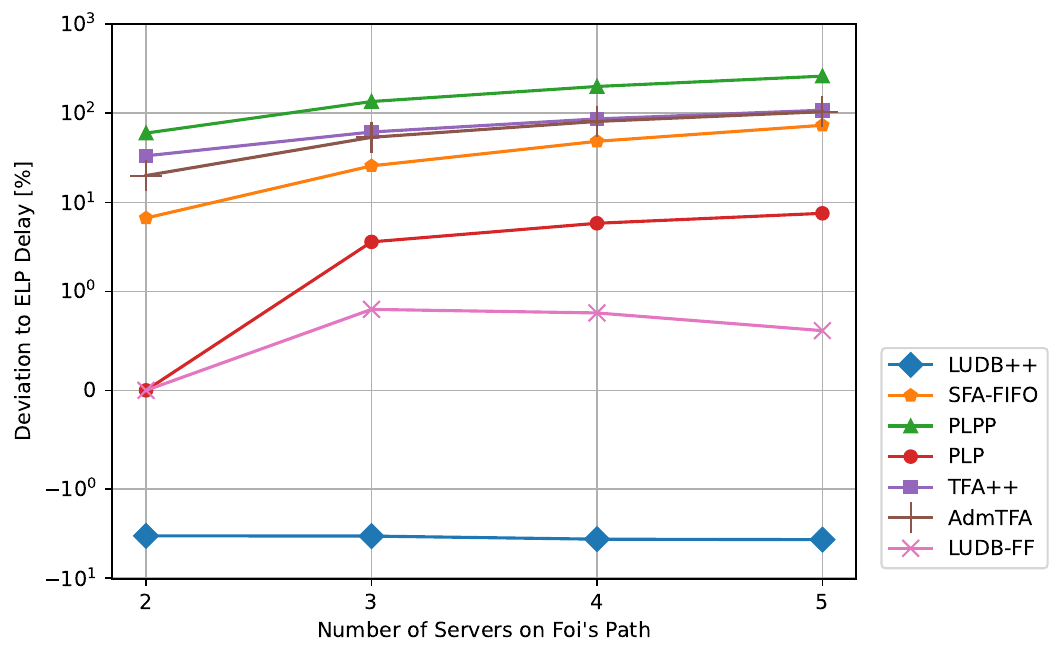} } }
    \subfloat[$u=100\%, R'=2R$]{{\includegraphics[width=6cm]{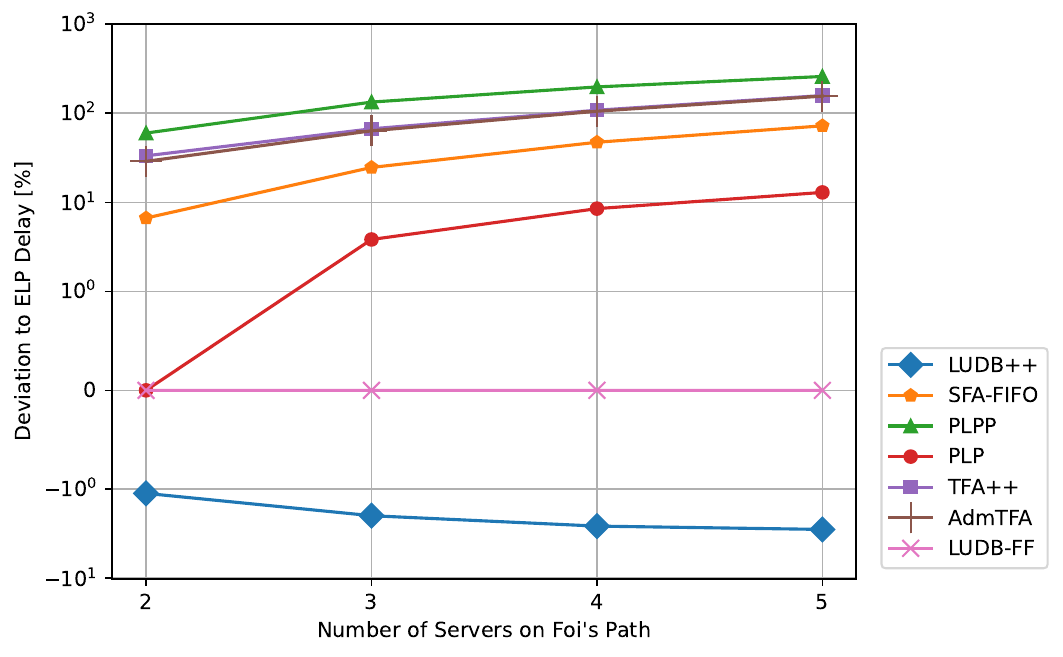} }}
    \subfloat[$u=100\%, R'=3R$]{{\includegraphics[width=6cm]{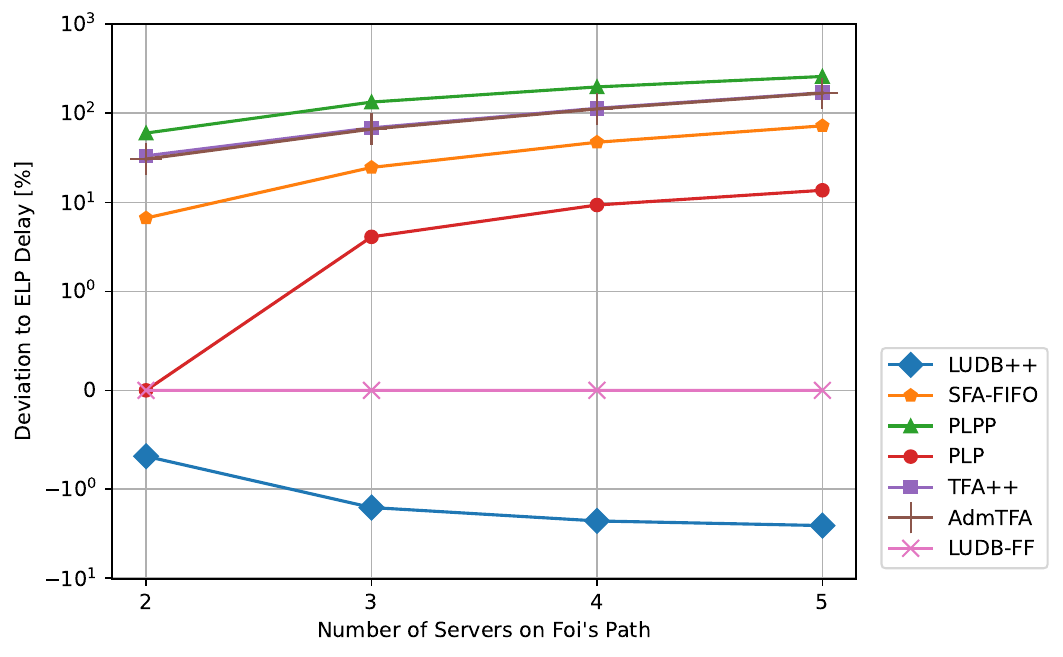} }}
    \caption{Relative delays to ELP for different shaper configurations, different utilizations and different $R'$ to $R$ ratios for the specified tree topology.}
    \label{fig:eval_tree}
\end{figure*}
LUDB++ beats ELP for all of the 36 settings. 
The minimal, maximal and average deviation of LUDB++ to ELP is $-3.74\%$, $-0.67\%$ and $-2.13\%$ respectively.
Here, ELP shows an influence on $R'$ (not shown in the figures) for this topology. 
%In general, ELP is link shaping aware and input shaping non-aware. 
%The same goes for PLP and PLPP. PLPP shows no influence on $R'$, but PLP does (not shown in the figures). For ELP and PLP an increase of $R'$ increases the delay bound in almost all scenarios (not shown in the figures).
For a fixed utilization, a decrease of $R'$ still tends to increase the absolute deviation of LUDB++ to ELP indicating that LUDB++ is more sensitive to a change of $R'$ than ELP.  
%In general, a smaller $R'$ has a larger impact on the shaped input arrival curve, decreasing the LUDB++ delay bounds and although ELP here also reacts to this change,  LUDB++ reacts more strongly, thus enlarging the gap to ELP. 
	%\item For a fixed $R'$ to $R$ ratio, decreasing the utilization tends to increases the delay gap between LUDB++ and ELP (not all cases but most of them). NO this is not true upon inspecting the data in detail.
%LUDB++ delay bounds are less than LUDB-FF as expected as the latter does not take shaping into account.
Interestingly, LUDB-FF delivers the same bound as ELP in 6 of the 9 settings for this topology. 
For $R'=R$, LUDB-FF delivers worse bounds (except the 2 server scenario in (g)) and the gap gets smaller by increasing the number of servers.
PLP delivers worse bounds compared to LUDB-FF except for $R'=R$ paired with lower than $100\%$ utilization and low number of servers (2 to 3). 
The gap of PLP to ELP increases with increasing number of servers for this topology.
SFA-FIFO delivers worse bounds compared to PLP and in some scenarios is beaten by TFA++ and AdmTFA, especially for  $R'=R$, mid to low utilization and from 3 servers upwards. 
For utilization of $50\%$, SFA-FIFO is beaten by PLPP for larger networks (5 servers) although the latter performs the worst compared to all other methods for the other scenarios.

\subsubsection{Runtimes}
\begin{figure*}[t!]
    \subfloat[One hop persistent tandems (ELP non-nan adjusted)]{{\includegraphics[width=6cm]{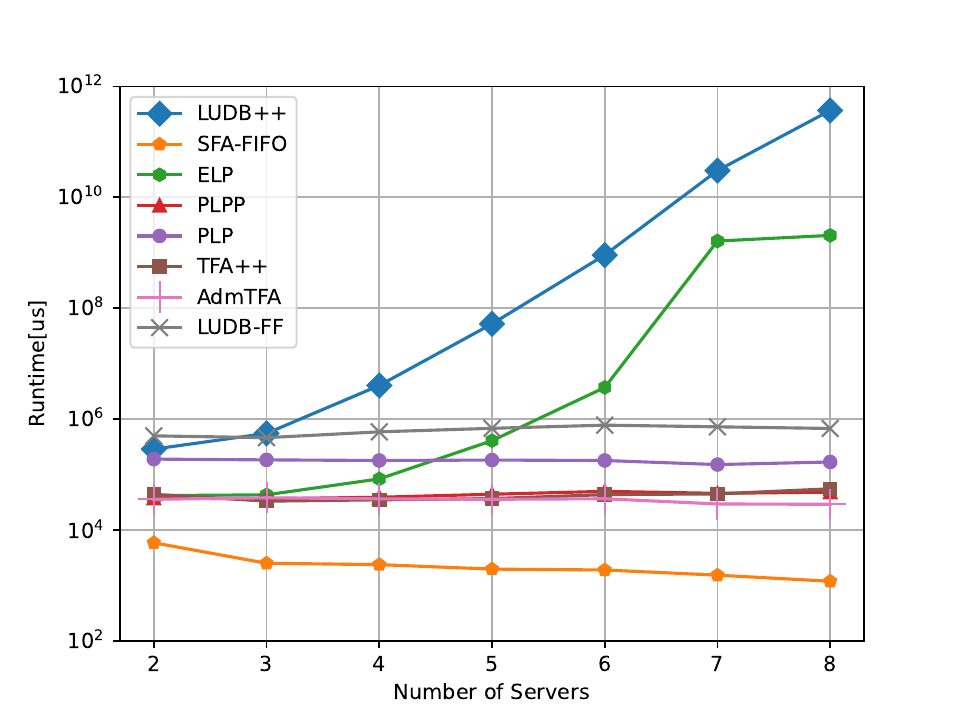} } }
    \subfloat[Sinktree tandems]{{\includegraphics[width=6cm]{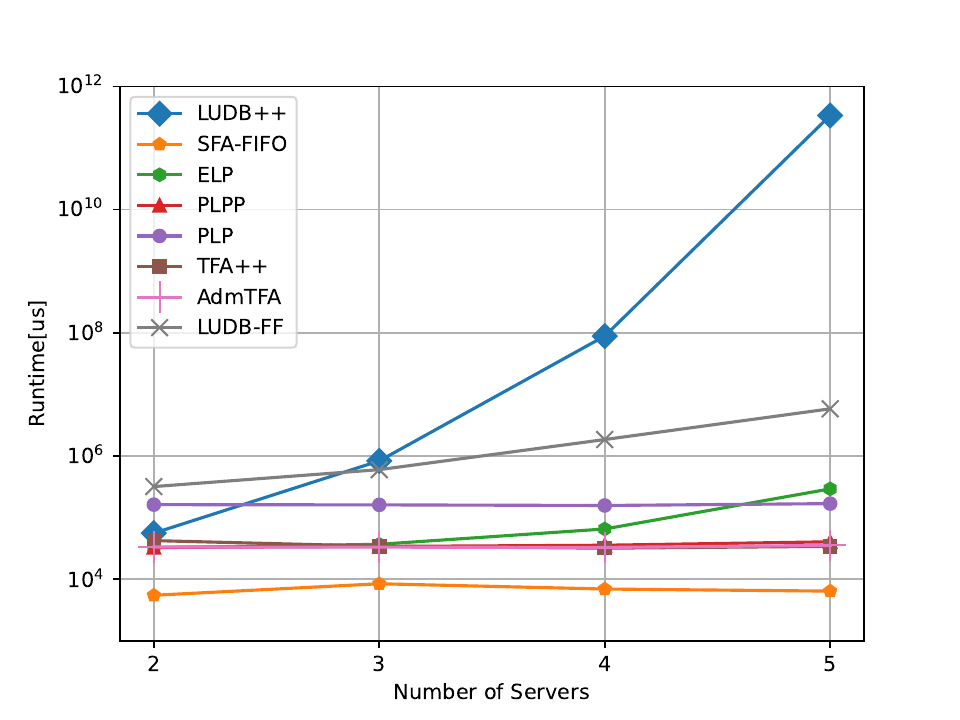} }}
     \subfloat[Tree]{{\includegraphics[width=6cm]{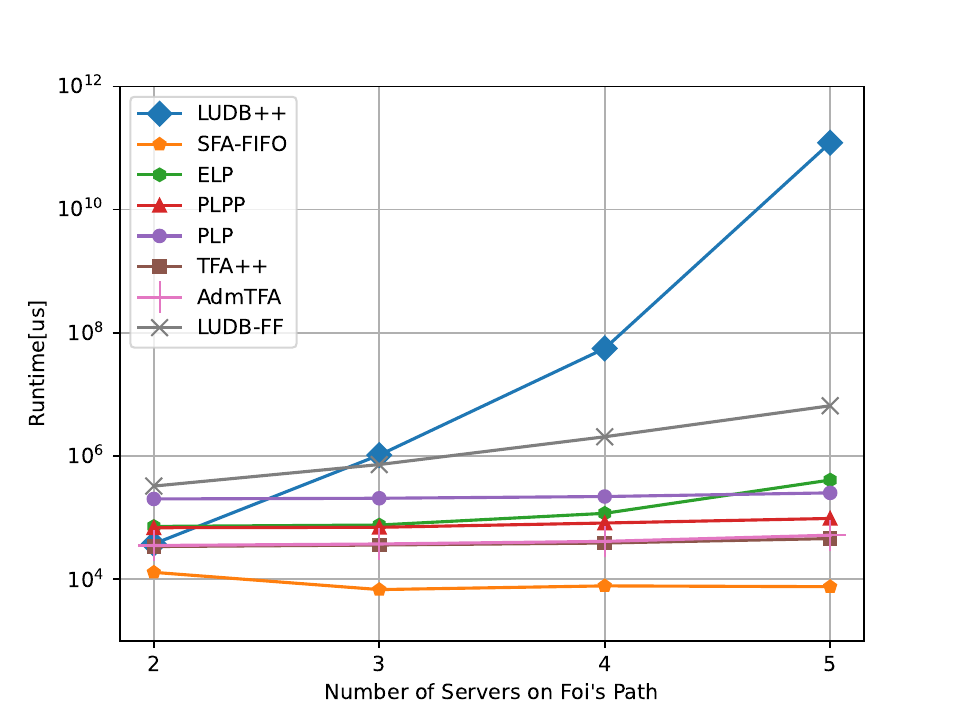} }}
    \caption{Aggregate runtimes per topology.}
    \label{fig:eval_runtimes_all}
\end{figure*}
In the following, we discuss the aggregate runtimes which means that for a fixed topology and server count up to 9 runs of a specific analysis such as LUDB++ is summed up.
For at least 3 servers, LUDB++ takes the longest compared to all other analyses.
Its runtime ranges from 0.29 s for 2 servers to 4.26 d for 8 servers regarding the one-hop persistent topology. 
Its runtimes for sinktree tandem (tree) topology range from 0.05 s (0.03 s)  for 2 servers to 3.91 d (1.41 d) for 5 servers.
Next is LUDB-FF with a runtime ranging from 0.5 s for 2 servers to 0.68 s for 8 servers regarding the one-hop persistent topology. % achieving an almost constant runtime for the one-hop persistent topology. 
The runtimes for sinktree tandem (tree) topology range from 0.32 s (0.32 s) for 2 servers to 5.83 s (6.54 s) for 5 servers.
That LUDB++'s runtime is larger than LUDB-FF is due to the considerably more complex network model which incorporates shaping assumptions and thus achieves better delay bounds.
ELP has a runtime ranging from 0.04 s for 2 servers to 34.22 min for 8 servers. 
Thus, for at least 5 servers upwards ELP takes longer than LUDB-FF for the one-hop persistent topology. 
Runtimes of ELP for sinktree tandem (tree) topology range from 0.03 s (0.07 s) for 2 servers to 0.29 s (0.41 s) for 5 servers.
The other methods are all below 1s with PLP being the most expensive one since it internally calls TFA++ and SFA and adds respective delay constraints into the optimization formulation before solving. 
PLPP's optimization problem is as PLP just without the TFA++ and SFA constraints and is thus faster but comes with a higher inaccuracy. 
SFA-FIFO is the fastest method among all for which no optimization is required. %, and can be described as a simple hop-by-hop analysis that does not take shaping into account at all. 

\section{Conclusion}\label{sec:conclusion}
This paper discusses formal delay bound guarantees in feedforward FIFO networks with shapers that limit the maximum packet size and rate of flows.
Our work extends LUDB, which is FIFO-aware but does not take shaping into account, is called LUDB++ and takes both aspects into consideration.
Hence, LUDB++ delivers better delay bounds than LUDB except in edge cases where both methods yield the same result.
ELP is another existing method that also considers both aspects in its analysis and encodes it into one exponential LP.
Our evaluation shows that LUDB++ delivers better delay bounds compared to ELP for the majority of the cases by up to 9.13$\%$.
However, these delay bounds come with a computational price. 
Although LUDB++ splits up the feedforward network into nested (sub-)tandems, solving these tandems can quickly get computationally expensive depending on the topology and number of crossflows.
This is especially the case for sinktree tandems where our evaluation shows that for 5 servers and a total of 9 configurations the total runtime already grows to 3.91 d.
Future work will focus on improving the scalability of the approach by e.g., approximation techniques.
One such technique could be to make an educated guess on the decomposition(s) during the analysis -- from the 4 possible ones per leftover operation, only consider $1 \leq k < 4$ ones.
AI could also be combined with this as it has been successfully combined with NC in the past in other contexts such as flow prolongation \cite{2023-GSB} for which a choice within the analysis is AI-driven instead of brute-forcing all alternatives. 
Moreover, LUDB++ computes for a vector of variables that each stem from an application of the FIFO leftover service theorem a setting that strictly minimizes the delay or backlog bound.
While preserving non-negativity, the variables can be set arbitrarily, so approximated solutions are valid as well and have been applied in the past to network models without shaping assumptions \cite{scheffler22rtns} \cite{scheffler22itc}.  
It might also be worthwhile to apply parallelization on the decompositions on the nested (sub-)tandem level within LUDB++ which is one analysis-level deeper than parallelization of the splitting the network into tandems process discussed in \cite{2018-SFB-1}.

\section*{Acknowledgments}
The author thanks Anne Bouillard for her help with the derivation of Equation \eqref{eq:plcs_n_1}.

{\appendix

We now present the proofs of the Theorems of Section \ref{sec:ludb_plus_plus}.

\begin{proof}[Proof of Lemma \ref{lemma:rl_in_mslc}]
$\beta=\beta_{R,T}=\delta_T \otimes ((\delta_0 \otimes I_0 \otimes \gamma_{0,R})\wedge \delta_0) \in$ \curveclass.
\end{proof}

\begin{lemma}[Convolution of \curveclass~with $n=1$  ]\label{lemma:mslc_n_1_conv}
	Let $\beta_1, \beta_2 \in$ \curveclass~with $\beta_1 =\delta_{D_1} \otimes ((\delta_{\tau_1} \otimes I_{\sigma_1} \otimes \gamma_{0,\rho_1})\wedge \delta_0)$ and $\beta_2 =\delta_{D_2} \otimes ((\delta_{\tau_2} \otimes I_{\sigma_2} \otimes \gamma_{0,\rho_2})\wedge \delta_0)$.
	Then $\beta_1 \otimes \beta_2 \in$ \curveclass.  
\end{lemma}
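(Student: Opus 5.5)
The plan is to work entirely at the level of min-plus identities, using the commutativity and associativity of $\otimes$ and its distributivity over $\wedge$, rather than evaluating infima pointwise. Write $g_j := \delta_{\tau_j}\otimes I_{\sigma_j}\otimes\gamma_{0,\rho_j}$ for $j=1,2$, so that $\beta_j=\delta_{D_j}\otimes(g_j\wedge\delta_0)$.

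\emph{Step 1 (separate the offsets).} By commutativity and associativity,
\[
\beta_1\otimes\beta_2=(\delta_{D_1}\otimes\delta_{D_2})\otimes\big((g_1\wedge\delta_0)\otimes(g_2\wedge\delta_0)\big).
\]
Moreover $\delta_{D_1}\otimes\delta_{D_2}=\delta_{D_1+D_2}$. This already produces the common offset $D:=D_1+D_2$ required by Definition \ref{def:mslc}.

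\emph{Step 2 (expand the bracket).} Distributivity of $\otimes$ over $\wedge$ gives
\[
(g_1\wedge\delta_0)\otimes(g_2\wedge\delta_0)=(g_1\otimes g_2)\wedge(g_1\otimes\delta_0)\wedge(\delta_0\otimes g_2)\wedge(\delta_0\otimes\delta_0).
\]
Since $\delta_0$ is neutral for $\otimes$, this equals $(g_1\otimes g_2)\wedge g_1\wedge g_2\wedge\delta_0$. The terms $g_1$ and $g_2$ are already of the admissible step--linear form.

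\emph{Step 3 (rewrite the cross term).} Regroup
\[
g_1\otimes g_2=(\delta_{\tau_1}\otimes\delta_{\tau_2})\otimes(I_{\sigma_1}\otimes I_{\sigma_2})\otimes(\gamma_{0,\rho_1}\otimes\gamma_{0,\rho_2}).
\]
We have $\delta_{\tau_1}\otimes\delta_{\tau_2}=\delta_{\tau_1+\tau_2}$. Also $I_{\sigma_1}\otimes I_{\sigma_2}=I_{\sigma_1+\sigma_2}$: the infimum at $t=0$ forces $s=u=0$, and for $t>0$ one of the two arguments is positive, giving $+\infty$. Finally $\gamma_{0,\rho_1}\otimes\gamma_{0,\rho_2}=\gamma_{0,\rho_1}\wedge\gamma_{0,\rho_2}$, by the standard fact that the convolution of concave functions vanishing at $0$ is their minimum. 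Distributing $\otimes$ over this $\wedge$ once more yields
\[
g_1\otimes g_2=(\delta_{\tau_1+\tau_2}\otimes I_{\sigma_1+\sigma_2}\otimes\gamma_{0,\rho_1})\wedge(\delta_{\tau_1+\tau_2}\otimes I_{\sigma_1+\sigma_2}\otimes\gamma_{0,\rho_2}).
\]

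\emph{Step 4 (conclude).} Since $\delta_0\wedge\delta_0=\delta_0$, the four step--linear terms can each carry their own ``$\wedge\,\delta_0$''. Hence
\[
\beta_1\otimes\beta_2=\delta_{D_1+D_2}\otimes\bigwedge_{i=1}^{4}\big((\delta_{\hat\tau_i}\otimes I_{\hat\sigma_i}\otimes\gamma_{0,\hat\rho_i})\wedge\delta_0\big),
\]
with parameters
\[
(\hat\tau_i,\hat\sigma_i,\hat\rho_i)\in\{(\tau_1,\sigma_1,\rho_1),\ (\tau_2,\sigma_2,\rho_2),\ (\tau_1+\tau_2,\sigma_1+\sigma_2,\rho_1),\ (\tau_1+\tau_2,\sigma_1+\sigma_2,\rho_2)\}.
\]
This lies in \curveclass\ with $n=4$, and it matches the four-term structure visible in the example of Section \ref{sec:ludb_plus_plus}.

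\emph{Main obstacle.} The delicate point is justifying $\gamma_{0,\rho_1}\otimes\gamma_{0,\rho_2}=\gamma_{0,\rho_1}\wedge\gamma_{0,\rho_2}$ in the degenerate case $\rho_j=+\infty$, and checking that the window and burst-delay functions behave correctly at $t=0$ under the conventions of $\F_0$.

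\begin{itemize}
\item When $\rho_j=+\infty$, I read $\gamma_{0,+\infty}$ as $\delta_0$ (value $0$ at $0$, $+\infty$ afterwards). The identity then reduces to neutrality of $\delta_0$, since $\gamma\otimes\delta_0=\gamma=\gamma\wedge\delta_0$ for $\gamma\in\F_0$.
\item For finite rates, I would verify the identity directly. With $t-u\ge0$ and $u\ge0$, $\inf_{0\le u\le t}\{\rho_1(t-u)+\rho_2 u\}$ is linear in $u$, so it is attained at an endpoint, giving $\min(\rho_1,\rho_2)\,t$.
\item As a sanity check, I would evaluate $(\delta_\tau\otimes I_\sigma\otimes\gamma_{0,\rho})\wedge\delta_0$ explicitly as $(\sigma+\rho[t-\tau]^+)\mathbf{1}_{\{t>0\}}$. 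This confirms that no spurious values at $t=0$ arise from the min-plus manipulations.
\end{itemize}
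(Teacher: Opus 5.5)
Your argument is correct and reaches the same four-term representation as the paper, but by a different route.

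\textbf{How the paper argues.} It evaluates $f\otimes g(t)=\inf_{r+s=t}\{f(r)+g(s)\}$ pointwise. It splits into the cases $t=0$, $0<t\le\max\{\tau_1,\tau_2\}$ and $t>\max\{\tau_1,\tau_2\}$, with subcases $t\le\tau_1+\tau_2$ and $t>\tau_1+\tau_2$. From this it obtains $f\wedge g\wedge(f(\cdot-\tau_2)+\sigma_2)\wedge(g(\cdot-\tau_1)+\sigma_1)$. Only then does it recognize the last two terms as the step--linear curves with parameters $(\tau_1+\tau_2,\sigma_1+\sigma_2,\rho_1)$ and $(\tau_1+\tau_2,\sigma_1+\sigma_2,\rho_2)$.

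\textbf{How you argue.} You factor each step as $\delta_\tau\otimes I_\sigma\otimes\gamma_{0,\rho}$. The four terms then come out of bilinear expansion over $\wedge$, neutrality of $\delta_0$, $I_{\sigma_1}\otimes I_{\sigma_2}=I_{\sigma_1+\sigma_2}$, and $\gamma_{0,\rho_1}\otimes\gamma_{0,\rho_2}=\gamma_{0,\rho_1}\wedge\gamma_{0,\rho_2}$.

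\textbf{What each route buys.} Your version is shorter and shows structurally why exactly four terms arise. It also avoids the case analysis, where the paper has small slips. For example, Subcase 2.1 asserts $g(s)=\sigma_2$ although only $g(s)\ge\sigma_2$ holds; the weaker inequality happens to suffice. You also treat $\rho_j=+\infty$ explicitly via $\gamma_{0,+\infty}=\delta_0$, whereas the paper covers it with a one-line closing remark. In return, the paper's pointwise analysis shows which terms are active on which interval. For instance, on $(0,\max\{\tau_1,\tau_2\}]$ only $f\wedge g$ matters. That is useful intuition for pruning redundant steps, but it is not needed for membership in \curveclass.

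\textbf{One thing to add.} State explicitly that the identities you rely on hold beyond $\F_0$: commutativity, associativity, distributivity of $\otimes$ over $\wedge$, and neutrality of $\delta_0$ all hold for arbitrary nondecreasing functions with values in $[0,+\infty]$. You need this because your intermediate factors $g_j$ satisfy $g_j(0)=\sigma_j\neq 0$. This is standard, and the paper relies on the same identities in the proof of Theorem~\ref{theorem:mslc_closed_under_conv}.
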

\begin{proof}
Due to commutativity of $\otimes$, we can write $\beta_1 \otimes \beta_2 = \delta_{D_1+D_2} \otimes ((\delta_{\tau_1} \otimes I_{\sigma_1} \otimes \gamma_{0,\rho_1})\wedge \delta_0) \otimes ((\delta_{\tau_2} \otimes I_{\sigma_2} \otimes \gamma_{0,\rho_2})\wedge \delta_0)$. 
Since $\beta_{D_1+D_2}$ is just shifting the function by $D_1+D_2$ units to the right, we focus on the expression $[(\delta_{\tau_1} \otimes I_{\sigma_1} \otimes \gamma_{0,\rho_1})\wedge \delta_{0}] \otimes [(\delta_{\tau_2} \otimes I_{\sigma_2} \otimes \gamma_{0,\rho_2})\wedge \delta_0)]=:f \otimes g (t) = \inf_{\substack{r,s\geq 0 \\ r+s=t}} \{f(r)+g(s)\} $. \\
Case 1 ($t=0$): $f(0)\wedge g(0) = 0 = f(0)+g(0).$\\\\
Case 2 ($0 < t \leq \max\{ \tau_1, \tau_2\}$): 
	$\inf_{s=0}\{f(t)+g(0)\} \wedge \inf_{t=s}\{f(0)+g(t)\}\wedge \inf_{0<s<t}\{f(t-s)+g(s)\} = f(t) \wedge g(t) \wedge \inf_{0<s<t}\{f(t-s)+g(s)\}$. \\
Now, we evaluate  $\inf\limits_{0<s< t \leq \max\{ \tau_1, \tau_2\}} \{f(t-s)+g(s)\}$.
First of all, note that $g(s) \geq \sigma_2$ and $f(t-s)\geq \sigma_1$. 
We  have to show that $f(t-s)+g(s) \geq f(t) \wedge g(t)$ for $t \geq \max\{\tau_1, \tau_2\}$ and $t > 0$. \\
$g(s)= \begin{cases}
		\sigma_2  & \text{if } 0 < s \leq \tau_2 \\
        \sigma_2+\rho_2(s-\tau_2) & \text{if } s>\tau_2, s < t \text{ which implies } \\
        						  &	 \max\{\tau_1, \tau_2\}=\tau_1
    \end{cases}$ 
 \\
$f(t-s)= \begin{cases}
		\sigma_1  & \text{if } 0 < t-s \leq \tau_1 \\
        \sigma_1+\rho_1(t-s-\tau_1) & \text{if } t-s>\tau_1,  \\
        							&t-s < \max\{\tau_1, \tau_2\} \text{ which }\\
        							& \text{ implies } \max\{\tau_1, \tau_2\}=\tau_2
    \end{cases}$ 
    \\
Subcase 2.1 ($\max\{\tau_1, \tau_2\}=\tau_1$): $f(t-s)=\sigma_1, g(s)=\sigma_2$ and $f(t)\wedge g(t) = \sigma_1 \wedge (\sigma_2 + \rho_2 (t-\tau_2)), f(t-s)+g(s)=\sigma_1 + \sigma_2 \geq \sigma_1 \wedge (\sigma_2 + \rho_2 (t-\tau_2))$. \\
Subcase 2.2 ($\max\{\tau_1, \tau_2\}=\tau_2$): $f(t-s)=\sigma_1, g(s)=\sigma_2$ and $f(t)\wedge g(t) =  (\sigma_1 + \rho_1 (t-\tau_1)) \wedge \sigma_2 , f(t-s)+g(s)=\sigma_1 + \sigma_2 \geq (\sigma_1 + \rho_1 (t-\tau_1)) \wedge \sigma_2$. \\
Hence, for this case $f \otimes g (t) = f(t)\wedge g(t)$. \\\\
Case 3 ($t > \max\{ \tau_1, \tau_2\} >0$): As before, we have
$ f \otimes g (t) = f(t) \wedge g(t) \wedge \inf_{0<s<t}\{f(t-s)+g(s)\}$. 
Since $t > \max\{ \tau_1, \tau_2\}$, $f(t) = \sigma_1 + \rho_1 (t-\tau_1)$ and $g(t) = \sigma_2 + \rho_2 (t- \tau_2)$. 
Moreover, note that $f(t-s) \geq \sigma_1$ and $g(s) \geq \sigma_2$ for $0 < s < t$. \\
Subcase 3.1 ($t \leq \tau_1 + \tau_2$): $\inf_{0<s<t}\{f(t-s)+g(s)\}=\sigma_1 + \sigma_2$ \\
Subcase 3.2 ($t > \tau_1 + \tau_2$): $\inf_{0<s<t}\{f(t-s)+g(s)\}=(\sigma_1 + (t-\tau_2-\tau_1)\rho_1 + \sigma_2)\wedge(\sigma_2+(t-\tau_1-\tau_2)\rho_2 +\sigma_1)$. \\
Hence, for this case we get 
$f \otimes g (t) = \begin{cases}
	f(t) \wedge g(t) \wedge (\sigma_1 + \sigma_2)  & \text{if } 0 < \max\{ \tau_1, \tau_2 \} < t \\
													& \text{and }t \leq \tau_1 + \tau_2 \\
	f(t) \wedge g(t) \wedge [f(t-\tau_2)+\sigma_2] & \text{if } 0 <  \tau_1 + \tau_2 < t  \\
	\wedge [g(t-\tau_1)+\sigma_1]  
\end{cases}$
\\\\
Finally, note that we can express $f \otimes g$ as minimum of 4 curves each of type $((\delta_{\tau_k} \otimes I_{\sigma_k} \otimes \gamma_{0,\rho_k})\wedge \delta_0)$ as follows.
$f \otimes g (t)$ \\
$ =  ((\delta_{\tau_1} \otimes I_{\sigma_1} \otimes \gamma_{0,\rho_1})\wedge \delta_0) \otimes ((\delta_{\tau_2} \otimes I_{\sigma_2} \otimes \gamma_{0,\rho_2})\wedge \delta_0) $\\
$=  ((\delta_{\tau_1} \otimes I_{\sigma_1} \otimes \gamma_{0,\rho_1})\wedge \delta_0) \wedge ((\delta_{\tau_2} \otimes I_{\sigma_2} \otimes \gamma_{0,\rho_2})\wedge \delta_0) $\\
$~~~~\wedge ((\delta_{\tau_1+\tau_2} \otimes I_{\sigma_1+\sigma_2} \otimes \gamma_{0,\rho_1}) \wedge \delta_0) $\\
$~~~~ \wedge  ((\delta_{\tau_1+\tau_2} \otimes I_{\sigma_1+\sigma_2} \otimes \gamma_{0,\rho_2})\wedge \delta_0) $\\
$= f (t) \wedge g (t) \wedge (f (t-\tau_2) + \sigma_2) \wedge (g (t-\tau_1) + \sigma_1) $ \\
The following figure illustrates the convolution of $f$ and $g$.
\begin{figure}[H]  
\begin{centering}
  \includegraphics[width=0.5\textwidth]{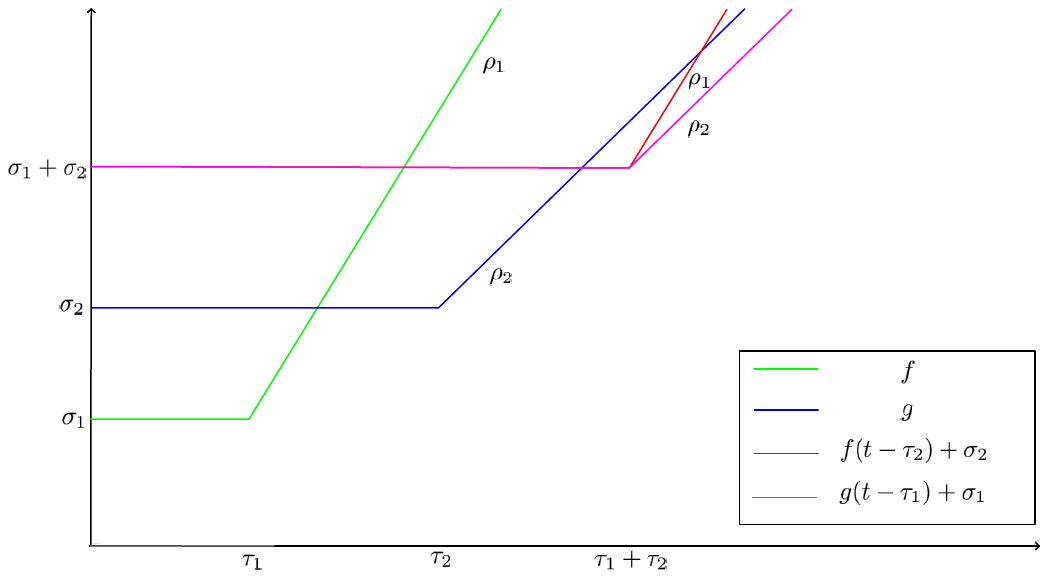}  
  \caption{Convolution of $f$ and $g$.}
\end{centering}
\end{figure} 
Note that the result also holds if $\rho_1$ or $\rho_2$ is infinite.
Since $\beta_1 \otimes \beta_2 = \delta_{D_1+D_2} \otimes (f \otimes g)$, we can now conclude that $\beta_1 \otimes \beta_2 \in$ \curveclass~and the proof is complete.
\end{proof}

\begin{proof}[Proof of Theorem \ref{theorem:mslc_closed_under_conv}]
\leavevmode\\
$ \beta_1 \otimes \beta_2 $\\
$ = [\delta_{D_1}\otimes (\bigwedge\limits_{i=1}^{n_1}(\delta_{\tau_i} \otimes I_{\sigma_i} \otimes \gamma_{0, \rho_i})\wedge \delta_0)] \otimes $ \\
$ ~~~~[\delta_{D_2}\otimes (\bigwedge\limits_{i=1}^{n_2}(\delta_{\tilde{\tau_i}} \otimes I_{\tilde{\sigma_i}} \otimes \gamma_{0, \tilde{\rho_i}})\wedge \delta_0)] $\\
$\stackrel{(*)}{=} \delta_{D_1 + D_2} \otimes 	(\bigwedge\limits_{i=1}^{n_1}(\delta_{\tau_i} \otimes  I_{\sigma_i} \otimes  \gamma_{0, \rho_i})\wedge \delta_0) \otimes $ \\
$ ~~~~(\bigwedge\limits_{i=1}^{n_2}(\delta_{\tilde{\tau_i}} \otimes I_{\tilde{\sigma_i}} \otimes \gamma_{0, \tilde{\rho_i}})\wedge \delta_0)$ \\
$\stackrel{(**)}{=}  \delta_{D_1 + D_2} \otimes (\bigwedge\limits_{i=1}^{n_1} \bigwedge\limits_{j=1}^{n_2} ((\delta_{\tau_i} \otimes I_{\sigma_i} \otimes \gamma_{0, \rho_i})\wedge \delta_0)  \otimes $\\ 
$~~~~((\delta_{\tilde{\tau_j}} \otimes I_{\tilde{\sigma_j}} \otimes \gamma_{0, \tilde{\rho_j}})\wedge \delta_0)) $\\
$\stackrel{(***)}{=} \delta_{D_1 + D_2} \otimes (\bigwedge\limits_{i=1}^{n_1} \bigwedge\limits_{j=1}^{n_2} (((\delta_{\tau_i} \otimes I_{\sigma_i} \otimes \gamma_{0, \rho_i})\wedge \delta_0)  \wedge$\\
$~~~~ ((\delta_{\tilde{\tau_j}} \otimes I_{\tilde{\sigma_j}} \otimes \gamma_{0, \tilde{\rho_j}})\wedge \delta_0) \wedge  ((\delta_{\tau_i + \tilde{\tau_j}} \otimes I_{\sigma_i + \tilde{\sigma_j}} \otimes \gamma_{0, \rho_i})\wedge \delta_0) $\\
$~~~~\wedge  ((\delta_{\tau_i + \tilde{\tau_j}} \otimes I_{\sigma_i + \tilde{\sigma_j}} \otimes \gamma_{0, \tilde{\rho_j}})\wedge \delta_0)))$. \\
$(*)$ holds due to commutativity of $\otimes$, $(**)$ holds due to $(f\wedge g)\otimes h=(f\otimes h)\wedge(g \otimes h)$ (see \cite{le2001network}) and  $(***)$ holds due to Lemma \ref{lemma:mslc_n_1_conv}.
Hence, we have shown that $\beta_1 \otimes \beta_2 \in$ \curveclass~and the proof is complete.
\end{proof}

\begin{lemma}[Delay Bound of an Arrival Curve and Shifted Service Curve]\label{lemma:delay_bound_ac_shifted_sc}
	Let $\alpha$ be an arrival curve and $\beta$ a service curve and $D \geq 0$. 
	Then $hdev(\alpha, \delta_D \otimes \beta)=D+hdev(\alpha,\beta)$.
\end{lemma}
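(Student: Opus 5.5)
We prove the identity directly from the definition of the horizontal deviation in Theorem \ref{thm:Basic_Bounds}, $hdev(\alpha,\beta)=\inf\{d\geq 0: (\alpha\oslash\beta)(-d)\leq 0\}$, by computing how $\alpha\oslash(\delta_D\otimes\beta)$ relates to $\alpha\oslash\beta$.

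The first step is to make $\delta_D\otimes\beta$ explicit. Since $\beta\in\F_0$ and $\delta_D$ is $0$ on $[0,D]$ and $+\infty$ afterwards, the infimum in the convolution is attained by putting as much of $t$ as possible into the $\delta_D$ argument. This is the same computation the paper already carried out for Equation \eqref{eq:plcs_n_1}, and it gives
\[
(\delta_D\otimes\beta)(t)=\beta(t-D)\ \text{for } t> D,\qquad (\delta_D\otimes\beta)(t)=0 \ \text{for } t\leq D,
\]
using that $\beta$ is non-decreasing with $\beta(0)=0$.

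The second step plugs this into the deconvolution. For any $x$ we have
\[
(\alpha\oslash(\delta_D\otimes\beta))(x)=\sup_{u\geq 0}\{\alpha(x+u)-(\delta_D\otimes\beta)(u)\}.
\]
We split the supremum into $u\in[0,D]$ and $u>D$.
\begin{itemize}
\item On $[0,D]$ the term equals $\alpha(x+u)$, which is maximised at $u=D$ by monotonicity of $\alpha$. It therefore equals $\alpha(x+D)-\beta(0)$, which is the $v=0$ term of the second part.
\item Substituting $v=u-D$ in the second part yields $\sup_{v\geq 0}\{\alpha(x+D+v)-\beta(v)\}$. The point $v=0$ is included by right-continuity, or simply by absorbing the boundary term from the first part.
\end{itemize}
Hence $(\alpha\oslash(\delta_D\otimes\beta))(x)=(\alpha\oslash\beta)(x+D)$.

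The third step transfers this shift to the infimum. The condition $(\alpha\oslash(\delta_D\otimes\beta))(-d)\leq 0$ is equivalent to $(\alpha\oslash\beta)(-(d-D))\leq 0$. With $d'=d-D$, the admissible set for the shifted service curve is $\{d'+D\}$, where $d'$ ranges over the admissible set for $\beta$ intersected with $d'\geq -D$.

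It remains to check that values $d'\in[-D,0)$ cannot occur, so that the infimum is exactly $D+hdev(\alpha,\beta)$. For $d'<0$ we have $(\alpha\oslash\beta)(-d')\geq\alpha(-d')-\beta(0)=\alpha(|d'|)$. This is $>0$ unless $\alpha$ vanishes there, which would only make the admissible set for $\beta$ also start at $0$. Monotonicity of $x\mapsto(\alpha\oslash\beta)(x)$ shows that the admissible set is an up-closed interval, so its infimum is unaffected.

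This last boundary bookkeeping is the main obstacle: the definition of $hdev$ restricts to $d\geq 0$, so we must argue that the infimum for $\beta$ is not strictly cut off by negative shifts. Monotonicity of $\alpha\oslash\beta$ together with $\alpha\geq 0$ handles it. If that approach becomes awkward, the fallback is the equivalent characterisation $hdev(\alpha,\beta)=\sup_{t\geq 0}\inf\{d\geq 0:\alpha(t)\leq\beta(t+d)\}$, where the shift by $D$ is immediate from $(\delta_D\otimes\beta)(t+d)=\beta(t+d-D)$ for $t+d>D$.
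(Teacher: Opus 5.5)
There is a genuine gap, and it sits exactly at the boundary step you flagged. Your first three steps are fine. The appeal to right-continuity is not available in $\F_0$, but absorbing the $u=D$ term is enough. Together they give
\[
\hdev(\alpha,\delta_D\otimes\beta)=D+\inf\{d'\ge -D:(\alpha\oslash\beta)(-d')\le 0\}.
\]
The admissible set is up-closed, so this infimum equals the one over $d'\ge 0$ if and only if $(\alpha\oslash\beta)(x)>0$ for all $x\in(0,D]$. Your last paragraph says that when $\alpha$ vanishes the infimum is ``unaffected'', and this is false. If $(\alpha\oslash\beta)(x)\le 0$ for some $x\in(0,D]$, then $d'=-x<0$ is admissible and the infimum over $d'\ge -D$ is strictly negative. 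Only the \emph{restricted} set $\{d'\ge 0\}$ starts at $0$.

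Consequently the lemma, stated for an arbitrary arrival curve, is false. Take $\alpha(t)=r[t-a]^+\in\F_0$, $\beta=\beta_{R,0}$ with $R\ge r>0$, and $0<a<D$. Then $\hdev(\alpha,\beta)=0$, whereas $\hdev(\alpha,\delta_D\otimes\beta)=\hdev(\alpha,\beta_{R,D})=D-a<D$. Even $\alpha\equiv 0$ gives $0\ne D$. Your fallback characterization does not avoid this, because there too the shifted delay variable ranges over $[-D,\infty)$.

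The repair is to add a hypothesis such as $\alpha(x)>0$ for all $x>0$. This holds for the shaped token buckets $\gamma_{b,r}\wedge\gamma_{L,R'}$ used in the paper whenever $b>0$ or $r>0$. With it, your inequality $(\alpha\oslash\beta)(-d')\ge\alpha(|d'|)>0$ for $d'\in[-D,0)$ closes the argument.

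For comparison, the paper works with the pointwise form $\sup_{t\ge 0}\inf\{d\ge 0:\alpha(t-d)\le\beta(t)\}$. It restricts to $t\ge D$, substitutes $t=D+t'$, and then pulls $D$ out of the inner infimum, silently treating $d\ge 0$ and $d-D\ge 0$ as the same range. It therefore hides the same boundary issue rather than resolving it. In that formulation, positivity of $\alpha$ is what makes the $t'=0$ term equal to $D$, so dropping the constraint does not change the supremum.

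Your route through $\alpha\oslash(\delta_D\otimes\beta)=(\alpha\oslash\beta)(\cdot+D)$ works directly with the definition in Theorem~\ref{thm:Basic_Bounds}. Once the positivity hypothesis is made explicit, it is the more careful of the two proofs.
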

\begin{proof} \leavevmode\\
	$hdev(\alpha, \delta_D \otimes \beta)$ \\
	$= \sup\limits_{t\geq 0} [inf\{d\geq 0 | \alpha(t-d) \leq \delta_D \otimes \beta(t) \}]$ \\
	$= \sup\limits_{t\geq 0} [inf\{d\geq 0 | \alpha(t-d) \leq \beta(t-D) \}] $\\
	$\stackrel{(*)}{=} \sup\limits_{t\geq D} [inf\{d\geq 0 | \alpha(t-d) \leq \beta(t-D) \}] $\\
	$\stackrel{(**)}{=} \sup\limits_{t\geq D} [inf\{d\geq 0 | \alpha(D+t'-d) \leq \beta(t') \}] $\\
	$= \sup\limits_{t\geq 0} [inf\{d\geq 0 | \alpha(D+t-d) \leq \beta(t) \}] $\\
	$= \sup\limits_{t\geq 0} [inf\{d\geq 0 | \alpha(t-(d-D)) \leq \beta(t) \}] $\\
	$= \sup\limits_{t\geq 0} [D+inf\{d\geq 0 | \alpha(t-d) \leq \beta(t) \}] $\\
	$= D+ \sup\limits_{t\geq 0} [inf\{d\geq 0 | \alpha(t-d) \leq \beta(t) \}] $\\
	$= D+ hdev(\alpha,\beta)$ \\
	$(*)$ holds since $\beta(t-D)=0 \text{ for } t\leq D$, $(**)$ holds with $t=D+t'$.
\end{proof}

\begin{lemma}[Delay Bound of an Arrival Curve and Minima of Several Service Curves]\label{lemma:delay_bound_min_service_curve_eq_max_delay_bounds_individually}
Let $\alpha$ be an arrival curve and $\beta$ a service curve of the form $\beta=\bigwedge\limits_{i=1}^n \beta_i$. 
Then $hdev(\alpha,\bigwedge\limits_{i=1}^n \beta_i)=\bigvee\limits_{i=1}^n hdev(\alpha, \beta_i)$.
\end{lemma}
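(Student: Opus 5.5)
We prove the identity pointwise, using the same representation of $hdev$ as in the proof of Lemma \ref{lemma:delay_bound_ac_shifted_sc}:
\[
hdev(\alpha,\beta)=\sup_{t\geq 0} d(t;\beta), \qquad d(t;\beta):=\inf\{d\geq 0 \mid \alpha(t-d)\leq \beta(t)\}.
\]
Here $d(t;\beta)$ is the delay at output level $\beta(t)$. Since $\alpha$ is non-decreasing, the set $\{d\geq 0 \mid \alpha(t-d)\leq c\}$ is an up-set in $d$, that is, an interval $[d_0,\infty)$ or $(d_0,\infty)$. Its left endpoint is monotone non-increasing in the level $c$.

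\textbf{Step 1: pointwise claim.} We first show that for every fixed $t\geq 0$,
\[
d\Bigl(t;\bigwedge_{i=1}^n \beta_i\Bigr)=\bigvee_{i=1}^n d(t;\beta_i).
\]
Let $c_i=\beta_i(t)$ and $c=\min_i c_i=c_{i^*}$.

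For the direction $\geq$: the condition $\alpha(t-d)\leq c$ implies $\alpha(t-d)\leq c_i$ for every $i$. Hence the feasible set for $c$ is contained in each feasible set for $c_i$, and the infimum for $c$ is at least every $d(t;\beta_i)$.

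For the direction $\leq$: the feasible set for $c$ is exactly the feasible set for $c_{i^*}$. Therefore
\[
d\Bigl(t;\bigwedge_i\beta_i\Bigr)=d(t;\beta_{i^*})\leq\bigvee_i d(t;\beta_i).
\]
This needs no continuity assumptions, because we compare the feasible sets directly rather than their infima.

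\textbf{Step 2: exchange of suprema.} Taking $\sup_{t\geq 0}$ of the pointwise identity gives
\[
\sup_t \max_i d(t;\beta_i)=\max_i \sup_t d(t;\beta_i),
\]
because suprema commute. This is the claimed identity $hdev(\alpha,\bigwedge_i\beta_i)=\bigvee_i hdev(\alpha,\beta_i)$. Infinite values, for example when some $\beta_i$ is $+\infty$ or when a deviation is unbounded, are handled consistently in $\overline{\mathbb R}$.

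\textbf{Main obstacle.} The delicate point is reconciling the sup/inf formula with the deconvolution form of $hdev$ in Theorem \ref{thm:Basic_Bounds}. The latter is $\inf\{d\geq 0 : (\alpha\oslash\beta)(-d)\leq 0\}$, and for it the analogous statement $\alpha\oslash(\bigwedge_i\beta_i)=\bigvee_i \alpha\oslash\beta_i$ follows from $\sup_u(\alpha(\cdot+u)-\min_i\beta_i(u))=\max_i\sup_u(\cdot)$.

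So we would present the deconvolution route as the primary argument, since it is cleanest:
\[
\Bigl(\alpha\oslash\bigwedge_i\beta_i\Bigr)(-d)\leq 0 \iff (\alpha\oslash\beta_i)(-d)\leq 0 \ \text{ for all } i.
\]
The set of admissible $d$ is therefore the intersection of the individual up-sets $[d_i,\infty)$ (or half-open versions of them). The infimum of that intersection is $\max_i d_i$. The only care required is that an intersection of up-sets in $d$ has infimum equal to the maximum of the individual infima; this holds because the sets are nested half-lines.
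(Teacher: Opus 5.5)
Your proof is correct. Steps 1--2 are exactly the paper's argument. The paper writes $hdev$ as $\sup_{t\ge 0}\inf\{d\ge 0\mid \alpha(t-d)\le\bigwedge_i\beta_i(t)\}$. At its step $(*)$ it replaces the inner infimum by $\bigvee_i\inf\{d\ge 0\mid\alpha(t-d)\le\beta_i(t)\}$, and then it swaps $\sup_t$ with $\bigvee_i$.

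Your feasible-set comparison is a sharper justification of $(*)$ than the paper's appeal to monotonicity of $\alpha$. The sets $\{d\ge 0\mid\alpha(t-d)\le c\}$ grow with $c$ and the finite minimum is attained, so that step holds for any $\alpha$.

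The deconvolution argument you would make primary is a genuinely different route. It starts from the definition of $hdev$ in Theorem~\ref{thm:Basic_Bounds}. It first proves the function-level identity $\alpha\oslash\bigwedge_i\beta_i=\bigvee_i(\alpha\oslash\beta_i)$, which is useful in its own right for output bounds, and only then passes to delays. That is where monotonicity of $\alpha$ is really needed: it makes the admissible sets half-lines, so their intersection has infimum $\max_i d_i$. Its cost is arithmetic in $\overline{\mathbb{R}}$. Forming $\alpha(u-d)-\beta_i(u)$ needs finite $\alpha$ and the convention $x-(+\infty)=-\infty$, which matters because \curveclass\ pieces take the value $+\infty$. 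The pointwise route only compares values and never subtracts.

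Finally, your ``main obstacle'' dissolves under the same half-line argument applied to the family indexed by $t$. We have $\{d\ge 0\mid(\alpha\oslash\beta)(-d)\le 0\}=\bigcap_{t\ge 0}\{d\ge 0\mid\alpha(t-d)\le\beta(t)\}$, and the infimum of an intersection of half-lines is the supremum of their infima. So the two expressions for $hdev$ agree; the paper relies on this equivalence without stating it.
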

\begin{proof} \leavevmode\\
$ hdev(\alpha,\bigwedge\limits_{i=1}^n \beta_i)$\\
$= \sup\limits_{t\geq 0} [inf\{d\geq 0 | \alpha(t-d) \leq \bigwedge\limits_{i=1}^n \beta_i(t) \}] $\\
$\stackrel{(*)}{=} \sup\limits_{t\geq 0} [\bigvee\limits_{i=1}^n inf\{d\geq 0 | \alpha(t-d) \leq \beta_i(t) \}]  $\\
$= \bigvee\limits_{i=1}^n \sup\limits_{t\geq 0} [inf\{d\geq 0 | \alpha(t-d) \leq \beta_i(t) \}] $\\
$= \bigvee\limits_{i=1}^n hdev(\alpha, \beta_i) $\\
$(*)$ due to $\alpha \in \cal{F}$, i.e., a wide sense increasing function.
\end{proof}

\begin{lemma}[Delay Bound of Shaped Arrival Curve and \curveclass~Service Curve with $n=1$ and $D=0$]\label{lemma:delay_bound_shaped_ac_simple_plcs_zero_offset}
	Let $\alpha=\gamma_{b,r} \wedge \gamma_{L,R'}$ and $\beta \in$ \curveclass~with $n=1$ and $D=0$, i.e., $\beta = (\delta_{\tau_1}\otimes I_{\sigma_1}\otimes \gamma_{0,\rho_1})\wedge \delta_0$, with $L \leq b, r < R', R' \geq \rho_1$ (if $\rho_1 < +\infty$) and $r \leq \rho_1$. \\ 
	Then, $hdev(\alpha, \beta) =$ \\
	$ \begin{cases}
		[\tau_1 \cdot 1_{ \{ \sigma_1 \geq \frac{b-L}{R'-r}r+b \}} - \frac{[\sigma_1 -b]^+}{r},  & \text{if } \rho_1 <+\infty\\
		 \tau_1 \cdot 1_{\{\frac{b-L}{R'-r}r+b\geq \sigma_1 \}} +  \frac{[\frac{b-L}{R'-r}r+b-\sigma_1]^+}{\rho_1} \\
		 - \frac{b-L}{R'-r}]^+   \\
        [\tau_1 \cdot 1_{ \{ \sigma_1 \geq \frac{b-L}{R'-r}r+b \}} - \frac{[\sigma_1 -b]^+}{r},  & \text{if }  \rho_1 = +\infty\\
         \tau_1 \cdot 1_{\{\frac{b-L}{R'-r}r+b\geq \sigma_1 \}} - \frac{[\sigma_1-L]^+}{R'}]^+   
    \end{cases}$
\end{lemma}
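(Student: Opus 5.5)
The plan is to compute $hdev(\alpha,\beta)$ through the pseudo-inverse of $\beta$, using the representation
\[
hdev(\alpha,\beta)=\sup_{t\geq 0}\,[\beta^{-1}(\alpha(t))-t]^+,\qquad \beta^{-1}(x):=\inf\{s\geq 0 : \beta(s)\geq x\}.
\]
This is equivalent to the $\sup_t\inf\{d\}$ form used in the proof of Lemma \ref{lemma:delay_bound_ac_shifted_sc}.

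\textbf{Step 1: the shape of $\beta$.} Unfolding the definitions gives $\delta_{\tau_1}\otimes I_{\sigma_1}(t)=\sigma_1$ for $t\leq\tau_1$ and $+\infty$ otherwise. Convolving with $\gamma_{0,\rho_1}$ yields $\sigma_1+\rho_1[t-\tau_1]^+$. Taking the minimum with $\delta_0$ then gives $\beta(0)=0$ and $\beta(t)=\sigma_1+\rho_1[t-\tau_1]^+$ for $t>0$. Hence
\[
\beta^{-1}(x)=\begin{cases}0, & x\leq\sigma_1,\\ \tau_1+\dfrac{x-\sigma_1}{\rho_1}, & x>\sigma_1,\end{cases}
\]
which reads $\beta^{-1}(x)=\tau_1$ for $x>\sigma_1$ when $\rho_1=+\infty$.

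\textbf{Step 2: the shape of $\alpha$.} The curve $\alpha$ is concave and piecewise linear, with $\alpha(0^+)=L$, slope $R'$ up to the breakpoint $t_0:=\frac{b-L}{R'-r}$ (where $t_0\geq 0$ because $L\leq b$), and slope $r$ afterwards. Its value at the breakpoint is $a_0:=\alpha(t_0)=\frac{b-L}{R'-r}r+b$. Only times $t$ with $\alpha(t)>\sigma_1$ contribute anything positive, so
\[
hdev=\Bigl[\sup_{t:\,\alpha(t)>\sigma_1}\bigl(\tau_1+\tfrac{\alpha(t)-\sigma_1}{\rho_1}-t\bigr)\Bigr]^+ .
\]

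\textbf{Step 3a: maximizing for finite $\rho_1$.} The map $t\mapsto \alpha(t)/\rho_1-t$ has slope $R'/\rho_1-1\geq 0$ on $(0,t_0]$, by $R'\geq\rho_1$. It has slope $r/\rho_1-1\leq 0$ afterwards, by $r\leq\rho_1$. So it is maximized at $t_0$, and I split into two cases.
\begin{itemize}
\item If $a_0\geq\sigma_1$, the point $t_0$ is admissible and gives $\tau_1+\frac{a_0-\sigma_1}{\rho_1}-t_0$.
\item If $a_0<\sigma_1$, the admissible region is $t>t^*:=\frac{\sigma_1-b}{r}\geq t_0$. There the function is nonincreasing, so the supremum is the limit at $t^*$, namely $\tau_1-\frac{\sigma_1-b}{r}$.
\end{itemize}
Both cases are captured by the single expression $[x_1,x_2]^+$ in the statement. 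In the first case the indicator in $x_1$ is zero, so $x_1=-[\sigma_1-b]^+/r\leq 0$ and it is irrelevant. In the second case $x_2=-t_0$, and I must check $x_2\leq x_1$. This holds because $\frac{\sigma_1-b}{r}\leq \tau_1+t_0$ is not needed: it suffices that $-t_0\leq 0\leq$ the positive part, or, if $x_1<0$, that the whole $[\cdot]^+$ is $0$ anyway. The boundary $\sigma_1=a_0$ is consistent in both branches.

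\textbf{Step 3b: $\rho_1=+\infty$.} Here the quantity to maximize is $\tau_1-t$ over admissible $t$. The supremum is therefore $\tau_1$ minus the first time at which $\alpha$ exceeds $\sigma_1$. That time is $0$ if $\sigma_1<L$, equals $\frac{\sigma_1-L}{R'}$ if $L\leq\sigma_1\leq a_0$, and equals $\frac{\sigma_1-b}{r}$ if $\sigma_1>a_0$. These three cases match the indicator and $[\cdot]^+$ pattern of the second line of the formula.

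The main obstacle is the bookkeeping in Step 3: showing that the maximum of the two bracketed terms combined with the indicators equals the case-wise supremum. This includes the degenerate situations $r=0$ (where $\sigma_1>b$ gives $-\infty$, consistent with the $[\cdot]^+$) and the open/closed endpoint issues at $t=0$ and $t=t^*$. I would handle it by a direct check in each of the three regimes $\sigma_1<L$, $L\leq\sigma_1\leq a_0$ and $\sigma_1>a_0$.
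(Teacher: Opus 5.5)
Your proposal is correct and follows essentially the paper's route: it uses the same case split on $\sigma_1$ versus the breakpoint value $\frac{b-L}{R'-r}r+b$ (and, for $\rho_1=+\infty$, on $\sigma_1$ versus $L$), and your pseudo-inverse/unimodality argument (slope $R'/\rho_1-1\geq 0$ before the breakpoint, $r/\rho_1-1\leq 0$ after) makes rigorous the paper's geometric claim that the deviation is attained at the breakpoint of $\alpha$ or at the step of $\beta$. The one cleanup is in Step 3a: you never need $x_2\leq x_1$, since $x_2=-\frac{b-L}{R'-r}\leq 0$ cannot affect $[x_1,x_2]^+=\max\{x_1,x_2,0\}$.
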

\begin{proof} 
We first start with the assumption $\rho_1 < +\infty$ and note that $\frac{b-L}{R'-r}r+b$ is the $y$-value of the inflection point of $\alpha$.

Case 1 ($\sigma_1 \leq \frac{b-L}{R'-r}r+b$): 
$hdev$ can only be between the inflection value of $\alpha$ and the respective value of $\beta$ since $R' \geq \rho_1$.
This can be described as $\tau_1 + \frac{\frac{b-L}{R'-r}r + b-\sigma_1}{\rho_1}-\frac{b-L}{R'-r}$.
\begin{figure}[H]  
\begin{centering}
  \includegraphics[width=0.5\textwidth]{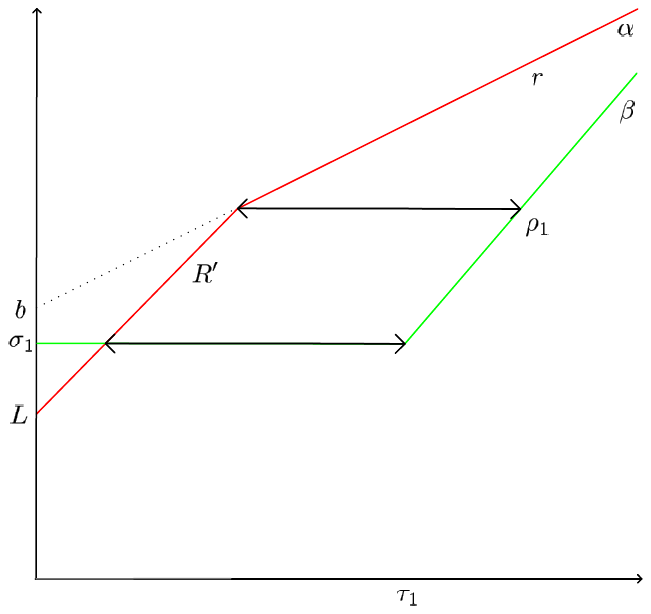}  
  \caption{Case 1 of the delay bound proof.}
\end{centering}
\end{figure} 

Case 2 ($\sigma_1 > \frac{b-L}{R'-r}r+b$):
$hdev$ can only be between the inflection point of $\beta$ and the respective value of $\alpha$ as $r \leq \rho_1$.
Moreover, we have to guard this case also for negativity, as theoretically, $\alpha$ may not have any intersection with $\beta$.
This can be described as $[\tau_1 - \frac{\sigma_1 -b}{r}]^+$. 
\\
%TODO if enough space, include the fig
%\begin{figure}[H]  
%\begin{centering}
%  \includegraphics[width=0.5\textwidth]{figures/hdev_shaped_ac_plcs_n_indices_1_case_2_a}  
%  \caption{Case 2.}
%\end{centering}
%\end{figure} 
It is then easily verifiable, that both cases can be summarized as $[\tau_1 \cdot 1_{ \{ \sigma_1 \geq \frac{b-L}{R'-r}r+b \}} - \frac{[\sigma_1 -b]^+}{r},  \tau_1 \cdot 1_{\{\frac{b-L}{R'-r}r+b\geq \sigma_1 \}} +  \frac{[\frac{b-L}{R'-r}r+b-\sigma_1]^+}{\rho_1} - \frac{b-L}{R'-r}]^+ $.
\\\\
Finally, we now assume that $\rho_1 = +\infty$.
\\
Case 1 ($\sigma_1 \leq \frac{b-L}{R'-r}r+b$): If $L > \sigma_1$, then $hdev = \tau_1$ since $R' < +\infty$. Otherwise, $hdev=[\tau_1-\frac{\sigma_1-L}{R'}]^+$.  \\
Case 2 ($\sigma_1 > \frac{b-L}{R'-r}r+b$): The delay bound is equal to $[\tau_1-\frac{\sigma_1-b}{r}]^+$. \\
It is then again easily verifiable, that both cases can be summarized as $[\tau_1 \cdot 1_{ \{ \sigma_1 \geq \frac{b-L}{R'-r}r+b \}} - \frac{[\sigma_1 -b]^+}{r}, \tau_1 \cdot 1_{\{\frac{b-L}{R'-r}r+b\geq \sigma_1 \}} - \frac{[\sigma_1-L]^+}{R'}]^+$.
\end{proof}

\begin{proof}[Proof of Theorem \ref{theorem:delay_bound_shaped_ac_plcs_sc}]
	We divide the proof as follows:
	\begin{itemize}
	\item $hdev(\alpha,\delta_D \otimes \beta)= D + hdev(\alpha, \beta)$ for any arrival curve $\alpha$ and service curve $\beta$ (Lemma \ref{lemma:delay_bound_ac_shifted_sc})
	\item $hdev(\alpha, \bigwedge\limits_{i=1}^n \beta_i)= \bigvee\limits_{i=1}^n hdev(\alpha,\beta_i)$ for any arrival curve $\alpha$ and service curve $\beta$ (Lemma \ref{lemma:delay_bound_min_service_curve_eq_max_delay_bounds_individually})
	\item  $hdev(\alpha,\beta_i) = $\\
		$\begin{cases}
		[\tau_i \cdot 1_{ \{ \sigma_i \geq \frac{b-L}{R'-r}r+b \}} - \frac{[\sigma_i -b]^+}{r}, & \text{if } \rho_i <+\infty \\
		 \tau_i \cdot 1_{\{\frac{b-L}{R'-r}r+b\geq \sigma_i \}} + \frac{[\frac{b-L}{R'-r}r+b-\sigma_i]^+}{\rho_i} \\
		  - \frac{b-L}{R'-r}]^+  \\
        [\tau_i \cdot 1_{ \{ \sigma_i \geq \frac{b-L}{R'-r}r+b \}} - \frac{[\sigma_i -b]^+}{r}, & \text{if }  \rho_i = +\infty\\
        \tau_i \cdot 1_{\{\frac{b-L}{R'-r}r+b\geq \sigma_i \}} - \frac{[\sigma_i-L]^+}{R'}]^+   
    \end{cases}$  (Lemma \ref{lemma:delay_bound_shaped_ac_simple_plcs_zero_offset})
\end{itemize}
Hence, all in all, we obtain the claim.
\end{proof}

\begin{lemma}[Backlog Bound Arrival Curve and \curveclass~Service Curve with $n=1$]\label{lemma:backlog_bound_simple_ac_plcs_n_1}
	Consider an arrival curve with $\alpha=\gamma_{b,r}$ and a service curve $\beta$ in \curveclass~with $n=1$, i.e., $\beta$ has the form $\delta_D \otimes ((\delta_{\tau_1} \otimes I_{\sigma_1}\otimes \gamma_{0,\rho_1}) \wedge \delta_0)$. 
	Then, $vdev(\alpha,\beta) =[b+Dr, b-\sigma_1+(D+\tau_1)r]$.
\end{lemma}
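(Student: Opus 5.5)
The plan is to evaluate $vdev(\alpha,\beta)=\sup_{u\geq 0}\{\alpha(u)-\beta(u)\}$ from Theorem \ref{thm:Basic_Bounds} directly. Since $\beta$ is piecewise defined, the supremum splits into the pieces on which $\beta$ has a simple closed form. From the shift computation following Equation \eqref{eq:plcs_n_1}, $\beta(u)=0$ for $u\leq D$. On $(D,D+\tau_1]$ it is the plateau $\beta(u)=\sigma_1$. For $u>D+\tau_1$ it is linear, $\beta(u)=\sigma_1+\rho_1(u-D-\tau_1)$, or $+\infty$ if $\rho_1=+\infty$. On all three pieces $\alpha(u)=b+ru$ for $u>0$, using the usual convention that the token bucket jumps to $b$ immediately. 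The quantity $vdev$ is then the maximum of the three piecewise suprema.

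\textbf{Piece $[0,D]$.} Here $\alpha(u)-\beta(u)=b+ru$ is non-decreasing, so its supremum is attained at $u=D$ and equals $b+Dr$. If $D=0$, the value $b$ is obtained as the right-limit at $0$, which is consistent with the convention $\alpha(0^+)=b$.

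\textbf{Piece $(D,D+\tau_1]$.} Here $\alpha(u)-\beta(u)=b+ru-\sigma_1$ is again non-decreasing, with supremum $b-\sigma_1+(D+\tau_1)r$ at $u=D+\tau_1$. If $\tau_1=0$, the same value arises as the limit from the next piece.

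\textbf{Piece $u>D+\tau_1$.} Here $\alpha(u)-\beta(u)=b-\sigma_1+(D+\tau_1)r+(r-\rho_1)(u-D-\tau_1)$. This is non-increasing as soon as $r\leq\rho_1$, so its supremum is its value as $u\downarrow D+\tau_1$, which coincides with the second piece. When $\rho_1=+\infty$ the difference is $-\infty$ on this piece and contributes nothing. Taking the maximum of the three pieces gives $[b+Dr]\vee[b-\sigma_1+(D+\tau_1)r]$, which is the claim.

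The main obstacle is the third piece. The lemma as stated does not list $r\leq\rho_1$, yet without it the supremum is $+\infty$. I would therefore make this hypothesis explicit; it is assumed in Theorem \ref{theorem:backlog_bound_ouput_ac_simple_ac_plcs_sc}, where this lemma is used. A second delicate point is the value of $\alpha$ at $0$ and the resulting boundary cases $D=0$ or $\tau_1=0$, which I handle with one-sided limits as described above. This is harmless because the supremum does not depend on the value at a single point when limits from the adjacent piece attain the same bound.
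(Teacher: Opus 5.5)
Your main argument is the paper's. Its proof only asserts that $\sup_{u\ge 0}\{\alpha(u)-\beta(u)\}$ is attained at $u=D$ or $u=D+\tau_1$, and your piecewise monotonicity check supplies the missing justification. You are right that discarding $u>D+\tau_1$ needs $r\le\rho_1$. The lemma omits this hypothesis, and without it the supremum is $+\infty$. The paper uses it silently, and it is available wherever the lemma is invoked (Lemma~\ref{lemma:vdev_simple_ac_plcs_sc_equals_vdev_simple_ac_individual_plcs_parts} and Theorem~\ref{theorem:backlog_bound_ouput_ac_simple_ac_plcs_sc}).

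The step that does not work is your handling of $D=0$. There the plateau starts immediately: $\beta(0)=0$ but $\beta(u)=\sigma_1$ for every $u\in(0,\tau_1]$. So the right-limit of $\alpha-\beta$ at $0$ is $b-\sigma_1$, not $b$. With your convention $\alpha(0)=0$, the supremum for $D=0$ is $\max\{0,\,b-\sigma_1+\tau_1 r\}$. This is strictly below the claimed value $b$ whenever $b>0$ and $\sigma_1>\tau_1 r$; for instance, $b=r=\tau_1=1$, $\rho_1=2$, $\sigma_1=5$ gives $0$ instead of $1$. Under that convention the stated equality is false, and the formula is only an upper bound.

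The term $b+Dr$ at $D=0$ comes solely from evaluating $\gamma_{b,r}(t)=rt+b$ literally at $t=0$, which is what the paper does when it writes $\alpha(D)-0$. So drop the limit argument and choose one of two fixes. Either adopt that convention, under which $u=D$ contributes $\alpha(D)-\beta(D)=b+Dr$ for every $D\ge 0$. Or restrict the equality to $D>0$, where it holds under either convention.

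Your remark for $\tau_1=0$ is also slightly off when $\rho_1=+\infty$, since then nothing approaches $b-\sigma_1+Dr$. This is harmless, because that term is dominated by $b+Dr$ as $\sigma_1\ge 0$.
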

\begin{proof}
$vdev(\alpha,\beta)$ can be computed as \\
$\sup\limits_{u\geq 0} \{\alpha(u)-\beta(u)\}$\\
$\stackrel{(*)}{=} [\alpha(D)-0] \vee [\alpha(D+\tau_1)-\beta(D+\tau_1)] $ \\
$= [\alpha(D)] \vee [\alpha(D+\tau_1)-\sigma_1] $\\
$= [b+Dr] \vee [b-\sigma_1+(D+\tau_1)r] $\\
$= [b+Dr, b-\sigma_1 + (D+\tau_1)r]$\\
$(*)$ since $u=D$ or $u=D+\tau_1$.

\begin{figure}[H]  
\begin{centering}
  \includegraphics[width=0.5\textwidth]{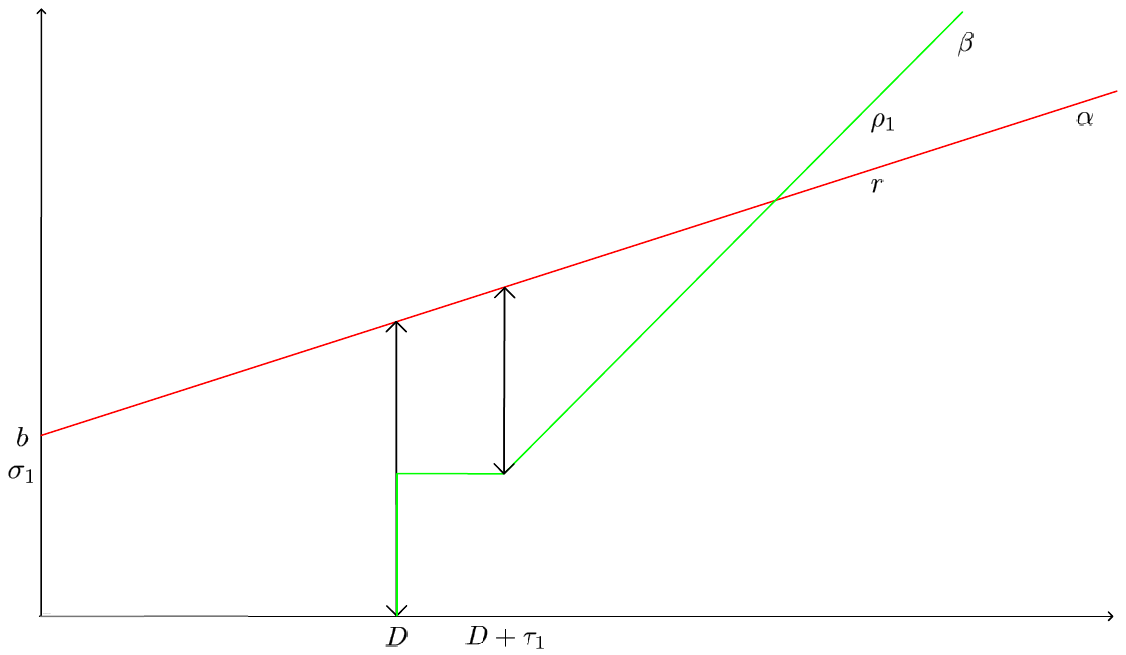}  
  \caption{Vertical deviation between $\alpha$ and $\beta$.}
\end{centering}
\end{figure}
\end{proof}

\begin{lemma}[Vertical Deviation of Arrival Curve and \curveclass~Service Curve equals Maximum of Vertical Deviations of Arrival Curve and  \curveclass~with $n=1$]\label{lemma:vdev_simple_ac_plcs_sc_equals_vdev_simple_ac_individual_plcs_parts}
	Let $\alpha=\gamma_{b,r}$ and $\beta \in$ \curveclass~of the form $\beta=\delta_D \otimes (\bigwedge\limits_{i=1}^n (\delta_{\tau_i}\otimes I_{\sigma_i}\otimes \gamma_{0,\rho_i})\wedge\delta_0)$ with $0 \leq r \leq \min\limits_{i=1,...,n} \rho_i$. 
	Then, $vdev(\alpha,\beta)=\bigvee\limits_{i=1}^n vdev(\alpha, \beta_i)$ with $\beta_i=\delta_D \otimes ((\delta_{\tau_i} \otimes I_{\sigma_i}\otimes \gamma_{0,\rho_i}) \wedge \delta_0)$.
\end{lemma}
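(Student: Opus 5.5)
We prove the statement directly from the definition $vdev(\alpha,\beta)=\sup_{u\geq 0}\{\alpha(u)-\beta(u)\}$ and the distributivity of $\otimes$ over $\wedge$. The first step is to pull the common offset inside the minimum. By $(f\wedge g)\otimes h=(f\otimes h)\wedge(g\otimes h)$ (already used in the proof of Theorem \ref{theorem:mslc_closed_under_conv}), we obtain
\[
\beta=\delta_D \otimes \bigwedge_{i=1}^n \bigl((\delta_{\tau_i}\otimes I_{\sigma_i}\otimes \gamma_{0,\rho_i})\wedge\delta_0\bigr)=\bigwedge_{i=1}^n \beta_i ,
\]
with $\beta_i$ exactly as in the statement. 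Hence $\beta(u)=\min_i \beta_i(u)$ pointwise for every $u\geq 0$.

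Next we exchange the supremum with the maximum. Pointwise, $\alpha(u)-\bigwedge_i\beta_i(u)=\bigvee_i(\alpha(u)-\beta_i(u))$, because subtracting a minimum is the same as taking the maximum of the differences. This identity also holds when some $\beta_i(u)=+\infty$: such a term contributes $-\infty$ and is dominated by the others, and it does not arise for the terms attaining the min unless all are infinite. Since suprema commute with finite maxima,
\[
vdev(\alpha,\beta)=\sup_{u\geq0}\bigvee_{i=1}^n(\alpha(u)-\beta_i(u))=\bigvee_{i=1}^n\sup_{u\geq 0}(\alpha(u)-\beta_i(u))=\bigvee_{i=1}^n vdev(\alpha,\beta_i).
\]
This argument is essentially the vertical analogue of Lemma \ref{lemma:delay_bound_min_service_curve_eq_max_delay_bounds_individually}, and it is even simpler, because no monotonicity of $\alpha$ is needed for the exchange.

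The only delicate point is making sure each $vdev(\alpha,\beta_i)$ is well defined, i.e.\ finite. Here the hypothesis $r\leq\min_i\rho_i$ is used: for $u> D+\tau_i$, the difference $\alpha(u)-\beta_i(u)=b+ru-\sigma_i-\rho_i(u-D-\tau_i)$ is non-increasing in $u$. Also, $\beta_i\geq 0$ on $[0,D]$ and $\beta_i=\sigma_i$ on $(D,D+\tau_i]$. So each supremum is attained at $u=D$ or $u=D+\tau_i$, consistent with Lemma \ref{lemma:backlog_bound_simple_ac_plcs_n_1}, and the equality above is an equality of finite reals. I expect this bookkeeping to be the main (mild) obstacle: handling $\rho_i=+\infty$ and the value at the boundary point $u=D+\tau_i$, where $\beta_i$ equals $\sigma_i$. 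With the convention that $\beta_i$ is left-continuous at the step, the supremum is still finite and attained.

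Combining this lemma with Lemma \ref{lemma:backlog_bound_simple_ac_plcs_n_1} then gives the formula of Theorem \ref{theorem:backlog_bound_ouput_ac_simple_ac_plcs_sc}.
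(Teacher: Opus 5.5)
Your proof is correct, and it takes a different and cleaner route than the paper. The paper does not distribute $\delta_D$ over the minimum. Instead it removes the offset by hand:
\begin{itemize}
\item it splits $u\le D$, which yields the term $[b+Dr]$, from $u>D$;
\item it shifts via $\alpha(u+D)=Dr+\alpha(u)$, using the affine form of $\gamma_{b,r}$;
\item it exchanges $\sup$ and $\bigvee$ only for the zero-offset steps;
\item it evaluates each of these with Lemma~\ref{lemma:backlog_bound_simple_ac_plcs_n_1} at $D=0$;
\item it identifies the resulting closed forms $[b+Dr]\vee[b-\sigma_i+(D+\tau_i)r]$ as $vdev(\alpha,\beta_i)$ via the same lemma with offset $D$.
\end{itemize}
In effect the paper verifies the identity by computing both sides explicitly.

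Your observation that $\beta=\bigwedge_i\beta_i$ holds exactly, by $(f\wedge g)\otimes h=(f\otimes h)\wedge(g\otimes h)$, reduces the whole lemma to the exchange $\sup_u\bigvee_i=\bigvee_i\sup_u$. This gives a strictly more general statement. The identity holds in the extended reals for any arrival curve and any finite family of service curves. It needs neither the token-bucket form, nor $r\le\min_i\rho_i$, nor the $n=1$ formula. It also avoids the steps in the paper's chain that depend on the convention for $\gamma_{b,r}(0)$; for instance, the shift $\alpha(u+D)=Dr+\alpha(u)$ is false at $u=0$ if $\alpha(0)=0$. The paper's route does produce the closed form along the way, but you recover it anyway by combining with Lemma~\ref{lemma:backlog_bound_simple_ac_plcs_n_1}, as you note.

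Your final paragraph on finiteness is therefore not needed for the equality. Its claim that the supremum is attained at $u=D$ or $u=D+\tau_i$ can also fail in the degenerate case $D=\tau_i=0$: there, if $\alpha(0)=0$ and $b>\sigma_i$, the value $b-\sigma_i$ is only approached as $u\downarrow 0$. This is harmless for the same reason.
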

\begin{proof} \leavevmode\\
$vdev(\alpha,\beta) $\\
$=\sup\limits_{u\geq 0}\{\alpha(u)-\beta(u)\} $\\
$= \sup\limits_{u\geq 0}\{\alpha(u)-\delta_D \otimes (\bigwedge\limits_{i=1}^n ((\delta_{\tau_i}\otimes I_{\sigma_i}\otimes \gamma_{0,\rho_i})\wedge\delta_0)(u))\} $\\
$\stackrel{\alpha=\gamma_{b,r}}{=} [b+Dr] \vee \sup\limits_{u\geq 0}\{\alpha(u+D)-(\bigwedge\limits_{i=1}^n ((\delta_{\tau_i}\otimes I_{\sigma_i}\otimes \gamma_{0,\rho_i})\wedge\delta_0)(u))\} $\\
$= [b+Dr] \vee \sup\limits_{u\geq 0}\{\bigvee\limits_{i=1}^n \alpha(u+D)-(((\delta_{\tau_i}\otimes I_{\sigma_i}\otimes \gamma_{0,\rho_i})\wedge\delta_0)(u))\} $\\
$\stackrel{\alpha=\gamma_{b,r}}{=} [b+Dr] \vee \sup\limits_{u\geq 0}\{\bigvee\limits_{i=1}^n Dr + \alpha(u)-(((\delta_{\tau_i}\otimes I_{\sigma_i}\otimes \gamma_{0,\rho_i})\wedge\delta_0)(u))\}$\\
$= [b+Dr] \vee \{Dr + \bigvee\limits_{i=1}^n  \{ \sup\limits_{u\geq 0}  \alpha(u)-(((\delta_{\tau_i}\otimes I_{\sigma_i}\otimes \gamma_{0,\rho_i})\wedge\delta_0)(u))\}\}$\\
$\stackrel{\text{Lemma } \ref{lemma:backlog_bound_simple_ac_plcs_n_1}}{=} [b+Dr] \vee \{ Dr + \bigvee\limits_{i=1}^n [b, b-\sigma_i + \tau_i r] \}$\\
$= [b+Dr] \vee \bigvee\limits_{i=1}^n [b+Dr, b-\sigma_i+(D+\tau_i)r] 	$\\
$=\bigvee\limits_{i=1}^n[b+Dr, b-\sigma_i+(D+\tau_i)r]$ \\
$= \bigvee\limits_{i=1}^n vdev(\alpha,\beta_i)$\\
\end{proof}

\begin{lemma}[Outputbound via Vertical Deviation for Arrival Curve and \curveclass~Service Curve]\label{lemma:outputbound_via_vertical_dev_simple_ac_plcs_sc}
	Let $\alpha=\gamma_{b,r}$ and $\beta \in$ \curveclass~ of the form $\beta=\delta_D \otimes (\bigwedge\limits_{i=1}^n (\delta_{\tau_i}\otimes I_{\sigma_i}\otimes \gamma_{0,\rho_i})\wedge\delta_0)$ with $0 \leq r \leq \min\limits_{i=1,...,n} \rho_i$. 
	Then, $\alpha'(t)=\alpha \oslash \beta(t)=rt+vdev(\alpha,\beta)$ for $t \geq 0$.
\end{lemma}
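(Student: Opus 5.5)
We prove the claim directly from the definition of deconvolution,
\[
\alpha\oslash\beta(t)=\sup_{u\geq 0}\{\alpha(t+u)-\beta(u)\},
\]
by exploiting that $\alpha=\gamma_{b,r}$ is affine on $t\geq 0$. For $t\geq 0$ and $u\geq 0$ we have $\alpha(t+u)=r(t+u)+b=rt+\alpha(u)$; the only subtlety is the convention $\gamma_{b,r}(0)=0$ versus $b$. Since $t+u\geq 0$, and the supremum over $u$ is unaffected by the single point $u=0$ when $t>0$ (we handle $t=0$ separately below), we write
\[
\sup_{u\geq 0}\{\alpha(t+u)-\beta(u)\}=rt+\sup_{u\geq 0}\{\alpha(u)-\beta(u)\}=rt+vdev(\alpha,\beta).
\]
Here we use that $rt$ does not depend on $u$ and can be pulled out of the supremum. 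The last equality is the definition of the backlog bound in Theorem~\ref{thm:Basic_Bounds}.

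The step needing care is the boundary behaviour at $u=0$ and $t=0$. If one adopts $\gamma_{b,r}(0)=0$, then $\alpha(t+0)=rt+b$ for $t>0$ while $rt+\alpha(0)=rt$. The term $u=0$ therefore contributes $rt+b$ on the left, whereas the right-hand side contains $\sup_u\{\alpha(u)-\beta(u)\}$ with the $u=0$ term equal to $0$. This causes no discrepancy, because $vdev(\alpha,\beta)\geq b+Dr\geq b$ by Lemma~\ref{lemma:vdev_simple_ac_plcs_sc_equals_vdev_simple_ac_individual_plcs_parts}: the supremum at $u=D$ already dominates, since $\beta$ vanishes on $[0,D]$. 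Hence both sides coincide for $t>0$. For $t=0$ the identity reduces to $\alpha\oslash\beta(0)=vdev(\alpha,\beta)$, which Theorem~\ref{thm:Basic_Bounds} states explicitly.

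Finally, we substitute the explicit value from Lemma~\ref{lemma:vdev_simple_ac_plcs_sc_equals_vdev_simple_ac_individual_plcs_parts} to obtain the concrete form $rt+[b+Dr]\vee\bigvee_i(b-\sigma_i+(D+\tau_i)r)$. The hypothesis $r\leq\min_i\rho_i$ enters only through that lemma, namely through finiteness of the supremum and its attainment at the breakpoints $u=D$ or $u=D+\tau_i$. We therefore expect the only real obstacle to be stating the convention at $0$ consistently; the algebra itself is a one-line shift argument.
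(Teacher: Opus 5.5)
Your argument is the paper's own. You split off the $u=0$ term, pull $rt$ out of the supremum over $u>0$, and absorb the surplus term $rt+b$ using $vdev(\alpha,\beta)\ge b$, which the paper likewise takes from the backlog formula of Lemma~\ref{lemma:backlog_bound_simple_ac_plcs_n_1}.

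However, the step you single out as the only obstacle is not resolved in full generality, either by you or by the paper. Under the convention $\gamma_{b,r}(0)=0$ that you adopt, ``$\beta$ vanishes on $[0,D]$'' gives $\alpha(D)-\beta(D)=b+Dr$ only for $D>0$. For $D=0$ it gives $\alpha(0)-\beta(0)=0$.

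This is a real exception, not a presentational one. Take $n=1$, $D=0$, $\rho_1<+\infty$, $b>0$ and $\sigma_1>r\tau_1$. Then $vdev(\alpha,\beta)=0\vee(b+r\tau_1-\sigma_1)<b$, whereas $\alpha\oslash\beta(t)\ge\alpha(t)-\beta(0)=b+rt$ for $t>0$, so the stated identity fails.

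What your computation actually proves is $\alpha\oslash\beta(t)=rt+\bigl(vdev(\alpha,\beta)\vee b\bigr)$ for $t>0$. This reduces to the claim whenever $vdev(\alpha,\beta)\ge b$, in particular whenever $D>0$.
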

\begin{proof}
First of all, note that $\alpha \oslash \beta (0) = \sup\limits_{u\geq0}\{\alpha(u)-\beta(u)\}=vdev(\alpha,\beta)$. 
Now, let $t > 0$. 
Then, \\
$\alpha \oslash \beta(t) =  \sup\limits_{u\geq0}\{\alpha(t+u)-\beta(u)\} $\\
$\stackrel{\alpha=\gamma_{b,r}, t>0}{=} \sup\limits_{u\geq0}\{b+r(t+u)-\beta(u)\} $\\
$\stackrel{\beta(0)=0}{=} rt + \sup\limits_{u>0} \{\alpha(u)-\beta(u)\} \vee [b-\beta(0)] $\\
$\stackrel{\beta(0)=0}{=} rt +  \sup\limits_{u>0} \{\alpha(u)-\beta(u)\} \vee [b] $\\
$\stackrel{\alpha(0)=\beta(0)=0}{=} rt + vdev(\alpha,\beta)\vee[b] $\\
$\stackrel{(*)}{=} rt + vdev(\alpha,\beta). $ \\
$(*)$ since due to Lemma \ref{lemma:backlog_bound_simple_ac_plcs_n_1},  $vdev(\alpha,\beta)\geq b$ holds.
\end{proof}

\begin{proof}[Proof of Theorem \ref{theorem:backlog_bound_ouput_ac_simple_ac_plcs_sc}]
	We divide the proof into several steps
	\begin{itemize}
		\item $vdev(\alpha,\beta)=\bigvee\limits_{i=1}^n vdev(\alpha,\beta_i)$ for $\alpha, \beta$ as above (see Lemma \ref{lemma:vdev_simple_ac_plcs_sc_equals_vdev_simple_ac_individual_plcs_parts})
		\item $vdev(\alpha,\beta_i)=[b+Dr]\vee[b-\sigma_i+(D+\tau_i)r]$ (see Lemma \ref{lemma:backlog_bound_simple_ac_plcs_n_1})
		\item $\alpha\oslash \beta(t)=rt+vdev(\alpha,\beta)$ for $t \geq 0$ and $\alpha, \beta$ as above (see Lemma \ref{lemma:outputbound_via_vertical_dev_simple_ac_plcs_sc})
	\end{itemize}
\end{proof}

\begin{lemma}[Leftover of Minima of Service Curves] \label{lemma:left_over_minima_of_service_curves_equals_minima_of_left_over_service_curves}
%$LO(\alpha, \bigwedge\limits_{i=1}^n \beta_i, \theta) = \bigwedge\limits_{i=1}^n LO(\alpha,\beta_i, \theta)$ for any $\theta \geq 0$.
For any arrival curve $\alpha$ and service curve $\beta$, $(\bigwedge\limits_{i=1}^n \beta_i) \ominus_{\theta} \alpha = \bigwedge\limits_{i=1}^n (\beta_i \ominus_{\theta} \alpha)$ holds.
\end{lemma}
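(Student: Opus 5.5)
The plan is a pointwise verification using the definition of $\ominus_\theta$ from Theorem \ref{thm:fifo_lo}:
\[
(\beta \ominus_\theta \alpha)(t) = [\beta(t) - \alpha(t-\theta)]^+ \cdot \1{t>\theta}.
\]
I fix $\theta \geq 0$ and $t \in \Rplus$ and show that both sides agree at $t$. The key structural fact is that $\alpha(t-\theta)$ does not depend on the index $i$. So the subtraction, the positive part and the indicator should all commute with a finite minimum over $i$.

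First I dispose of the case $t \leq \theta$. Here the indicator is $0$, so the left-hand side is $0$ at $t$. Every term $\beta_i \ominus_\theta \alpha$ on the right is also $0$ at $t$, hence so is their minimum. (With the convention $0\cdot\infty=0$, this also covers any infinite values.)

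Next I treat $t > \theta$, where the indicator equals $1$. Write $c := \alpha(t-\theta)$, which is finite; if it could be infinite, both sides are handled by the same convention. Then I must show
\[
\Bigl[\min_i \beta_i(t) - c\Bigr]^+ = \min_i\,[\beta_i(t) - c]^+ .
\]
This follows in two steps:
\begin{enumerate}
\item The map $x \mapsto x - c$ is increasing, so $\min_i \beta_i(t) - c = \min_i(\beta_i(t)-c)$.
\item The map $x \mapsto [x]^+ = \max\{0,x\}$ is non-decreasing, and any non-decreasing function commutes with a finite minimum: $g(\min_i x_i) = \min_i g(x_i)$, since the minimizing index for the $x_i$ also minimizes the $g(x_i)$.
\end{enumerate}
Values $\beta_i(t)=+\infty$ (which occur for $\rho_i=+\infty$ in \curveclass) are consistent with the same monotonicity argument in the extended reals.

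Since the two sides coincide at every $t$, they are equal as functions. There is no real obstacle here; the only point needing care is the extended-real arithmetic when $\beta_i(t)=+\infty$ or at $t=0$, where the indicator forces the value $0$. I would state the convention explicitly, namely $0\cdot(+\infty)=0$ and $+\infty - c = +\infty$ for finite $c$. This is what the later proof of Theorem \ref{theorem:non_decreasing_lb_left_over_sc_with_shaped_ac_in_plcs} needs, since it reduces the leftover of an \curveclass~curve to leftovers of its $n=1$ components.
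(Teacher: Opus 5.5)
Your proof is correct and is essentially the paper's argument. Both verify the identity pointwise, dismiss $t\le\theta$ via the indicator, and for $t>\theta$ rely on the index $j$ attaining $\min_i\beta_i(t)$. The paper simply unpacks your ``non-decreasing maps commute with finite minima'' step into the two cases $\beta_j(t)\le\alpha(t-\theta)$ and $\beta_j(t)>\alpha(t-\theta)$.
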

\begin{proof}
	First assume $t>\theta$ since otherwise the claim is trivial. 
	We write $\bigwedge\limits_{i=1}^n \beta_i(t) =:\beta_j(t)$. \\
	If $\beta_j(t) \leq \alpha(t-\theta)$, then $[(\bigwedge\limits_{i=1}^n \beta_i(t))-\alpha(t-\theta)]^+=[\beta_j(t)-\alpha(t-\theta)]^+=0$. Also $\bigwedge\limits_{i=1}^n[\beta_i(t)-\alpha(t-\theta)]^+=0$ since $[\beta_j(t)-\alpha(t-\theta)]^+=0, [\beta_i-\alpha(t-\theta)]^+\geq 0$. \\
	If $\beta_j(t) > \alpha(t-\theta)$,  then $[(\bigwedge\limits_{i=1}^n \beta_i(t))-\alpha(t-\theta)]^+=\beta_j(t)-\alpha(t-\theta)= \bigwedge\limits_{i=1}^n [\beta_i(t)-\alpha(t-\theta)]^+\geq 0$.
\end{proof}

\begin{proof}[Proof of Theorem \ref{theorem:non_decreasing_lb_left_over_sc_with_shaped_ac_in_plcs}]
The main idea is as follows.
First, we consider the leftover per step $1 \leq i \leq n$ denoted by $\beta_i \ominus_{\theta} \alpha$  and model its non-decreasing lower bound $\underline{(\beta_i \ominus_{\theta} \alpha)}^\uparrow$  with at least one \curveclass~curve. 
Then, we take the minimum over all these leftover steps according Lemma \ref{lemma:left_over_minima_of_service_curves_equals_minima_of_left_over_service_curves} to obtain the overall leftover curve. \\
The leftover per step $i$ with $\rho_i < +\infty$ and its non-decreasing lower bound is computed follows: \\
Case 1 ($\theta \leq D+\tau_i - \frac{b-L}{R'-r}$): 
Let $y=((D+\tau_i)-\theta)r+b$. 
If $y \leq \sigma_i$, then the leftover for step $i$ is lower bounded by $\delta_D \otimes (((\delta_{\tau_i} \otimes I_{\sigma_i-y} \otimes \gamma_{0, \rho_i-r}) \wedge \delta_0) \wedge ((\delta_{\theta-D}\otimes I_0 \otimes \gamma_{0,+\infty})\wedge \delta_0))$ for $\theta > D$ and $\delta_D \otimes ((\delta_{\tau_i} \otimes I_{\sigma_i-y} \otimes \gamma_{0, \rho_i-r})\wedge \delta_0)$ for $\theta \leq D$. 
If $y > \sigma_i$, then we have to compute the intersection of $\alpha(t-\theta)$ and $\delta_D \otimes ((\delta_{\tau_i}\otimes I_{\sigma_i}\otimes \gamma_{0, \rho_i})\wedge \delta_0)$, so the leftover for step $i$ is lower bounded by $\delta_D \otimes ((\delta_{\tau_i + \frac{y-\sigma_i}{\rho_i -r}} \otimes I_0 \otimes \gamma_{0, \rho_i -r})\wedge \delta_0)$. %\\\\
\begin{figure}[H] 
\begin{centering}
  \includegraphics[width=0.5\textwidth]{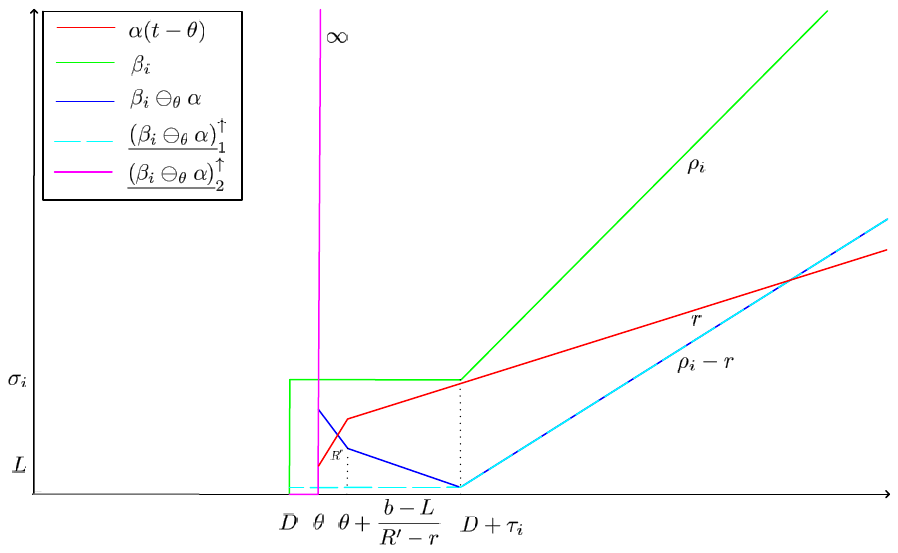}  
  \caption{Example for $\theta$ belonging to Case 1 of the leftover service curve proof.}
\end{centering}
\end{figure}
Case 2 ($\theta > D+\tau_i - \frac{b-L}{R'-r}$): 
First assume $\theta \geq D$. 
Let $y=(\theta + \frac{b-L}{R'-r}-(D+\tau_i))\cdot \rho_i+\sigma_i$. 
If $y \geq \frac{b-L}{R'-r}\cdot r + b$, then the leftover for step $i$ is  lower bounded by $\delta_D \otimes (((\delta_{\theta - D +\frac{b-L}{R'-r}} \otimes I_{y-(\frac{b-L}{R'-r}r+b)}\otimes \gamma_{0,\rho_i-r})\wedge \delta_0)\wedge ((\delta_{\theta-D}\otimes I_0 \otimes \gamma_{0,+\infty})\wedge \delta_0))$. 
If $y < \frac{b-L}{R'-r}\cdot r + b$, then the intersection of $\alpha(t-\theta)$ and $\delta_D \otimes ((\delta_{\tau_i}\otimes I_{\sigma_i}\otimes \gamma_{0, \rho_i})\wedge \delta_0$ needs to be found which will be at time $\theta + \frac{b-L}{R'-r}+\frac{\frac{b-L}{R'-r}r+b-y}{\rho_i-r}$, so the leftover for step $i$ is lower bounded by $\delta_D \otimes ((\delta_{\theta-D+\frac{b-L}{R'-r}+\frac{\frac{b-L}{R'-r}r+b-y}{\rho_i-r}}\otimes I_0 \otimes \gamma_{0, \rho_i-r})\wedge \delta_0)$. \\
Now, we assume $\theta \leq D$. 
Let  $y=(\theta+\frac{b-L}{R'-r}-(D+\tau_i))\cdot \rho_i + \sigma_i$. 
If $y \geq \frac{b-L}{R'-r}r+b$, then the leftover for step $i$ is lower bounded by $\delta_D \otimes ((\delta_{\theta+\frac{b-L}{R'-r}-D}\otimes I_{y-(\frac{b-L}{R'-r}r+b)} \otimes \gamma_{0, \rho_i -r})\wedge \delta_0)$. 
If $y < \frac{b-L}{R'-r}r+b$, then this is the same as in the case $\theta \geq D$. %\\\\
\begin{figure}[H] 
\begin{centering}
  \includegraphics[width=0.5\textwidth]{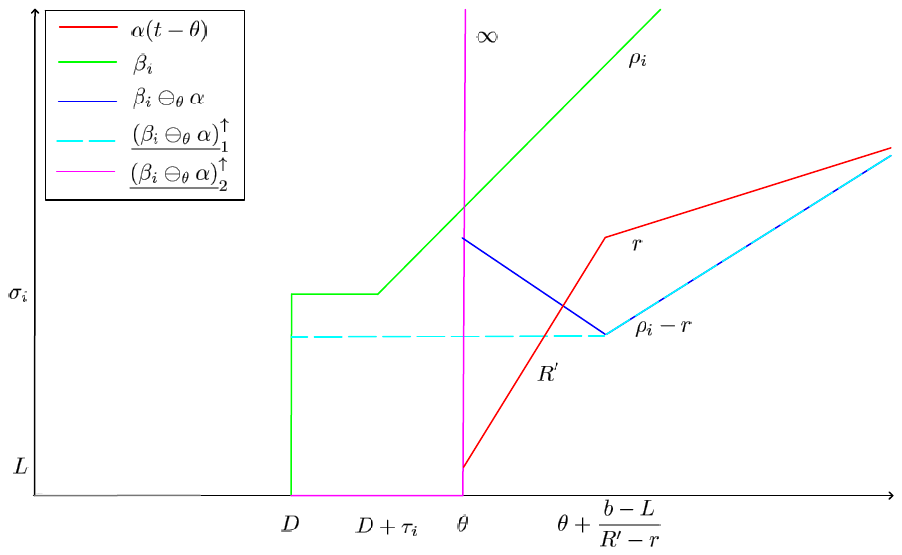}  
  \caption{Example for $\theta$ belonging to Case 2 of the leftover service curve proof.}
\end{centering}
\end{figure}
Lastly, we consider $\rho_i = +\infty$.
If $\theta \leq D$, the leftover for step i is lower bounded by $\delta_{D} \otimes ((\delta_{\tau_i} \otimes I_0 \otimes \gamma_{0,+\infty}) \wedge \delta_0 )$.
If $D \leq \theta \leq D+\tau_i$, the leftover for step i is lower bounded by $\delta_{D} \otimes (((\delta_{\tau_i} \otimes I_0 \otimes \gamma_{0,+\infty}) \wedge \delta_0 ) \wedge ((\delta_{\theta - D} \otimes I_0 \otimes \gamma_{0,+\infty}) \wedge \delta_0 ))$.
If $\theta > D+\tau_i$, the leftover for step i is lower bounded by $\delta_{D} \otimes ((\delta_{\theta - D} \otimes I_0 \otimes \gamma_{0,+\infty}) \wedge \delta_0 ) $.
\\

Hence, the leftover for step $i$ is lower bounded by $\delta_D \otimes (\bigwedge\limits_{j=1}^{\text{size}(i)}((\delta_{\tau_j'} \otimes I_{\sigma_j'} \otimes \gamma_{0, \rho_j'}) \wedge \delta_0))$ with $\text{size}(i) \in \{1,2\}$ depending on the specific case. 
Hence, the lower bound non-decreasing leftover service curve has the form $ \delta_D \otimes (\bigwedge\limits_{i=1}^n \bigwedge\limits_{j=1}^{\text{size}(i)} ((\delta_{\tau_j'} \otimes I_{\sigma_j'} \otimes \gamma_{0, \rho_j'})\wedge \delta_0)) \in$ \curveclass~due to $(f\otimes g)\wedge h = (f\wedge g)\otimes h$ (see \cite{le2001network}) which holds for any functions $f,g,h \in \mathcal{F}$.
\end{proof}

\begin{proof}[Proof of Corollary \ref{cor:symbolic_lb_left_over_sc}]
We first start by arguing that only $\theta \geq D$ is relevant as lower values of $\theta$ yield a smaller leftover service curve.
Indeed, for such values it is clear from the proof of Theorem \ref{theorem:non_decreasing_lb_left_over_sc_with_shaped_ac_in_plcs} that $\theta \uparrow$ yields $y \downarrow$ for Case 1 and $y \uparrow$ for Case 2, so a larger leftover service for the step $i$ at least until $\theta = D$ is reached.
Hence, it is sufficient for the analysis to consider $\theta \geq D$  only.
Next, we can rewrite the non-decreasing lower bound leftover per step $i$ in the proof of Theorem \ref{theorem:non_decreasing_lb_left_over_sc_with_shaped_ac_in_plcs} in a more compact way 
%and note that
%	\begin{itemize}
%		\item for the first (finite) part: In case of $y-\sigma_i \geq 0$, we have to show that $\theta-D \leq \tau_i + \frac{y-\sigma_i}{\rho_i-r}$, but we know $\theta \leq D + \tau_i - \frac{b-L}{R'-r}$. Hence, $\theta -D \leq \tau_i - \frac{b-L}{R'-r} \leq \tau_i + \frac{y-\sigma_i}{\rho_i-r}$.
%		\item for the second (finite) part: In case of $\frac{b-L}{R'-r}r+b-y>0$, note that $\theta -D + \frac{b-L}{R'-r}+\frac{[\frac{b-L}{R'-r}r+b-y]^+}{\rho_i-r} \geq \theta -D$ holds.
%	\end{itemize}
which completes the proof.
\end{proof}

}% End of Appendix

\bibliographystyle{IEEEtran}
\bibliography{literature_journal.bib}

%\bf{If you include a photo:}\vspace{-33pt}
%\begin{IEEEbiography}[{\includegraphics[width=1in,height=1.25in,clip,keepaspectratio]{fig1}}]{Michael Shell}
%Use $\backslash${\tt{begin\{IEEEbiography\}}} and then for the 1st argument use $\backslash${\tt{includegraphics}} to declare and link the author photo.
%Use the author name as the 3rd argument followed by the biography text.
%\end{IEEEbiography}
\vspace{-33pt}
\begin{IEEEbiographynophoto}{Alexander Scheffler}
received the MSc degree in computer science from TU (now RPTU) Kaiserslautern, Germany, in 2020. 
From 2020 to 2023, he worked as a research associate at the Faculty of Computer Science, Ruhr University Bochum, Germany.
In 2022, he was a visiting researcher at Huawei in France.
Since 2024, he has been working as an embedded software engineer in the semiconductor industry.
His primary field of research is the Network Calculus methodology.
\end{IEEEbiographynophoto}

%\section*{What to fix}
%1 $\gamma_{r,b}$ and  $\gamma_{b,r}$ mixed in the paper.Adapt the intro and related work to save work. \\ => DONE
%2 $\gamma_{b,r}$ and  $\gamma_{\sigma,r}$ mixed in the paper. \\
%3 Consistent naming: service curve, leftover, arrival curve, delay and backlog, output bound? \\
%4 Titles of sections (capitel letters)
%5 Figure placement \\
%6 Figure size / aspect ratio \\
%7 Maybe remove some figures or merge similar ones.\\
%8 No PLCS mentioned anymore? \\
%9 $(f\otimes h)\wedge h = (f\wedge g)\otimes h$ several times used, right? Put into a lemma and refer to it. \\ => Not really but similar. Added reference to nc book to both of them.
%10 Check that all theorems need rate inifity treatment. \\
%11 Check that $\delta_0$ not forgotten. \\
%12 Check brackets. \\
%13 piece should be renamed to step. \\
%14 Input shaping -> shaping at endpoint? to not have different wordings \\
%15 foi = flow of interest \\
%16 Connecting sentences all there? \\
%17 grammar check. \\
%18 1x proof read for flow and correctness.

\end{document}